\documentclass[aps,pra,notitlepage,superscriptaddress,10pt]{revtex4-2}

\usepackage{aliascnt,amsmath,amssymb,amsthm,booktabs,colortbl,diagbox,dsfont,hyperref,lmodern,mathtools,multirow,xcolor}
\usepackage[UKenglish]{babel}
\usepackage[capitalize,nameinlink]{cleveref}
\usepackage[T1]{fontenc}
\usepackage[scaled=0.8]{FiraMono}
\usepackage[caption=false]{subfig}

\definecolor{darkred}{rgb}{0.5,0,0}
\definecolor{darkgreen}{rgb}{0,0.5,0}
\definecolor{darkblue}{rgb}{0,0,0.5}
\hypersetup{colorlinks,breaklinks,linkcolor=darkred,citecolor=darkgreen,urlcolor=darkblue}

\DeclareMathOperator{\Tr}{Tr}

\newtheorem{theorem}{Theorem}
\newaliascnt{proposition}{theorem}
\newtheorem{proposition}[proposition]{Proposition}
\aliascntresetthe{proposition}
\newaliascnt{lemma}{theorem}
\newtheorem{lemma}[lemma]{Lemma}
\aliascntresetthe{lemma}
\newaliascnt{corollary}{theorem}

\aliascntresetthe{corollary}
\theoremstyle{definition}
\newaliascnt{definition}{theorem}
\newtheorem{definition}[definition]{Definition}
\aliascntresetthe{definition}
\newaliascnt{conjecture}{theorem}
\newtheorem{conjecture}[conjecture]{Conjecture}
\aliascntresetthe{conjecture}
\Crefname{proposition}{Proposition}{Propositions}
\Crefname{lemma}{Lemma}{Lemmata}
\Crefname{corollary}{Corollary}{Corollaries}
\Crefname{definition}{Definition}{Definitions}
\Crefname{conjecture}{Conjecture}{Conjectures}
\Crefname{appendix}{Appendix}{Appendices}

\renewcommand{\geq}{\geqslant}
\renewcommand{\leq}{\leqslant}

\newcommand{\algrel}[2]{\mathcal{R}_{#1,#2}}
\newcommand{\abs}[1]{\lvert#1\rvert}
\newcommand{\bC}{\mathbb C}
\newcommand{\cc}{\cellcolor[gray]{0.9}}
\newcommand{\cd}{\cellcolor[gray]{0.8}}
\newcommand{\ce}{\cellcolor[gray]{0.7}}
\newcommand{\cf}{\cellcolor[gray]{0.6}}
\newcommand{\cH}{\mathcal H}
\newcommand{\cK}{\mathcal K}
\newcommand{\fS}{\mathfrak S}
\newcommand{\id}{\mathds{1}}
\newcommand{\mg}{\mathrm{g}}
\newcommand{\stackcell}[2]{\genfrac{}{}{0pt}{}{#1}{#2}}
\newcommand{\vj}{\vec{\jmath}}

\makeatletter
\def\CT@@do@color{%
 \global\let\CT@do@color\relax
 \@tempdima\wd\z@
 \advance\@tempdima\@tempdimb
 \advance\@tempdima\@tempdimc
 \advance\@tempdimb0.9\tabcolsep
 \advance\@tempdimc\tabcolsep
 \advance\@tempdima2\tabcolsep
 \kern-\@tempdimb
 \leaders\vrule
 \hskip\@tempdima\@plus 1fill
 \kern-\@tempdimc
\hskip-\wd\z@ \@plus -1fill }
\makeatother

\begin{document}

\author{Sébastien Designolle}
\email{sebastien.designolle@inria.fr}
\affiliation{Inria, ENS de Lyon, UCBL, LIP, 69342, Lyon Cedex 07, France}

\author{Máté Farkas}
\email{mate.farkas@york.ac.uk}
\affiliation{Department of Mathematics, University of York, York, YO10 5GH, United Kingdom}

\title{\texorpdfstring{$k$}{k}-fold unbiased measurements and maximal incompatibility}
\date{17th September 2026}

\begin{abstract}
 Mutually unbiased bases capture perfect complementarity between two quantum measurements.
 Extensions beyond pairwise unbiasedness have been proposed, but essentially no non-trivial higher-order constructions are known.
 We introduce $k$-fold unbiased measurements ($k$-UMs), extending the $k$-fold unbiased bases notion of~[\href{https://arxiv.org/abs/1706.04446}{\texttt{arXiv:1706.04446}}] from rank-one basis measurements to arbitrary-rank projective measurements, and show that this higher-rank setting supports a much richer theory.
 We develop the notions of algebraic and spectral $k$-UMs and prove that they coincide for rank-one measurements and triples of measurements (3-UMs).
 We establish strong no-go results for higher-order rank-one constructions and three-outcome 3-UMs, but obtain infinitely many higher-rank triples using Hadamard matrices and Clifford algebras.
 We then give these structures an exact operational interpretation in terms of measurement incompatibility.
 For 3-UMs with any number of outcomes, we determine their generalised incompatibility robustness exactly and construct an explicit joint measurement for their noisy versions at the compatibility threshold.
 Finally, we implement the symmetry reduction of the sum-of-squares hierarchy recently introduced in~[\href{https://doi.org/10.1088/1367-2630/ae72ba}{\textit{New J.~Phys.}~{\bf 28}, 064509 (2026)}] and give numerical evidence that the noise thresholds arising from the $k$-UM analysis may characterise the asymptotic behaviour of this hierarchy.
 In particular, with very high precision, we numerically show that our constructed four-outcome 3-UMs are among the most incompatible triples of four-outcome measurements.
\end{abstract}

\maketitle

\section{Introduction}

The impossibility of jointly realising arbitrary measurements is one of the basic departures of quantum theory from classical physics.
This \emph{incompatibility} of measurements underlies fundamental quantum phenomena such as Bell non-locality and Einstein--Podolsky--Rosen steering, and every incompatible tuple of measurements offers an advantage in a suitable state-discrimination task~\cite{HMZ16,GHK+23,BCP+14,UCNG20,QVB14,UMG14,UBGP15,SSC19,CHT19,UKS+19,OB19}.
The quantitative study of measurement incompatibility is therefore both foundational and operational.
In particular, the equivalence between joint measurability and unsteerability turns noise thresholds for measurements into one-sided device-independent thresholds for steering experiments~\cite{QVB14,UMG14,UBGP15,CS16,DSR+21}.

Mutually unbiased bases (MUBs) provide one of the canonical examples of incompatible measurements~\cite{Sch60,Iva81,WF89}.
Two orthonormal bases $\{|e_a\rangle\}_{a=1}^n$ and $\{|f_b\rangle\}_{b=1}^n$ on $\mathbb{C}^n$ are mutually unbiased when $\abs{\langle e_a|f_b\rangle}^2=1/n$ for every $a,b$.
MUBs have long played a central role in tomography, uncertainty relations, cryptography, and finite-dimensional phase-space constructions~\cite{WF89,DEBZ10,MW26}.
Their associated rank-one projective measurements are also extremal for several incompatibility questions.
For pairs, the most incompatible measurements are now well understood both when the Hilbert-space dimension is fixed~\cite{BHSS13,DFK19} and when the number of outcomes is fixed~\cite{TFR+21,Des26}.
The latter setting naturally leads from MUBs to mutually unbiased measurements (MUMs), whose measurement operators satisfy the same algebraic relations as those of MUBs while allowing arbitrary rank~\cite{TFR+21,FKN23}.
Unlike MUBs, arbitrarily many pairwise MUMs exist for a fixed number of outcomes~\cite{FKN23}.

The situation changes sharply when unbiasedness conditions involve more than two measurements.
One of us introduced $k$-fold unbiased bases ($k$-UBs) as a higher-order extension of the MUB condition motivated by quantum random access codes~\cite{Far17}.
The definition fixes sums of cyclic Bargmann invariants of selected basis vectors and is hereditary, so every sub-$(k-1)$-tuple of a $k$-UB is again a $(k-1)$-UB for $k\geq3$.
This rigidity is attractive, but it also makes existence of $k$-UBs exceptional, the only known example with $k\geq3$ being a 3-UB in dimension two, corresponding to three MUBs.
To overcome this difficulty in constructing $k$-UBs, we draw on the operational success of MUMs, replacing vector overlaps by algebraic relations on the measurement operators and allowing higher rank.
The first purpose of this work is to develop this extension carefully and to separate the algebraic information inherited from the $k$-UB definition from the spectral information that turns out to be relevant to incompatibility bounds.

The second purpose is to connect $k$-UMs to maximal measurement incompatibility.
For a fixed tuple of measurements $\mathsf A$, the generalised incompatibility robustness $\eta^{\mg}(\mathsf A)$ quantifies the largest fraction of the measurements that can remain after adding arbitrary noise while obtaining a compatible tuple~\cite{DFK19}.
This means that the lower $\eta^{\mg}(\mathsf A)$ is, the more incompatible is the tuple $\mathsf A$.
The universal optimisation $\chi_k^{\mg}(n)$ is then defined as the infimum of $\eta^{\mg}$ over all $k$-tuples of $n$-outcome measurements in arbitrary finite dimension.

Our results can be summarised as follows.
First, we introduce algebraic and spectral $k$-fold unbiased measurements ($k$-UMs).
The definition of algebraic $k$-UMs is the operator-valued lift of the $k$-UB condition~\cite{Far17}; it is hereditary down to pairs and thus implies pairwise MUM relations.
Spectral $k$-UMs are defined in terms of the universal polynomial $\mu_{k,n}(t)=(1-n^{-1}\mathrm d/\mathrm dt)^kt^n$.
For rank-$r$ $n$-outcome projective measurements, spectral $k$-UMs are precisely those for which the characteristic polynomial of every sum of $k$ projections selected from $k$ different measurements is $\mu_{k,n}(t)^r$.
We show that at rank one, algebraic and spectral $k$-UMs agree.
At arbitrary rank, algebraic and spectral $k$-UMs agree for pairs, triples, and dichotomic $k$-tuples.
No general equivalence is proved for $k\geq4$ and $n>2$.

Second, we obtain $k$-UM non-existence and construction results.
There are no rank-one algebraic or spectral $k$-UMs for $k\geq3$, except for $(k,n)=(3,2)$.
Moreover, no finite-dimensional 3-UM with three outcomes exists at any rank.
On the positive side, a real Hadamard matrix and a complex Clifford module produce a 3-UM.
The smallest non-binary instance has four outcomes, rank two, and dimension eight; it is the complexification of a quaternionic three-fold unbiased triple.

Third, we determine the incompatibility of every 3-UM.
In the dual formulation of the optimisation that computes the incompatibility robustness~\cite{DFK19}, writing $\lambda_{k,n}$ for the largest zero of $\mu_{k,n}$ gives $\eta^{\mg}\leq\lambda_{k,n}/k$ for all spectral $k$-UMs.
In the primal formulation, for $k=3$ the joint measurement using the construction from~\cite{DSFB19} reaches the same value, proving its tightness.
Consequently every 3-UM with $n$ outcomes satisfies $\eta^{\mg}=\lambda_{3,n}/3$.
For $n=4$, this gives $[1+\cos(\pi/9)]/3$ for the explicit rank-two construction.
For higher $k$, we identify conditions for the primal construction to attain the dual upper bound, but a general proof for every algebraic $k$-UM is not presently available.

Finally, we study maximal incompatibility of $k$-UMs.
The defining properties of spectral $k$-UMs make them good candidates for minimising the upper bound on the generalised robustness presented in~\cite{DFK19}.
We show this optimality for rank-one measurements and provide some generalisations for arbitrary rank in specific cases.
Unfortunately, minimising this upper bound does not by itself solve the primal question of maximal incompatibility.
To address this problem, we use the sum-of-squares hierarchy from~\cite{Des26}, strengthened by symmetry reduction~\cite{GP04,IR22}.
With high precision, the numerical universal lower bound for $(k,n)=(3,4)$ agrees with the explicit value, providing strong numerical evidence that the Hadamard--Clifford construction is maximally incompatible in this case.
We conjecture more generally that the hierarchy converges to $\lambda_{k,n}/k$ for every $k$ and $n$.

The article is organised as follows.
\cref{sec:background} recalls MUMs and incompatibility robustness.
\cref{sec:kums} introduces the notions of algebraic and spectral $k$-UMs, proves the main structural relations, states the no-go results, and presents the Hadamard--Clifford construction.
\cref{sec:fixed} determines the robustness of fixed $k$-UMs in the proved regimes, with the exact triple result as the central consequence.
\cref{sec:universal} treats global optimality of $k$-UMs and the numerical universal hierarchy.
Proofs and computational details are collected in the appendices.

\section{Background}
\label{sec:background}

\subsection{Mutually unbiased measurements}

An $n$-outcome positive operator-valued measure (POVM), or simply measurement, on a finite-dimensional Hilbert space $\cH$ is a tuple $A=(A_a)_{a=1}^n$ of positive semidefinite operators summing to the identity.
It is projective, or a projection-valued measure (PVM), when every $A_a$ is a projection.
Two measurements $A=(A_a)_a$ and $B=(B_b)_b$ are mutually unbiased measurements (MUM) when
\begin{equation}
 A_aB_bA_a=\frac1nA_a
 \qquad\text{and}\qquad
 B_bA_aB_b=\frac1nB_b
 \label{eq:mum}
\end{equation}
for all outcomes $a,b$~\cite{TFR+21,FKN23}.
An equivalent complementarity formulation appeared earlier in~\cite{TSWR18}.
We follow here the terminology introduced in~\cite{TFR+21} and developed in~\cite{FKN23}; it should not be confused with the earlier, inequivalent notion of MUMs of Kalev and Gour, defined through Hilbert--Schmidt overlaps with an efficiency parameter and not restricted to projective measurements~\cite{KG14}.
Note that MUMs must be projective: summing the first identity over $b$ gives $A_a^2=A_a$, and summing the second over $a$ gives $B_b^2=B_b$.
Taking traces in both identities then shows that every projection has the same non-zero rank.
When this rank is one, \cref{eq:mum} is exactly the MUB condition.

MUMs retain the operational complementarity of MUBs: measuring $(B_b)_b$ on an eigenstate of $A_a$ yields a uniformly random outcome~\cite{TFR+21}.
But MUMs are genuinely different from MUBs: there exist pairs of MUMs that cannot be mapped to a pair of MUBs by any completely positive unital map~\cite{TFR+21}.
Quaternionic Hadamard matrices can give rise to such generic constructions, and the number of pairwise MUMs with a fixed outcome number is unbounded~\cite{FKN23}.
This flexibility in MUM constructions is the main reason to seek $k$-fold unbiasedness at arbitrary rank.

\subsection{Generalised incompatibility robustness}

Writing $[m]\coloneqq\{1,\ldots,m\}$, a tuple of POVMs $\mathsf A=(A_{a|x})_{a\in[n],x\in[k]}$ is compatible if there is a ``parent'' POVM $(G_{\vj})_{\vj\in[n]^k}$ whose marginals reproduce every measurement, $\sum_{\vj}\delta_{j_x,a}G_{\vj}=A_{a|x}$ where $\delta$ is the Kronecker delta~\cite{HMZ16,GHK+23}.
Otherwise, the tuple is called incompatible.
The degree of incompatibility of a tuple can be quantified by the amount of ``noise'' needed to be added to the measurements before they become compatible.
Two standard noise models are depolarising robustness, in which every outcome is mixed with the effect $(\Tr{A_{a|x}})\id/d$, and generalised robustness, in which arbitrary measurement noise is allowed; we focus on the latter~\cite{HKS15,DFK19}.

The generalised incompatibility robustness is defined as the following primal semidefinite programming (SDP) instance:
\begin{equation}
 \eta^{\mg}(\mathsf A)\coloneqq
 \max\left\{\eta:\ \sum_{\vj}G_{\vj}=\id,\ G_{\vj}\succeq0,\
 \sum_{\vj}\delta_{j_x,a}G_{\vj}\succeq\eta A_{a|x}\ \text{for all }a,x\right\}.
 \label{eq:eta}
\end{equation}
Here $X\succeq Y$ means that $X-Y$ is positive semidefinite.
The inequality form is equivalent to mixing each $A_{a|x}$ with an arbitrary measurement so that the resulting tuple is compatible~\cite{DFK19}.
Smaller values correspond to greater incompatibility.
We refer to the optimisation variable $\eta$ as the visibility.

For fixed $k$ and $n$, we define the universal value
\begin{equation}
 \chi_k^{\mg}(n)\coloneqq\inf_{d\geq1}\ \min_{\mathsf A\in\mathrm{POVM}_n(\mathbb C^d)^k}\eta^{\mg}(\mathsf A),
 \label{eq:chi}
\end{equation}
where $\mathrm{POVM}_n(\mathbb C^d)$ denotes the set of $n$-outcome POVMs on $\mathbb C^d$.
By pre-processing monotonicity of the generalised robustness together with Naimark dilation, the infimum in \cref{eq:chi} may be restricted to projective measurements without changing its value~\cite{DFK19,Des26}.

Measurement incompatibility and steering are equivalent at the qualitative level~\cite{QVB14,UMG14,UBGP15} and admit quantitative robustness correspondences~\cite{CS16,DFK19}.
Consequently, an exact value of $\chi_k^{\mg}(n)$ provides a dimension-independent benchmark for multisetting steering, while a fixed-tuple value gives the noise threshold of a concrete implementation.

\section{\texorpdfstring{$k$}{k}-fold unbiased measurements}
\label{sec:kums}

\subsection{Definitions and structural relations}

We begin with the notion of algebraic $k$-UMs.
Let $\mathsf A=(A_{a|x})_{a\in[n],x\in[k]}$ be a $k$-tuple of $n$-outcome measurements.

\begin{definition}[Algebraic $k$-UM]
 For $k,n\geq2$, a tuple $\mathsf A$ is an \emph{algebraic $k$-fold unbiased measurement ($k$-UM)} if, for every fixed setting $x$ and every outcome string $\vj=(j_1,\ldots,j_k)$,
 \begin{equation}
  \sum_{(x_2,\ldots,x_k)}
  A_{j_x|x}A_{j_{x_2}|x_2}\cdots A_{j_{x_k}|x_k}A_{j_x|x}
  =\frac{(k-1)!}{n^{k-1}}A_{j_x|x},
  \label{eq:algebraic}
 \end{equation}
 where the sum runs over all $(k-1)!$ orderings of $[k]\setminus\{x\}$.
\end{definition}

The motivation behind this definition is to extend the $k$-UB definition from~\cite{Far17} to arbitrary rank.
Indeed, for rank-one projective measurements the two definitions are equivalent.
Later, \cref{thm:relations} shows that this notion is hereditary, just like the $k$-UB definition.
In particular, $k$-UMs are pairwise MUMs and therefore projective and of equal rank.

We now introduce the notion of spectral $k$-UMs in a way that first seems unrelated to algebraic $k$-UMs.
Let $\mathsf P=(P_{a|x})_{a\in[n],x\in[k]}$ be a $k$-tuple of $n$-outcome projective measurements on $\cH$ with common rank $r$, so $\dim\cH=nr$.
For $\vj=(j_1,\ldots,j_k)$, define the \emph{selected sum} $S_{\vj}\coloneqq\sum_{x=1}^kP_{j_x|x}$.
For any such tuple of rank-one projective measurements ($r=1$), we show in \cref{eq:average-char} that averaging the characteristic polynomials of all $n^k$ selected sums yields the polynomial
\begin{equation}
 \mu_{k,n}(t)
 \coloneqq\sum_{\ell=0}^{\min\{k,n\}}\frac{(-1)^{\ell}\ell!}{n^\ell}\binom{k}{\ell}\binom{n}{\ell}t^{n-\ell}
 =\left(1-\frac1n\frac{\mathrm d}{\mathrm dt}\right)^kt^n.
 \label{eq:mu}
\end{equation}
For pairs and triples of measurements, we have the explicit expressions
\begin{equation}
 \mu_{2,n}(t)=t^{n-2}\left(t^2-2t+\frac{n-1}{n}\right)
 \qquad\text{and}\qquad
 \mu_{3,n}(t)=t^{n-3}\left(t^3-3t^2+\frac{3(n-1)}nt-\frac{(n-1)(n-2)}{n^2}\right).
 \label{eq:mu3}
\end{equation}
For a pair of MUMs with rank $r$, every selected sum has non-zero eigenvalues $1\pm1/\sqrt{n}$ and characteristic polynomial $\mu_{2,n}^r$; see \cref{eq:mub-pair-char}.
This motivates the following definition, which fixes the spectrum of each selected sum.

\begin{definition}[Spectral $k$-UM]

 For $k,n\geq2$, a projective tuple $\mathsf P$ with common outcome rank $r$ is a \emph{spectral $k$-UM} if, for every outcome string $\vj$,
 \begin{equation}
  \det(t\id-S_{\vj})=\mu_{k,n}(t)^r.
  \label{eq:spectral}
 \end{equation}
\end{definition}

Denoting by $\mu'_{\ell,n}$ the derivative of $\mu_{\ell,n}$, the recursion $\mu_{\ell+1,n}=\mu_{\ell,n}-\mu'_{\ell,n}/n$ gives a proof that all zeros of $\mu_{k,n}$ are real and non-negative, with a zero of multiplicity $n-k$ at the origin when $k<n$ and all positive zeros simple; the details are given in \cref{app:spectral}.
We write $\lambda_{k,n}$ for the largest zero.
The coefficient expansion in \cref{eq:mu} also uncovers an elegant symmetry between $k$ and $n$: for $n\geq k$, direct comparison of coefficients gives $n^k\mu_{k,n}(t)=k^kt^{n-k}\mu_{n,k}\left(nt/k\right)$.
The positive roots are therefore mapped by $t\mapsto nt/k$, and in particular
\begin{equation}
 \frac{\lambda_{k,n}}k=\frac{\lambda_{n,k}}n.
\end{equation}

\begin{theorem}
 \label{thm:relations}
 The following statements hold.
 \begin{enumerate}
  \item\label{itm:relations-heredity} Every algebraic $k$-UM is a projective tuple with equal outcome rank and is hereditary: every sub-$\ell$-tuple ($\ell\geq2$) is an algebraic $\ell$-UM.
  In particular, every algebraic $k$-UM is pairwise MUM.
  \item\label{itm:relations-rank-one} At rank one, algebraic and spectral $k$-UMs coincide.
  \item\label{itm:relations-pairs} For $k=2$ at arbitrary rank, algebraic and spectral 2-UMs coincide with pairs of MUMs.
  \item\label{itm:relations-binary} For $n=2$ at arbitrary rank, algebraic and spectral $k$-UMs coincide with the projective tuples generated by pairwise anticommuting Hermitian unitary observables.
  \item\label{itm:relations-triples} For $k=3$ at arbitrary rank, algebraic and spectral 3-UMs coincide.
 \end{enumerate}
\end{theorem}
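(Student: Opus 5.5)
The plan is to prove the five items in order, with item 1 carrying most of the algebra and the later items reducing to comparing power sums (equivalently, characteristic polynomials) of selected sums.

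\textbf{Item 1 (heredity).} First I show that every algebraic $k$-UM consists of projections. In \cref{eq:algebraic}, fix $x$ and $j_x$ and sum over all $j_y$ with $y\neq x$. Each middle factor then sums to $\id$, so the left-hand side becomes $(k-1)!\,A_{j_x|x}^2$ while the right-hand side becomes $n^{k-1}\cdot\frac{(k-1)!}{n^{k-1}}A_{j_x|x}$. This gives $A_{j_x|x}^2=A_{j_x|x}$.

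Heredity then follows from the same marginalisation, applied to a single setting $y\neq x$. Fix $y$ and sum \cref{eq:algebraic} over $j_y$ alone. In each ordering, the factor $A_{j_y|y}$ becomes $\id$ and drops out. The $(k-1)!$ orderings of $[k]\setminus\{x\}$ map $(k-1)$-to-one onto the $(k-2)!$ orderings of $[k]\setminus\{x,y\}$. The left side is therefore $(k-1)$ times the $(k-1)$-fold sum, and the right side is $n\cdot\frac{(k-1)!}{n^{k-1}}A_{j_x|x}$. Dividing by $k-1$ gives exactly the $(k-1)$-UM identity. Iterating down to $\ell=2$ yields the MUM relations \cref{eq:mum}, and equal ranks follow from the remark after \cref{eq:mum}.

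\textbf{Items 2--5 (algebraic versus spectral).} The strategy is to compare the power traces $\Tr S_{\vj}^m$ for $m\le nr$, which determine the characteristic polynomial through Newton's identities.

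Expanding $S_{\vj}^m$ gives words in the projections $P_{j_x|x}$. Using idempotency and cyclicity of the trace, each word reduces to a cyclic word with no repeated adjacent letters. The algebraic relations evaluate the traces of such words. Conversely, spectral data only fix the symmetric combinations that appear in $\Tr S_{\vj}^m$.

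At rank one, every word's trace is a product of Bargmann invariants. The average identity \cref{eq:average-char} suggests the following argument. The algebraic relations fix the symmetrised cyclic Bargmann sums of all lengths, hence all $\Tr S_{\vj}^m$, hence the characteristic polynomial $\mu_{k,n}$. For the converse, the spectral condition fixes $\Tr S_{\vj}^m$ for $m\le k$. I expect these to pin down exactly the symmetrised invariants that appear in the $k$-UB definition of~\cite{Far17}, which is equivalent to the algebraic definition at rank one. This needs an induction on subsets of settings, since sub-selected sums also appear; they can be recovered by marginalising the characteristic polynomial.

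For $k=2$, the MUM relations give $(P_aP_bP_a)=P_a/n$. Hence $S^2-S=P_aP_b+P_bP_a$, and $(P_aP_b+P_bP_a)^2=\frac1n(P_a+P_b)+\ldots$. This yields the quadratic relation whose roots are $1\pm1/\sqrt n$, each with multiplicity $r$. Conversely, if $S$ has eigenvalues $1\pm n^{-1/2}$ with multiplicity $r$ and $0$ elsewhere, then Jordan's lemma for two projections forces every principal angle $\theta$ to satisfy $\cos^2\theta=1/n$. This is exactly $P_aP_bP_a=P_a/n$.

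For $n=2$, write $P_{\pm|x}=(\id\pm O_x)/2$. The $2$-UM condition is anticommutation of the $O_x$. Heredity, together with the fact that products of pairwise anticommuting observables are determined by the Clifford relations, gives the spectrum. For the converse, the spectrum of $S=\frac12(k\id+\sum\pm O_x)$ over all sign choices determines $\Tr(\sum\pm O_x)^2$ and higher moments. This forces $\Tr(O_xO_y+O_yO_x)^2=0$, hence anticommutation.

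For $k=3$, I again use the power traces. The spectral condition fixes $\Tr S^m$, and the relevant invariants are $\Tr(P_aP_bP_c)+\Tr(P_aP_cP_b)$ together with positivity-type quantities such as $\Tr\bigl((P_aP_bP_a-P_a/n)^2\bigr)\ge0$. The aim is to write the spectral equality as a sum of such non-negative terms vanishing. That gives first the pairwise MUM relations, then the triple relation $P_a(P_bP_c+P_cP_b)P_a=\frac2{n^2}P_a$.

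\textbf{Main obstacle.} I expect the converse directions to be the hardest: showing that spectral $k$-UMs are algebraic, especially for $k=3$ at arbitrary rank. Trace identities only give scalar equalities, whereas operator equalities are needed. To get them I would express the relevant deviation operators, such as $P_a(P_bP_c+P_cP_b)P_a-\frac2{n^2}P_a$, as quantities whose Hilbert--Schmidt norms are combinations of the fixed power traces. The spectral condition then forces these norms to vanish.
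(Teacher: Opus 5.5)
Item 1 is correct and is the paper's argument: sum out one setting, and each ordering of the rest appears $k-1$ times. Your Jordan-lemma proof of item 3 is a valid variant of the paper's singular-value argument for $V_P^*V_Q$; you only need to add the trace count that fixes both multiplicities to $r$.

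\textbf{The gap in items 2 and 5, forward direction.} The power-trace strategy is not carried through, and it rests on a false claim. The algebraic relations do \emph{not} evaluate the traces of reduced words; they only fix sums over all cyclic orderings. For the three Pauli bases, $\Tr(P_{j_1|1}P_{j_2|2}P_{j_3|3})=(1+\mathrm i j_1j_2j_3)/4$ varies with $\vj$. At any rank, only $\Tr(P_aQ_bR_c)+\Tr(P_aR_cQ_b)=2r/n^2$ is fixed. Moreover, $\Tr S_{\vj}^m$ contains words with repeated settings whose traces are products of unsymmetrised invariants with overlapping supports. For example, at rank one $\Tr(P_1P_2P_3P_1P_2P_4)=\langle v_1|v_2\rangle^2\langle v_2|v_3\rangle\langle v_3|v_1\rangle\langle v_2|v_4\rangle\langle v_4|v_1\rangle$. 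So your step ``hence all $\Tr S_{\vj}^m$'' is unjustified.

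At rank one, the missing idea is to use elementary symmetric functions rather than power sums. By Cauchy--Binet, the coefficients of $\det(t\id-S_{\vj})$ are sums of principal Gram minors $D_J$. Grouping the Leibniz expansion of $D_J$ by cycle supports produces only products of full cycle sums $\mathcal C_K$ over \emph{disjoint} blocks. Each of these is fixed by heredity, which gives $D_J=\gamma_{\ell,n}=n!/((n-\ell)!\,n^\ell)$.

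At $k=3$ and rank $r$, you would need power traces up to order $3r$, i.e.\ arbitrarily long words, and you give no reduction for them. The missing idea is the paper's block-Gram normal form. Choose isometries with $V_P^*V_Q=V_P^*V_R=\id/\sqrt n$ and $V_Q^*V_R=\Omega/\sqrt n$. The 3-UM relations then amount to $\Omega+\Omega^*=2\id/\sqrt n$. Diagonalising $\Omega$ splits $V^*V$ into $r$ scalar $3\times3$ blocks, each carrying the cubic factor of $\mu_{3,n}$.

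\textbf{The converses are only announced.} At rank one, the spectral condition gives, via Cauchy--Binet, only $\sum_{\abs{J}=\ell}D_J(\vj_J)$. You must separate the individual subsets before running the cycle-support induction from $D_J$ to $\mathcal C_J=(\ell-1)!/n^{\ell-1}$. Your marginalisation hint can be made exact with the matrix determinant lemma: $\frac1n\sum_{j_y}\det(t\id-S'-P_{j_y|y})=\bigl(1-\frac1n\frac{\mathrm d}{\mathrm dt}\bigr)\det(t\id-S')$. Since $1-\frac1n\frac{\mathrm d}{\mathrm dt}$ is injective on polynomials, every sub-selected sum has characteristic polynomial $\mu_{\abs{J},n}$. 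You can then read off $D_J$ from its $t^{n-\abs{J}}$ coefficient. The paper instead proves that the centred minors $D_J-\gamma_{\ell,n}$ are mutually orthogonal, using $\sum_{j_x}D_J=(n-\ell+1)D_{J\setminus\{x\}}$.

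For $k=3$, your positivity plan must average over outcomes, since the spectrum of a single $P_a+Q_b+R_c$ cannot force $P_aQ_bP_a=P_a/n$. With averaging it works:
\begin{itemize}
\item The mean of $\Tr S_{abc}^4$ over $(a,b,c)$ equals $r(3+36/n+42/n^2)$, which is $r$ times the fourth power sum of the roots of $\mu_{3,n}$.
\item To this is added $\frac2{n^2}\sum_{a,b}\|P_aQ_bP_a-P_a/n\|_2^2$ plus the analogous terms for $(P,R)$ and $(Q,R)$, so the spectral condition forces pairwise MUM.
\item In the block-Gram form, each block then has characteristic polynomial $h(t)-n^{-3/2}x_j$, where $h(t)=(t-1)^3-3(t-1)/n$ and $x_j$ are the eigenvalues of $\Omega+\Omega^*$.
\item The identity $\prod_j\bigl(h(t)-n^{-3/2}x_j\bigr)=\bigl(h(t)-2/n^2\bigr)^r$ then forces $\Omega+\Omega^*=2\id/\sqrt n$.
\end{itemize}
This route would be more elementary than the paper's, which obtains the converse from its triple-rigidity theorem via the harmonic-mean reduction.

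\textbf{Item 4.} You never show that pairwise anticommutation implies the algebraic relations, so spectral $\Rightarrow$ algebraic is not closed there. The paper sums the products over all orderings, where every term with at least two distinct $O_y$ cancels, and uses $P_{j_x|x}O_yP_{j_x|x}=0$. The equal multiplicities of $\pm\sqrt k$ also need $\sum_xj_xO_x$ to be traceless.
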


The proof is given in \cref{app:relations}.
In the following, whenever algebraic and spectral $k$-UMs coincide, we simply refer to them as $k$-UMs.

For later use, we explicitly write the defining relation of algebraic 3-UMs: for three measurements $P=(P_a)_a$, $Q=(Q_b)_b$, and $R=(R_c)_c$, \cref{eq:algebraic} is equivalent to
\begin{equation}
 P_a(Q_bR_c+R_cQ_b)P_a=\frac{2}{n^2}P_a,
 \qquad
 Q_b(P_aR_c+R_cP_a)Q_b=\frac{2}{n^2}Q_b,
 \qquad\text{and}\qquad
 R_c(P_aQ_b+Q_bP_a)R_c=\frac{2}{n^2}R_c
 \label{eq:algebraic-three}
\end{equation}
for all outcomes $a,b,c$.
Together with pairwise mutual unbiasedness, these identities fix the relative overlaps of the three selected outcome subspaces sufficiently to determine the spectrum of every sum $P_a+Q_b+R_c$; see \cref{app:relations}.

Although we restrict attention to the case where the number of measurements coincides with the order of unbiasedness, the notation extends naturally.
For $2\leq\ell\leq k$, we call a $k$-tuple of measurements a $(k,\ell)$-UM if every sub-$\ell$-tuple is an $\ell$-UM, specifying algebraic or spectral when relevant.
For instance, the projective measurements associated with a complete set of MUBs in dimension $d$ form a $(d+1,2)$-UM~\cite{WF89}.
When $k=\ell$, we use the shorthand $k$-UM throughout this article.

\subsection{No-go theorems}

As already noted in~\cite{Far17}, the $k$-UB notion (rank-one $k$-UM) is very rigid, and examples are difficult to come by.
Here, we fully resolve the rank-one case, confirming that the only instances are those already known.

\begin{theorem}[Rank-one no-go]
 \label{thm:rank-one-no-go}
 For $k,n\geq2$, a rank-one (algebraic or spectral) $k$-UM exists only when $k=2$ with arbitrary $n$, or when $(k,n)=(3,2)$.
\end{theorem}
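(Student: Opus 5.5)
By \cref{thm:relations}\,\ref{itm:relations-heredity} (heredity) together with \ref{itm:relations-rank-one} (rank-one algebraic and spectral $k$-UMs coincide), it suffices to prove two claims. First, that no rank-one 3-UM exists for $n\geq3$; this rules out every $k\geq3$ with $n\geq3$, since any sub-triple of such a $k$-UM would be a rank-one 3-UM. Second, that no rank-one 4-UM exists for $n=2$; this rules out every $k\geq4$ with $n=2$. The case $k=2$ is the MUB case, and $(3,2)$ is realised by the three Pauli eigenbases.

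The case $n=2$ should follow from \cref{thm:relations}\,\ref{itm:relations-binary}: a rank-one $k$-UM with $n=2$ lives on $\bC^2$ and corresponds to $k$ pairwise anticommuting Hermitian unitaries there. Such operators are traceless, so they lie in the real span of the Pauli matrices, and pairwise anticommutation makes them orthogonal vectors in $\mathbb{R}^3$. Hence $k\leq3$.

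For $k=3$ and $n\geq3$, I plan to work with the spectral condition. Let the three bases be $\{e_a\}$, $\{f_b\}$, $\{g_c\}$. For a fixed outcome string, the non-zero spectrum of $P_a+Q_b+R_c$ is the spectrum of the $3\times3$ Gram matrix of $e_a,f_b,g_c$. The MUB property fixes all off-diagonal moduli to $n^{-1/2}$, so the characteristic polynomial of the Gram matrix is $s^3-3s^2+3(1-1/n)s-\det G$. Here
\begin{equation}
 \det G=1-\frac3n+2\,\mathrm{Re}\,B_{abc},
 \qquad
 B_{abc}\coloneqq\langle e_a|f_b\rangle\langle f_b|g_c\rangle\langle g_c|e_a\rangle .
\end{equation}
Matching this with \cref{eq:mu3} forces $\mathrm{Re}\,B_{abc}=1/n^2$ for all $a,b,c$. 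Since $\abs{B_{abc}}=n^{-3/2}$, this gives $B_{abc}=n^{-3/2}e^{i\sigma_{abc}\varphi}$ with $\sigma_{abc}\in\{\pm1\}$ and $\cos\varphi=n^{-1/2}$.

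Next I exploit orthonormality. Fix $a$ and two distinct outcomes $c\neq c'$, and sum over $b$. Using $\sum_b|f_b\rangle\langle f_b|=\id$ and $\langle g_c|g_{c'}\rangle=0$, one gets
\begin{equation}
 0=\sum_b\langle g_{c'}|f_b\rangle\langle f_b|g_c\rangle .
\end{equation}
Now write $\langle f_b|g_c\rangle=B_{abc}\big/\bigl(\langle e_a|f_b\rangle\langle g_c|e_a\rangle\bigr)$. In the product $\langle f_b|g_c\rangle\overline{\langle f_b|g_{c'}\rangle}$, the factor $\abs{\langle e_a|f_b\rangle}^{-2}=n$ does not depend on $b$. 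The $g$-dependent factors depend only on $c,c'$ and pull out of the sum. What remains is
\begin{equation}
 \sum_b e^{i(\sigma_{abc}-\sigma_{abc'})\varphi}=0 .
\end{equation}
Let $N_0$, $N_+$, $N_-$ count the $b$ for which the exponent is $0$, $2i\varphi$, and $-2i\varphi$ respectively. For $n\geq3$ we have $\sin2\varphi\neq0$, so the imaginary part gives $N_+=N_-\eqqcolon m$. The real part, using $\cos2\varphi=2/n-1$, gives $n-2m+2m(2/n-1)=0$, that is, $m=n^2/\bigl(4(n-1)\bigr)$. Because $\gcd(n-1,n^2)=1$, this is an integer only if $n-1$ divides $4$, i.e. $n\in\{2,3,5\}$. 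For $n=3$ and $n=5$ it gives $m=9/8$ and $m=25/16$, which are not integers. This leaves only $n=2$, where indeed $m=1$, so $n\geq3$ is excluded.

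The main obstacle is the phase bookkeeping in this last step. I must verify carefully that, for fixed $a$, all $b$-dependence other than $\sigma_{abc}$ cancels. The factor $\langle e_a|f_b\rangle$ appears together with its conjugate, and the $c$-dependent factors are constant in $b$. I also need to check that the counting equation uses exactly the three possible exponents above.
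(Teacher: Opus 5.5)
Your proposal is correct and follows essentially the paper's argument: the same heredity reduction, the same orthogonality count in $\mathbb R^3$ for $n=2$, and for $k=3$, $n\geq3$ the same mechanism, in which the real part of the triple product is pinned (you via the $3\times3$ Gram determinant and the spectral condition, the paper via the trace of the algebraic relation) and the normalised triple products become $e^{\mathrm i\sigma\varphi}$ with signs $\sigma=\pm1$. An orthogonality relation resolved through the middle basis then forces a non-integer sign correlation: your count $m=n^2/(4(n-1))$ is equivalent to $\sum_b\sigma_{abc}\sigma_{abc'}=-n/(n-1)$, the analogue of the paper's row inner product of $E^{(c)}$, except that you use $\langle g_{c'}|g_c\rangle=0$ whereas the paper uses the row orthogonality and row sums of $T^{(c)}$. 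One minor simplification: $\gcd(n-1,n^2)=1$ together with $4(n-1)\mid n^2$ already forces $n-1=1$, so the detour through $n-1\mid4$ and the separate checks at $n=3,5$ are unnecessary, though harmless.
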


The proof is given in \cref{app:nogo}.
This theorem shows why passing from bases to higher-rank measurements is not simply a technical generalisation but a necessity for non-binary triples.

The first candidate beyond the rank-one regime is therefore a triple of three-outcome measurements.
Unfortunately, the 3-UM condition remains impossible to realise even at arbitrary rank.

\begin{theorem}[Three-outcome no-go]
 \label{thm:qutrit-no-go}
 There is no algebraic or spectral 3-UM with three outcomes on a non-zero finite-dimensional Hilbert space.
 By heredity, there is no algebraic $k$-UM with three outcomes for any $k\geq3$.
\end{theorem}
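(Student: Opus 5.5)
By \cref{thm:relations}, \cref{itm:relations-triples}, algebraic and spectral 3-UMs coincide. So it suffices to rule out algebraic 3-UMs with $n=3$. The second sentence then follows from the heredity in \cref{itm:relations-heredity}: any algebraic $k$-UM with $k\geq3$ contains an algebraic 3-UM. By \cref{itm:relations-heredity}, $P,Q,R$ are pairwise MUMs of common rank $r$, so $P_aQ_bP_a=P_a/3$, $R_cQ_bR_c=R_c/3$, and so on. The plan is to work inside the block $P_a\cH$, fix $a$ and $b$, and study the three operators
\begin{equation}
 T_c\coloneqq P_aQ_bR_cP_a,\qquad c=1,2,3 .
\end{equation}

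\emph{Step 1 (linear data).} The first relation in \cref{eq:algebraic-three} gives $T_c+T_c^*=\tfrac29P_a$. Hence $T_c=\tfrac19P_a+iK_c$ with $K_c$ Hermitian and supported on $P_a\cH$. Completeness $\sum_cR_c=\id$ gives $\sum_cT_c=P_aQ_bP_a=\tfrac13P_a$, so $\sum_cK_c=0$.

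\emph{Step 2 (quadratic data).} Two applications of the MUM relations give
\begin{equation}
 T_c^*T_c=P_aR_c(Q_bP_aQ_b)R_cP_a=\tfrac13P_a(R_cQ_bR_c)P_a=\tfrac19P_aR_cP_a=\tfrac1{27}P_a ,
\end{equation}
and symmetrically $T_cT_c^*=\tfrac1{27}P_a$. Since $P_a$ acts as the identity on the block and commutes with $K_c$ there, this yields $K_c^2=\tfrac2{81}P_a$. Write $K_c=\tfrac{\sqrt2}{9}U_c$, where the $U_c$ are Hermitian unitaries on $P_a\cH$ with $U_1+U_2+U_3=0$. Squaring $U_1+U_2=-U_3$ gives $\{U_c,U_{c'}\}=-\id$ for $c\neq c'$. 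So the $U_c$ look like three unit vectors at $120^\circ$ dotted into Pauli matrices; in particular each $U_c$ has trace zero. This already forces $\operatorname{Tr}(P_aQ_bR_c)=r/9$ exactly, consistent with the third moment of $\mu_{3,3}$. The third moment alone therefore gives no contradiction, and the contradiction must come from higher-order structure.

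\emph{Step 3 (the obstruction, main difficulty).} The $U_c$ alone are realisable, for instance by $2\times2$ Pauli combinations. So I must combine this block structure with the analogous structures in the blocks $Q_b\cH$ and $R_c\cH$, and across different $b$. Two routes seem promising.
\begin{enumerate}
 \item Compute products $T_cT_{c'}$ using $R_cR_{c'}=\delta_{cc'}R_c$, for example $T_cT_{c'}=P_aQ_bR_cP_aQ_bR_{c'}P_a$. Then rewrite $P_aQ_b$ via the second and third relations in \cref{eq:algebraic-three}. The aim is an identity expressing $U_cU_{c'}$ linearly in $P_a$ and the $U$'s with coefficients dictated by $n=3$. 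Comparing this with the forced $\mathfrak{su}(2)$-type relation $U_cU_{c'}=-\tfrac12\id\pm\tfrac{\sqrt3}{2}i\,(\cdot)$ should produce an incompatible scalar.
 \item Alternatively, vary $b$: the identity $\sum_bT_c^{(b)}=P_aR_cP_a=\tfrac13P_a$ gives $\sum_bK^{(b)}_c=0$ for each $c$. This is a $3\times3$ array of operators, all squaring to $\tfrac2{81}P_a$ with prescribed anticommutators, and zero row and column sums. Together with the fourth moment $\operatorname{Tr}S_{abc}^4=r\,p_4(\mu_{3,3})$, this should give a trace or positivity identity that fails. Positivity would enter via a Cauchy--Schwarz estimate on $\sum_{b,c}\operatorname{Tr}(K^{(b)}_cK^{(b')}_{c'})$.
\end{enumerate}
I would try the second route first, since it reduces to a finite linear-algebra check on Gram data. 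I expect this step, extracting a genuinely $n=3$-specific contradiction once the "Pauli-like" local structure is granted, to be the main obstacle.
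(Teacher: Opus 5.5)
Your reduction to algebraic 3-UMs, the heredity step for $k\geq3$, and Steps~1--2 are all correct. Step~2 even settles rank one, where the $U_c$ are scalars $\pm1$ and cannot sum to zero. The gap is that Step~3 is not carried out, and Step~3 is the entire content of the theorem for rank $r\geq2$. You only name two routes that ``should'' give a contradiction, and neither looks viable as stated.

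Route~2 relies on data that is realisable. A $3\times3$ array of Hermitian unitaries with vanishing row and column sums, and anticommutator $-\id$ within each row and each column, exists on $\bC^2$: take $U^{(b)}_c=\cos\frac{2\pi(b+c)}{3}\sigma_x+\sin\frac{2\pi(b+c)}{3}\sigma_y$. Adding $\Tr S_{abc}^4$ does not help. The block-Gram computation in \cref{app:relations} shows that the whole characteristic polynomial of $P_a+Q_b+R_c$ is already fixed by two things for that single triple of projections: the pair relations and the first identity in \cref{eq:algebraic-three}. So no moment of a single selected sum can produce a contradiction.

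Route~1 has no evident mechanism either. The defining relations only reduce sandwiches $XYX$ and symmetrised sandwiches $X(YZ+ZY)X$. The middle factor $R_cP_aQ_bR_{c'}$ of $T_cT_{c'}$ is neither.

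The missing idea is to compare the relations in \emph{different} $P_a$-blocks; all your routes vary $b$ and $c$ inside a single block $P_a\cH$. The paper uses the normal form of a pair of three-outcome MUMs~\cite{FKN23}: $\cH\cong\cK\otimes\bC^3$, $P_a=\id\otimes\lvert a\rangle\langle a\rvert$, and $Q_b=\frac13\sum_{a,a'}U^{(a-a')b}\otimes\lvert a\rangle\langle a'\rvert$, with $U$ unitary and $U^2+U+\id=0$. By unbiasedness with $P$, one outcome of the third measurement can be written $R_c=\frac13\sum_{a,a'}V_aV_{a'}^*\otimes\lvert a\rangle\langle a'\rvert$ with unitaries $V_a$.

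In these coordinates your $9T_c$ equals $U^{ab}\bigl(\id+\sum_{a'\neq a}V^{(b)}_{a'}V^{(b)*}_a\bigr)U^{-ab}$ on block $a$, where $V^{(b)}_a=U^{-ab}V_a$. So Step~1 only says that a \emph{sum} of two cross terms is skew-adjoint. Set $H_{aa'}=V^{(b)}_aV^{(b)*}_{a'}+V^{(b)}_{a'}V^{(b)*}_a$. The three choices of $a$ give $H_{01}+H_{02}=H_{01}+H_{12}=H_{02}+H_{12}=0$. This is exactly where $n=3$ enters: this system forces every $H_{aa'}=0$, which would fail for $n\geq4$.

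Do this for one fixed $c$ and for $b=0$ and $b=1$. Then each $V_{aa'}=V_aV_{a'}^*$ with $a\neq a'$ is skew-adjoint, and one obtains $U^2V_{10}=V_{10}U$, $UV_{20}=V_{20}U^2$, and $UV_{21}U=U^2V_{21}U^2$. The first two give $UV_{21}=V_{21}U$, so the third becomes $V_{21}U^2=V_{21}U$. Hence $U=\id$, contradicting $U^2+U+\id=0$.

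Note that only one outcome of $R$ and two outcomes of $Q$ are needed, but all three blocks of $P$ are essential.
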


\cref{app:nogo} gives the proof of the obstruction for algebraic $k$-UMs; the equivalence in \cref{thm:relations} then rules out spectral 3-UMs as well.
Together, \cref{thm:rank-one-no-go,thm:qutrit-no-go} identify $(k,n,r)=(3,4,2)$ as the first viable non-binary target.
In \cref{subsec:Hadamard-Clifford} below, we give a realisation of this case.

To finish with no-go results, we present a packing bound which applies to both definitions.
\begin{proposition}
 \label{prop:packing}
 Every algebraic or spectral $k$-UM with $n$ outcomes of rank $r$ satisfies
 \begin{equation}
  k(n-1)\leq(nr)^2-1.
  \label{eq:packing-main}
 \end{equation}
\end{proposition}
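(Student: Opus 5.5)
Since the statement covers both algebraic and spectral $k$-UMs, I first reduce everything to one pairwise property: in either case, every pair of measurements in the tuple is a pair of MUMs. For algebraic $k$-UMs this is \cref{thm:relations}, item~\ref{itm:relations-heredity}. For spectral $k$-UMs I cannot simply invoke heredity, so I would argue from the spectrum. Set $d=nr$. From the traceless operators I extract the Hilbert--Schmidt orthogonality that the counting needs, and the bound then comes from dimension counting in the real space of traceless Hermitian operators on $\bC^{d}$, which has dimension $d^2-1=(nr)^2-1$.

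The counting runs as follows. For each measurement $x$, consider the traceless operators $P_{a|x}-\id/n$. Since the $P_{a|x}$ are orthogonal projections of equal rank summing to $\id$, these operators satisfy exactly one linear relation (their sum is zero). They therefore span a subspace $V_x$ of dimension $n-1$. Now take $x\neq y$. Unbiasedness gives $\Tr(P_{a|x}P_{b|y})=r/n$, so $\Tr[(P_{a|x}-\id/n)(P_{b|y}-\id/n)]=r/n-2r/n+r/n=0$. The subspaces $V_x$ are therefore mutually Hilbert--Schmidt orthogonal. Summing their dimensions gives $k(n-1)\leq d^2-1$, which is \cref{eq:packing-main}.

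The only point that needs care is the overlap $\Tr(P_{a|x}P_{b|y})=r/n$ for spectral $k$-UMs; this is all the argument uses, and it is the step I expect to be the main obstacle. I would get it from the trace of $S_{\vj}^2$. Using the coefficients of $\mu_{k,n}$ in \cref{eq:mu}, the $t^{n-1}$ and $t^{n-2}$ terms give the power sums $p_1=k$ and $e_2=\binom{k}{2}(n-1)/n$ per block, so that
\begin{equation}
 \Tr S_{\vj}^2 = r\left(k^2-2\binom{k}{2}\frac{n-1}{n}\right).
\end{equation}
Expanding the left-hand side directly gives $\Tr S_{\vj}^2=kr+2\sum_{x<y}\Tr(P_{j_x|x}P_{j_y|y})$. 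Comparing the two expressions shows that, for every $\vj$,
\begin{equation}
 \sum_{x<y}\Tr(P_{j_x|x}P_{j_y|y})=\binom{k}{2}\frac{r}{n}.
\end{equation}
This alone only fixes the sum over pairs, so I still need to isolate individual overlaps. Fix a pair $(x,y)$ and outcomes $a,b$, and sum the identity over all $\vj$ with $j_x=a$ and $j_y=b$, letting the other coordinates range freely. Every term involving another setting $z$ sums over that free index to $\Tr(P_{a|x})=r$ or the analogous quantity. What remains is $n^{k-2}\Tr(P_{a|x}P_{b|y})$ plus known constants. Solving gives $\Tr(P_{a|x}P_{b|y})=r/n$; this works uniformly for $k\geq2$ and is all the argument needs. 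With that in hand, the orthogonality computation above applies to spectral $k$-UMs as well, and the proposition follows.
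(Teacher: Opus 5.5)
Your proof is correct and follows the paper's argument essentially step by step: the $t^{n-1}$ and $t^{n-2}$ coefficients of $\mu_{k,n}$ fix $\Tr S_{\vj}^2$, summing over the outcomes of the other settings isolates $\Tr(P_{a|x}P_{b|y})=r/n$, and Hilbert--Schmidt orthogonality of the $(n-1)$-dimensional spans of $P_{a|x}-\id/n$ inside the $((nr)^2-1)$-dimensional real space of traceless Hermitian operators gives the bound. One small point: your opening sentence asserts that spectral $k$-UMs are pairwise MUM, which you never prove (the paper only establishes this for $k\leq3$, rank one, or $n=2$). This does not matter, because, as you note yourself, the argument only uses trace-unbiasedness, and you do derive that.
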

The complete proof is given in \cref{app:nogo} and relies on the following argument.
The defining property of spectral $k$-UMs forces pairwise trace-unbiasedness, that is, $\Tr(P_{a|x}P_{b|y})=r/n$ for $x\neq y$.
It follows that the $(n-1)$-dimensional subspaces spanned by $\{P_{a|x}-\id/n:a\in[n]\}$ are mutually orthogonal for distinct $x$.
The standard dimension-counting argument for MUBs then gives the claimed bound~\cite{WF89,DEBZ10,MW26}.
No non-binary higher-order construction saturating \cref{eq:packing-main} is presently known.

\subsection{Hadamard--Clifford construction of 3-UMs}
\label{subsec:Hadamard-Clifford}

The first viable non-binary case identified above, $(k,n,r)=(3,4,2)$, can be realised using a construction based on real Hadamard matrices and complex Clifford algebras~\cite{Hor07,LM89}.
More generally, the same construction produces 3-UMs for every outcome number $n$ for which a real Hadamard matrix exists.

A real Hadamard matrix of order $n$ is a matrix $H=(h_{ab})_{a,b}\in\{\pm1\}^{n\times n}$ satisfying $HH^{\mathsf T}=n\id_n$.
Given such a matrix and a tuple $E=(E_a)_{a\in[n]}$ of unitaries on $\bC^r$, define the operators $V_b(H,E):\bC^r\to\bC^n\otimes\bC^r$ given by, for $b\in[n]$,
\begin{equation}
 V_b(H,E)\lvert\psi\rangle
 \coloneqq\frac1{\sqrt n}\sum_{a\in[n]}h_{ab}\lvert a\rangle\otimes E_a\lvert\psi\rangle.
\end{equation}
The corresponding projections define a measurement $M(H,E)$ with elements
\begin{equation}
 M_b(H,E)\coloneqq V_b(H,E)V_b(H,E)^*.
 \label{eq:hadamard-measurement}
\end{equation}
Indeed, the Hadamard identity gives $V_b(H,E)^*V_{b'}(H,E)=\delta_{bb'}\id$, so $M(H,E)$ is an $n$-outcome rank-$r$ PVM.
When all $E_a=\id$, we simply write $V_b(H)$ and $M(H)$.

\begin{theorem}[Hadamard--Clifford construction]
 \label{thm:construction}
 Let $H=(h_{ab})$ and $\widetilde H=(\widetilde h_{ac})$ be real Hadamard matrices of order $n$.
 Let $E=(E_a)_{a\in[n]}$ be a tuple of unitaries on $\bC^r$ satisfying
 \begin{equation}
  E_a^*E_{a'}+E_{a'}^*E_a=2\delta_{aa'}\id
  \qquad\text{and}\qquad
  E_aE_{a'}^*+E_{a'}E_a^*=2\delta_{aa'}\id.
  \label{eq:clifford-unitaries}
 \end{equation}
 On $\bC^n\otimes\bC^r$, define $P=(\lvert a\rangle\langle a\rvert\otimes\id)_{a\in[n]}$, $Q=M(H)$, and $R=M(\widetilde H,E)$.
 Then $(P,Q,R)$ is a (spectral and algebraic) 3-UM with $n$ outcomes of rank $r$.
\end{theorem}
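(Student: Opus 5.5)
The plan is to verify the algebraic 3-UM relations \cref{eq:algebraic-three} by direct computation, and then obtain the spectral property from \cref{thm:relations}, item~\ref{itm:relations-triples}. Projectivity and equal rank $r$ of $P$, $Q$, $R$ are already established in the text. Pairwise mutual unbiasedness then follows from heredity (\cref{thm:relations}, item~\ref{itm:relations-heredity}), so it need not be checked separately, though it is a useful sanity check.

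Write $V_b=V_b(H)$ and $W_c=V_c(\widetilde H,E)$, so that $Q_b=V_bV_b^*$ and $R_c=W_cW_c^*$. Three elementary building blocks will be used:
\begin{equation}
 (\langle a\rvert\otimes\id)V_b=\frac{h_{ab}}{\sqrt n}\id,\qquad
 (\langle a\rvert\otimes\id)W_c=\frac{\widetilde h_{ac}}{\sqrt n}E_a,\qquad
 X_{bc}\coloneqq V_b^*W_c=\frac1n\sum_{a'}h_{a'b}\widetilde h_{a'c}E_{a'}.
\end{equation}
Each product in \cref{eq:algebraic-three} factorises through these blocks. For the first relation, $P_aQ_bR_cP_a=\lvert a\rangle\langle a\rvert\otimes\frac{h_{ab}\widetilde h_{ac}}{n}X_{bc}E_a^*$. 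Adding its adjoint reduces the problem to computing $X_{bc}E_a^*+E_aX_{bc}^*$. By the second identity in \cref{eq:clifford-unitaries}, only the term $a'=a$ survives, giving $\frac2n h_{ab}\widetilde h_{ac}\id$. Since $h_{ab}^2=\widetilde h_{ac}^2=1$, the total is $\frac{2}{n^2}P_a$.

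The second relation is handled the same way. Here $Q_bP_aR_cQ_b=V_b\bigl(\frac{h_{ab}\widetilde h_{ac}}{n}E_aX_{bc}^*\bigr)V_b^*$, and again the combination $E_aX_{bc}^*+X_{bc}E_a^*$ appears, giving $\frac2{n^2}Q_b$. For the third relation, $R_cP_aQ_bR_c=W_c\bigl(\frac{h_{ab}\widetilde h_{ac}}{n}E_a^*X_{bc}\bigr)W_c^*$. The relevant combination is now $E_a^*X_{bc}+X_{bc}^*E_a$, which is evaluated using the \emph{first} identity in \cref{eq:clifford-unitaries} and yields $\frac2{n^2}R_c$.

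For the optional MUM check, the same bookkeeping gives $P_aR_cP_a=\frac1nP_a$ from unitarity of $E_a$. For $Q_bR_cQ_b=V_bX_{bc}X_{bc}^*V_b^*$, symmetrising the double sum in $X_{bc}X_{bc}^*$ kills the cross terms by the Clifford relation and leaves $X_{bc}X_{bc}^*=\frac1n\id$. The mirrored products work similarly.

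The main obstacle is bookkeeping rather than conceptual. One must track which side each $E_a$ lands on, since this determines which of the two Clifford identities is needed in each relation. One must also confirm that the sign factors $h_{a'b}\widetilde h_{a'c}$ collapse to squares exactly at the surviving diagonal term; this works because the Clifford identities annihilate every off-diagonal term regardless of its sign. Once the three relations hold, the triple is an algebraic 3-UM, and \cref{thm:relations} item~\ref{itm:relations-triples} upgrades it to a spectral 3-UM.
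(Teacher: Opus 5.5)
Your proposal is correct and follows essentially the same route as the paper. The paper also reduces everything to $C_{bc}=V_b^*\widetilde V_c$ (your $X_{bc}$), computes $P_a(Q_bR_c+R_cQ_b)P_a$ via $C_{bc}E_a^*+E_aC_{bc}^*$, and handles the other two anticommutators by compressing with $V_b^*$ and $\widetilde V_c^*$ using the Clifford identities, with the spectral property following from \cref{thm:relations}. The only cosmetic difference is that the paper checks the pairwise MUM relations directly via $C_{bc}^*C_{bc}=C_{bc}C_{bc}^*=\id/n$, whereas you obtain them from heredity, which indeed only requires completeness.
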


The calculation is deferred to \cref{app:construction}.
Taking $E_1=\id$ and $E_a=\mathrm i\Gamma_{a-1}$ for $a=2,\ldots,n$, where the $\Gamma_j$ form a Hermitian representation of the complex Clifford algebra on $n-1$ generators, verifies \cref{eq:clifford-unitaries}~\cite{LM89}.
An irreducible complex Clifford representation on $n-1$ generators has dimension $2^{\lfloor(n-1)/2\rfloor}$, and every finite-dimensional representation is a direct sum of irreducible ones.
Thus, whenever a real Hadamard matrix of order $n$ exists, the construction works for every rank divisible by $2^{\lfloor(n-1)/2\rfloor}$.
This includes every order $n=2^s$, as $\bigl(\begin{smallmatrix}1&1\\1&-1\end{smallmatrix}\bigr)^{\otimes s}$ is a real Hadamard matrix, and therefore works for infinitely many outcome numbers, albeit with an auxiliary rank growing exponentially in $n$.
More generally, numerous constructions are known, and the published literature establishes real Hadamard matrices for every admissible (multiple-of-four) order $n<668$, leaving only $n=668,716,892,1132$ unresolved up to $1208$~\cite{CP25}.

For $n=4$, take $H=\widetilde H=H_4$, where
\begin{equation}
 H_4\coloneqq\begin{pmatrix}1&1\\1&-1\end{pmatrix}^{\otimes2}=\begin{pmatrix}1&1&1&1\\1&-1&1&-1\\1&1&-1&-1\\1&-1&-1&1\end{pmatrix}
 \quad\text{and choose}\quad
 (E_1,E_2,E_3,E_4)=(\id_2,\mathrm i\sigma_x,-\mathrm i\sigma_y,\mathrm i\sigma_z).
 \label{eq:four-outcome-3um}
\end{equation}
\cref{thm:construction} then defines three four-outcome rank-two PVMs on $\mathbb C^8$ forming a 3-UM.
These measurements are the complexification of three quaternionic mutually unbiased bases and satisfy \cref{eq:mum,eq:algebraic-three}~\cite{FKN23}.

The same quaternionic construction can be extended to four measurements such that every sub-3-tuple is a 3-UM, giving a $(4,3)$-UM with four outcomes of rank two in dimension eight; see \cref{app:construction}.

\section{Incompatibility robustness of fixed \texorpdfstring{$k$}{k}-UMs}
\label{sec:fixed}

\subsection{Selected-sum upper bound}

First introduced in~\cite{DSFB19} for the depolarising incompatibility robustness, the selected-sum upper bound was explicitly stated in~\cite[Appendix~E]{DFK19} for the generalised robustness and reads, for a $k$-tuple $\mathsf P$ of projective measurements,
\begin{equation}
 \eta^{\mg}(\mathsf P)\leq\frac1k\max_{\vj}\left\|S_{\vj}\right\|_\infty,
 \label{eq:selected-upper}
\end{equation}
where $\|\cdot\|_\infty$ denotes the operator norm.
A spectral $k$-UM $\mathsf P$ with $n$ outcomes satisfies $\|S_{\vj}\|_\infty=\lambda_{k,n}$ for every $\vj\in[n]^k$, and therefore
\begin{equation}
 \eta^{\mg}(\mathsf P)\leq\frac{\lambda_{k,n}}k.
\end{equation}
The next subsection studies whether a parent measurement reaches this upper bound.

\subsection{Maximal-eigenspace parent}

Let $\Pi_{\vj}$ be the spectral projection of $S_{\vj}$ at eigenvalue $\lambda_{k,n}$, taken to be zero when that eigenvalue is absent.
The use of these maximal eigenspaces extends the parent constructions for pairs of MUBs~\cite{ULMH16} and for larger MUB sets and other symmetric measurements~\cite{DSFB19,NDBG20}.

\begin{theorem}
 \label{thm:maximal-eigenspace-parent}
 Let $\mathsf P$ be a $k$-tuple of $n$-outcome projective measurements such that $\max_{\vj}\|S_{\vj}\|_\infty=\lambda_{k,n}$.
 If
 \begin{equation}
  \frac1{n^{k-1}}\sum_{\vj}\delta_{j_x,a}\Pi_{\vj}
  =\frac{\lambda_{k,n}}kP_{a|x}+\left(1-\frac{\lambda_{k,n}}{k}\right)\frac{\id-P_{a|x}}{n-1}
  \label{eq:conditional-projector}
 \end{equation}
 for every setting $x$ and outcome $a$, then $\eta^{\mg}(\mathsf P)=\lambda_{k,n}/k$, with optimal parent
 \begin{equation}
  G_{\vj}=\frac{\Pi_{\vj}}{n^{k-1}}.
  \label{eq:maximal-eigenspace-parent}
 \end{equation}
\end{theorem}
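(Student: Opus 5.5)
The argument has two halves. The upper bound $\eta^{\mg}(\mathsf P)\leq\lambda_{k,n}/k$ comes directly from the selected-sum bound \cref{eq:selected-upper} together with the hypothesis $\max_{\vj}\|S_{\vj}\|_\infty=\lambda_{k,n}$. The remaining task is the matching lower bound: we show that $G_{\vj}=\Pi_{\vj}/n^{k-1}$ is feasible for the primal SDP \cref{eq:eta} with visibility $\eta=\lambda_{k,n}/k$.

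Positivity of $G_{\vj}$ is immediate, since each $\Pi_{\vj}$ is a projection. For the marginal inequality, fix $x$ and $a$. By \cref{eq:conditional-projector},
\[
\sum_{\vj}\delta_{j_x,a}G_{\vj}-\frac{\lambda_{k,n}}kP_{a|x}=\left(1-\frac{\lambda_{k,n}}k\right)\frac{\id-P_{a|x}}{n-1}.
\]
The right-hand side is positive semidefinite provided $\lambda_{k,n}\leq k$. This holds because $S_{\vj}$ is a sum of $k$ projections, so $\|S_{\vj}\|_\infty\leq k$, and $\lambda_{k,n}$ equals this norm for some $\vj$. Alternatively, one can read it off from the root bounds on $\mu_{k,n}$ in \cref{app:spectral}.

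Normalisation follows by summing \cref{eq:conditional-projector} over $a$ for a fixed $x$. The left-hand side becomes $\sum_{\vj}G_{\vj}$, since each $\vj$ has exactly one value of $j_x$. On the right-hand side, $\sum_aP_{a|x}=\id$ and $\sum_a(\id-P_{a|x})=(n-1)\id$. The total is therefore
\[
\frac{\lambda_{k,n}}k\id+\left(1-\frac{\lambda_{k,n}}k\right)\id=\id,
\]
so $(G_{\vj})_{\vj}$ is a POVM. This gives $\eta^{\mg}(\mathsf P)\geq\lambda_{k,n}/k$, and combined with the upper bound we obtain equality with the stated optimal parent.

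None of these steps is difficult: the hypothesis \cref{eq:conditional-projector} carries all the content, and checking it for concrete tuples is where the real work lies. The step most likely to need care is the positivity coefficient $1-\lambda_{k,n}/k\geq0$, together with the implicit use of $n\geq2$ so that the factor $1/(n-1)$ is defined. Both hold under the standing assumptions $k,n\geq2$.
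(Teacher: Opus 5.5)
Your proposal is correct and follows essentially the same route as the paper: the upper bound comes from the selected-sum bound, normalisation comes from summing \cref{eq:conditional-projector} over $a$, and the marginal inequality comes from the same identity. Your explicit check that $1-\lambda_{k,n}/k\geq0$, via $\|S_{\vj}\|_\infty\leq k$, fills in a detail the paper leaves implicit.
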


The proof is given in \cref{app:primal}.
Spectral $k$-UMs satisfy the first hypothesis by definition, so, in order to apply \cref{thm:maximal-eigenspace-parent} to them, it remains to check in which cases \cref{eq:conditional-projector} holds.
\cref{app:primal} further explains how the maximal-eigenvalue projector is a low-degree Lagrange polynomial in $S_{\vj}$, computes the averages $n^{1-k}\sum_{\vj}\delta_{j_x,a}S_{\vj}^m$ for low $m$, and identifies regimes where \cref{eq:conditional-projector} holds.

Note that \cref{eq:conditional-projector} is stronger than the marginal constraints in the generalised-robustness SDP \cref{eq:eta}: it fixes each marginal exactly rather than requiring an operator inequality.
Interpreting the right-hand side as random or depolarising noise, \cite[Appendix~E]{DFK19} therefore implies that the random and depolarising incompatibility robustnesses coincide, with common value $(\lambda_{k,n}-k/n)/(k-k/n)$.

\subsection{Exact value for 3-UMs}
\label{subsec:3um-incompatibility}

For $k=3$, algebraic and spectral $k$-UMs coincide and \cref{app:primal} shows that \cref{eq:conditional-projector} holds in this case.
\cref{thm:maximal-eigenspace-parent} therefore implies that, for every 3-UM $\mathsf P$ with $n$ outcomes, we have
\begin{equation}
 \eta^{\mg}(\mathsf P)=\frac{\lambda_{3,n}}3,
 \qquad\text{where the explicit formula}\qquad
 \lambda_{3,n}=1+\frac2{\sqrt{n}}\cos\left(\frac13\arccos\frac1{\sqrt{n}}\right)
 \label{eq:triple-value}
\end{equation}
can be obtained with the following argument.
Using \cref{eq:mu3}, the non-zero roots of $\mu_{3,n}$ satisfy, after the shift $t\mapsto1+t$, $t^3-(3/n)t-2/n^2=0$.
The substitution $t=2\cos(\theta)/\sqrt n$ then gives $\cos(3\theta)=1/\sqrt n$, and the largest solution yields \cref{eq:triple-value}.
For the four-outcome construction from \cref{subsec:Hadamard-Clifford}, we have
\begin{equation}
 \lambda_{3,4}=1+\cos\frac{\pi}{9}
 \qquad\text{and}\qquad
 \eta^{\mg}(P,Q,R)=\frac{1+\cos\frac{\pi}{9}}{3}\approx0.64656.
\end{equation}

\section{Universal optimisation}
\label{sec:universal}

The previous section concerns the robustness of a fixed tuple.
The universal problem asks whether any $k$-tuple with the same number of outcomes can be more incompatible.
Whenever a $k$-UM $\mathsf P$ satisfies the assumption of \cref{thm:maximal-eigenspace-parent}, its robustness gives the universal upper bound:
\begin{equation}
 \chi_k^{\mg}(n)\leq\eta^{\mg}(\mathsf P)=\frac{\lambda_{k,n}}k.
 \label{eq:universal-fixed-bracket}
\end{equation}
\cref{subsec:universal-upper} below shows that the selected-sum upper bound in \cref{eq:selected-upper} is minimised by $k$-UMs in several regimes, whereas the sum-of-squares hierarchy in \cref{subsec:universal-sos} supplies universal lower bounds on $\chi_k^{\mg}(n)$.
If these two bounds coincide, one can  conclude that $k$-UMs are maximally incompatible.

\subsection{Optimality of the selected-sum upper bound}
\label{subsec:universal-upper}

Recall that, for arbitrary rank-one projective measurements in $\mathbb C^n$, $\mu_{k,n}$ is the average characteristic polynomial of all selected sums.
This has the following consequence.

\begin{theorem}
 \label{thm:rank-one-spectral}
 Every \emph{rank-one} $k$-tuple of $n$-outcome projective measurements satisfies
 \begin{equation}
 \max_{\vj}\|S_{\vj}\|_\infty\geq\lambda_{k,n}.
 \end{equation}
 A spectral $k$-UM attains equality.
\end{theorem}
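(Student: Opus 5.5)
The plan is to combine the averaging identity announced before \cref{eq:mu} with the theory of interlacing families of polynomials (Marcus--Spielman--Srivastava). For rank-one projective measurements in $\mathbb C^n$, write $P_{a|x}=\lvert e_{a|x}\rangle\langle e_{a|x}\rvert$ with $\{\lvert e_{a|x}\rangle\}_a$ an orthonormal basis for each $x$.

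The first step is to establish the averaging identity itself (the content of \cref{eq:average-char}). We regard the outcome $j_x$ as a uniformly random index, independently for each $x$, so that $P_{j_x|x}$ is a random rank-one matrix with $\mathbb E[P_{j_x|x}]=\id/n$. Expanding the characteristic polynomial of a sum of rank-one matrices with the mixed-characteristic-polynomial formula, independence gives
\[
\mathbb E\det\Bigl(t\id-\sum_x P_{j_x|x}\Bigr)=\prod_{x=1}^k\Bigl(1-\partial_{z_x}\Bigr)\det\Bigl(t\id+\sum_x z_x\id/n\Bigr)\Big|_{z=0}.
\]
Since the inner determinant equals $(t+\sum_x z_x/n)^n$, each operator $1-\partial_{z_x}$ acts, at $z=0$, as $1-\frac1n\frac{\mathrm d}{\mathrm dt}$. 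The right-hand side is therefore $\bigl(1-\frac1n\frac{\mathrm d}{\mathrm dt}\bigr)^k t^n=\mu_{k,n}(t)$.

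The second step is the interlacing-family argument. We build the tree whose level-$m$ nodes are partial strings $(j_1,\ldots,j_m)$, each carrying the conditional expected polynomial
\[
f_{j_1\ldots j_m}(t)=\mathbb E_{j_{m+1},\ldots,j_k}\det(t\id-S_{\vj}).
\]
Every such polynomial is a real-rooted mixed characteristic polynomial, because it is obtained from the real-stable polynomial $\det(t\id+\sum_x z_xB_x)$, with $B_x$ positive semidefinite, by applying operators $1-\partial_{z_x}$ and specialisations, all of which preserve real stability. Moreover, any convex combination of the children of a node, obtained by reweighting the distribution of the single coordinate $j_{m+1}$ with the other coordinates still independent, is again real-rooted. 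By the standard lemma, the children therefore have a common interlacer. Descending the tree then yields a leaf $\vj$ whose largest root, namely $\|S_{\vj}\|_\infty$, is at least the largest root of the root polynomial. The inequality needs only the direction ``some leaf has largest root $\geq$ that of the average'', which is exactly what the common-interlacer lemma gives. Since the root polynomial is $\mu_{k,n}$, this gives $\max_{\vj}\|S_{\vj}\|_\infty\geq\lambda_{k,n}$.

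The final step is equality. For a spectral $k$-UM at rank one, \cref{eq:spectral} gives $\det(t\id-S_{\vj})=\mu_{k,n}(t)$ for every $\vj$. Since $S_{\vj}$ is positive semidefinite, $\|S_{\vj}\|_\infty$ is its largest eigenvalue, which equals $\lambda_{k,n}$.

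The main obstacle is the real-rootedness of the convex combinations needed in the second step. I will handle it by viewing a reweighted choice of $j_{m+1}$ as a random rank-one matrix with some covariance $B$, and applying the mixed-characteristic-polynomial representation to general independent random rank-one matrices, as in MSS. This gives a real-rooted polynomial, so the common-interlacer criterion applies.
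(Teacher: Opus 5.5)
Your proof is correct, but it reaches the inequality by a much heavier route than the paper.

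For the averaging identity \cref{eq:average-char}, the paper uses an elementary argument. It applies Cauchy--Binet, then uses the squared-volume interpretation of Gram determinants: each $m\times m$ Gram minor averages to $n!/((n-m)!\,n^m)$. The paper mentions the mixed-characteristic-polynomial formula of MSS only as an alternative. Using that formula directly, as you do, is fine.

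The real divergence is your second step. The direction you need holds for any average of characteristic polynomials of Hermitian matrices, with no interlacing at all. Suppose every $\|S_{\vj}\|_\infty<\lambda_{k,n}$. Then each $\det(\lambda_{k,n}\id-S_{\vj})$ is a product of strictly positive factors, so the average at $t=\lambda_{k,n}$ is strictly positive. This contradicts $\mu_{k,n}(\lambda_{k,n})=0$. That is exactly how the paper concludes. It only requires $\lambda_{k,n}$ to be a real zero of $\mu_{k,n}$, and the paper gets real-rootedness separately from the recursion $\mu_{\ell+1,n}=\mu_{\ell,n}-\mu'_{\ell,n}/n$.

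So the tree, the real stability of the reweighted conditional polynomials, and the common-interlacer lemma are unnecessary. What you flag as the main obstacle is simply not needed for this theorem.

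Your machinery does buy two things the paper's route does not. First, your family is an interlacing family, so the non-trivial MSS direction also gives a leaf with $\|S_{\vj}\|_\infty\leq\lambda_{k,n}$. Together this yields $\min_{\vj}\|S_{\vj}\|_\infty\leq\lambda_{k,n}\leq\max_{\vj}\|S_{\vj}\|_\infty$. Second, real-rootedness of $\mu_{k,n}$ comes for free. Neither is required here.

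The equality clause is handled identically in both proofs.
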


A proof is given in \cref{app:spectral}; note that the result is a special case of the mixed-characteristic-polynomial framework of~\cite{MSS15}.
We expect \cref{thm:rank-one-spectral} to remain valid at arbitrary outcome rank.
Partial results in this direction are proved in \cref{app:spectral}: the bound holds for pairs, for any number of dichotomic measurements, and for arbitrary triples.
Beyond these regimes, the generalisation of \cref{thm:rank-one-spectral} to arbitrary rank remains open.

These statements concern the upper bound $\max_{\vj}\|S_{\vj}\|_\infty/k$.
They do not provide a lower bound on $\eta^{\mg}$ for an arbitrary tuple: proving global optimality still requires a universal parent construction.
This is the question studied in the next subsection.

\subsection{Universal sum-of-squares hierarchy}
\label{subsec:universal-sos}

The hierarchy introduced in~\cite{Des26} searches for a parent measurement represented by a positive non-commutative polynomial in one selected outcome from each measurement.
Its positivity and the marginal inequalities are enforced by a sum-of-squares Gram matrix.
The permutation symmetry of settings and outcomes is exploited by block diagonalising the Gram representation~\cite{GP04,IR22}.
A concise formal definition as well as some details on the implementation and numerical results are given in \cref{app:sos}.

Let $\chi_{k,t}^{\mg}(n)$ denote the optimal universal lower bound on the generalised robustness obtained at level $t$ of the hierarchy, that is, using a universal parent measurement that is a polynomial of degree $2t$; \cite{Des26} shows that this value is a lower bound on $\chi_k^{\mg}(n)$.
Combining this lower bound with \cref{eq:universal-fixed-bracket} yields, whenever a $k$-UM exists that satisfies the conditions of \cref{thm:maximal-eigenspace-parent},
\begin{equation}
 \chi_{k,t}^{\mg}(n)\leq\chi_k^{\mg}(n)\leq\frac{\lambda_{k,n}}k.
\end{equation}
A numerical match between the two bounds is evidence of global optimality, see \cref{tab:best-hierarchy-values}.

\begin{table*}[t]
 \centering
 \setlength{\tabcolsep}{5pt}
 \begin{tabular}{|c|ccccccc|}
  \hline
  \diagbox{$~k$}{$n~$} & 2                                                       & 3                                                       & 4                                                       & 5                                                       & 6                                                       & 7                                                            & 8                                                       \\ \hline
                       & \cc                                                     & \cd                                                     & \cd                                                     & \cd                                                     & \cd                                                     & \cd                                                          & \cd                                                     \\
  \multirow{-2}{*}{2}  & \cc\multirow{-2}{*}{\hphantom{$^{[4]}$}$0.85355^{[4]}$} & \cd\multirow{-2}{*}{\hphantom{$^{[4]}$}$0.78868^{[4]}$} & \cd\multirow{-2}{*}{\hphantom{$^{[4]}$}$0.75000^{[4]}$} & \cd\multirow{-2}{*}{\hphantom{$^{[4]}$}$0.72361^{[4]}$} & \cd\multirow{-2}{*}{\hphantom{$^{[4]}$}$0.70412^{[4]}$} & \cd\multirow{-2}{*}{\hphantom{$^{[4]}$}$0.68898^{[4]}$}      & \cd\multirow{-2}{*}{\hphantom{$^{[4]}$}$0.67678^{[4]}$} \\
                       & \cc                                                     & \ce                                                     & \ce                                                     & \cf                                                     & \cf                                                     & $\stackcell{}{\hphantom{^{[3]}}0.56561^{[3]}}$               & $\stackcell{}{\hphantom{^{[3]}}0.54963^{[3]}}$          \\
  \multirow{-2}{*}{3}  & \cc\multirow{-2}{*}{\hphantom{$^{[4]}$}$0.78868^{[4]}$} & \ce\multirow{-2}{*}{\hphantom{$^{[4]}$}$0.69888^{[4]}$} & \ce\multirow{-2}{*}{\hphantom{$^{[4]}$}$0.64656^{[4]}$} & \cf\multirow{-2}{*}{\hphantom{$^{[4]}$}$0.61140^{[4]}$} & \cf\multirow{-2}{*}{\hphantom{$^{[4]}$}$0.58574^{[4]}$} & $\stackcell{0.00035}{}$                                      & $\stackcell{0.00051}{}$                                 \\
                       & \cc                                                     & \ce                                                     & \cf                                                     & $\stackcell{}{\hphantom{^{[3]}}0.54696^{[3]}}$          & $\stackcell{}{\hphantom{^{[3]}}0.51778^{[3]}}$          & $\stackcell{}{\hphantom{^{[3]}}0.49535^{[3]}}$               & $\stackcell{}{\hphantom{^{[3]}}0.47747^{[3]}}$          \\
  \multirow{-2}{*}{4}  & \cc\multirow{-2}{*}{\hphantom{$^{[4]}$}$0.75000^{[4]}$} & \ce\multirow{-2}{*}{\hphantom{$^{[4]}$}$0.64656^{[4]}$} & \cf\multirow{-2}{*}{\hphantom{$^{[4]}$}$0.58719^{[4]}$} & $\stackcell{0.00073}{}$                                 & $\stackcell{0.00130}{}$                                 & $\stackcell{0.00182}{}$                                      & $\stackcell{0.00226}{}$                                 \\
                       & \cc                                                     & \ce                                                     & $\stackcell{}{\hphantom{^{[3]}}0.54701^{[3]}}$          & $\stackcell{}{\hphantom{^{[3]}}0.50397^{[3]}}$          & $\stackcell{}{\hphantom{^{[3]}}0.47277^{[3]}}$          & $\stackcell{}{\hphantom{^{[2]}}0.43429^{[2]}}$               & $\stackcell{}{\hphantom{^{[2]}}0.41495^{[2]}}$          \\
  \multirow{-2}{*}{5}  & \cc\multirow{-2}{*}{\hphantom{$^{[4]}$}$0.72361^{[4]}$} & \ce\multirow{-2}{*}{\hphantom{$^{[4]}$}$0.61140^{[4]}$} & $\stackcell{0.00068}{}$                                 & $\stackcell{0.00167}{}$                                 & $\stackcell{0.00257}{}$                                 & $\stackcell{0.01795}{}$                                      & $\stackcell{0.01897}{}$                                 \\
                       & \cc                                                     & \ce                                                     & $\stackcell{}{\hphantom{^{[3]}}0.51788^{[3]}}$          & $\stackcell{}{\hphantom{^{[3]}}0.47283^{[3]}}$          & $\stackcell{}{\hphantom{^{[2]}}0.42631^{[2]}}$          & $\stackcell{}{\hphantom{^{[2]}}0.40093^{[2]}}$               & $\stackcell{}{\hphantom{^{[2]}}0.38102^{[2]}}$          \\
  \multirow{-2}{*}{6}  & \cc\multirow{-2}{*}{\hphantom{$^{[4]}$}$0.70412^{[4]}$} & \ce\multirow{-2}{*}{\hphantom{$^{[3]}$}$0.58574^{[3]}$} & $\stackcell{0.00120}{}$                                 & $\stackcell{0.00250}{}$                                 & $\stackcell{0.01766}{}$                                 & $\stackcell{0.01921}{}$                                      & $\stackcell{0.02027}{}$                                 \\
                       & \cc                                                     & $\stackcell{}{\hphantom{^{[2]}}0.55882^{[2]}}$          & $\stackcell{}{\hphantom{^{[2]}}0.48472^{[2]}}$          & $\stackcell{}{\hphantom{^{[2]}}0.43629^{[2]}}$          & $\stackcell{}{\hphantom{^{[2]}}0.40187^{[2]}}$          & $\stackcell{}{\hphantom{^{[2]}}0.37600^{[2]}}$               & $\stackcell{}{\hphantom{^{[2]}}0.35574^{[2]}}$          \\
  \multirow{-2}{*}{7}  & \cc\multirow{-2}{*}{\hphantom{$^{[3]}$}$0.68898^{[3]}$} & $\stackcell{0.00714}{}$                                 & $\stackcell{0.01245}{}$                                 & $\stackcell{0.01595}{}$                                 & $\stackcell{0.01827}{}$                                 & $\stackcell{0.01983}{}$                                      & $\stackcell{0.02091}{}$                                 \\
                       & \cc                                                     & $\stackcell{}{\hphantom{^{[2]}}0.54280^{[2]}}$          & $\stackcell{}{\hphantom{^{[2]}}0.46702^{[2]}}$          & $\stackcell{}{\hphantom{^{[2]}}0.41770^{[2]}}$          & $\stackcell{}{\hphantom{^{[2]}}0.38275^{[2]}}$          & $\stackcell{}{\hphantom{^{[2]}}0.35655^{[2]}}$               & $\stackcell{}{\hphantom{^{[2]}}0.33607^{[2]}}$          \\
  \multirow{-2}{*}{8}  & \cc\multirow{-2}{*}{\hphantom{$^{[3]}$}$0.67678^{[3]}$} & $\stackcell{0.00734}{}$                                 & $\stackcell{0.01271}{}$                                 & $\stackcell{0.01622}{}$                                 & $\stackcell{0.01854}{}$                                 & $\stackcell{0.02010}{}$                                      & $\stackcell{0.02116}{}$                                 \\
  \hline
 \end{tabular}
 \caption{
  Best numerical lower bounds $\chi^\mg_{k,t}(n)$ obtained across the computed levels of the hierarchy.
  The superscript indicates the highest level at which $\chi^\mg_{k,t}(n)$ was computed.
  When the lower bound matches $\lambda_{k,n}/k$ at the displayed precision, the grey background records the first matching level: from light grey for level one to dark grey for level four.
  Otherwise, the second line reports the gap between $\lambda_{k,n}/k$ and the numerical lower bound.
 }
 \label{tab:best-hierarchy-values}
\end{table*}

For $(k,n)=(3,4)$, the third level was solved in high precision and returns a numerical estimate of the universal lower bound agreeing with $[1+\cos(\pi/9)]/3$ to approximately $3\times10^{-25}$.
Together with the explicit triple, this numerically brackets $\chi_3^{\mg}(4)$ at that precision.
The computation, however, does not constitute an analytical proof since we could not use it to extract an exact certificate.
For $(k,n)=(3,5),(3,6),(4,4)$, the fourth level was only solved at standard precision, and the observed numerical values lie slightly above the expected $\lambda_{k,n}/k$ (by $5\times10^{-8}$, $1\times10^{-6}$, and $3\times10^{-6}$, respectively).
These small overshoots are consistent with numerical error at the scale of these very large problems, whose sparse SDPA files have sizes approximately $1\mathrm{GB}$, $1\mathrm{GB}$, and $19\mathrm{GB}$, respectively.
The broader pattern motivates the following conjecture.

\begin{conjecture}[Asymptotic sum-of-squares hierarchy value]
 \label{conj:hierarchy-limit}
 For every $k,n\geq2$,
 \begin{equation}
  \lim_{t\to\infty}\chi_{k,t}^{\mg}(n)=\frac{\lambda_{k,n}}k,
  \qquad\text{so that}\qquad
  \chi_3^{\mg}(n)=\frac{\lambda_{3,n}}3
 \end{equation}
 whenever a 3-UM with $n$ outcomes exists, and every 3-UM with those parameters is maximally incompatible.
 In particular, all Hadamard--Clifford triples constructed in \cref{thm:construction} are maximally incompatible.
\end{conjecture}

The values in \cref{tab:best-hierarchy-values} can also be transferred to the dimension-bounded problem: as described in~\cite{Des26}, a universal parent for projective measurements with a fixed number of outcomes yields a parent for rank-one measurements in the corresponding fixed dimension.
The present work could therefore have applications to genuine high-dimensional steering~\cite{DSR+21,Des22}.

\section{Conclusion}

Higher-order unbiasedness behaves differently once one leaves the rank-one setting.
The notion of algebraic $k$-UMs introduced here preserves the hereditary cyclic structure of $k$-fold unbiased bases~\cite{Far17}, while the notion of spectral $k$-UMs isolates the selected-sum data that enter incompatibility bounds.
For triples the notions coincide, and despite the rank-one and three-outcome obstructions, the Hadamard--Clifford construction produces infinitely many higher-rank examples, beginning with a 3-UM with four outcomes of rank two on $\mathbb C^8$.
For this triple as well as for every 3-UM with $n$ outcomes, the maximal-eigenspace construction shows that the generalised incompatibility robustness is equal to $\lambda_{3,n}/3$.

Beyond this fixed-tuple value, 3-UMs are also universally optimal for different incompatibility-related questions.
Among arbitrary projective triples, they minimise the selected-sum upper bound, but this alone does not exclude a different tuple with smaller robustness.
The universal sum-of-squrares hierarchy approaches the problem from the opposite direction by constructing parents valid for all projective tuples.
For $(k,n)=(3,4)$, its high-precision value meets $\lambda_{3,4}/3$ to about $3\times10^{-25}$, giving strong numerical evidence that this triple is globally maximally incompatible.
The broader numerical pattern suggests that the polynomial $\mu_{k,n}$ may govern not only specially structured tuples but the universal asymptotics of the hierarchy.

Two mathematical problems therefore stand out.
On the structural side, no genuinely higher-order example with $k>3$ and $n>2$ is presently known, and the relation between algebraic and spectral $k$-UMs is open beyond the regimes established here.
On the optimisation side, a convergence proof yielding $\lim_t\chi_{k,t}^{\mg}(n)=\lambda_{k,n}/k$ would turn the observed numerical pattern into a universal incompatibility theorem, even for parameter pairs where no finite-dimensional $k$-UM is known to exist.
Outside the low-degree regimes established in \cref{app:primal}, we have not established the required reductions of the moments of $S_{\vj}$ for the maximal-eigenspace parent POVM construction, and establishing maximal incompatibility (or lack thereof) requires further work.

The explicit higher-rank triples also open questions beyond robustness itself.
Their symmetry and rigidity suggest looking for operational characterisations analogous to those known for MUBs and MUMs, including prepare-and-measure self-testing, Bell certification, and steering applications~\cite{FK19,TFR+21,DSR+21,Des22}.
In particular, it would be interesting to know whether correlations can self-test the algebraic 3-UM relations, thereby giving higher-order unbiasedness a directly observable characterisation rather than one defined only through operator identities.

\section*{Acknowledgements}

This work started when S.D.~visited M.F.~in York in December 2025; this trip was financially supported by the QuantERA II Programme that has received funding from the European Union's Horizon 2020 research and innovation programme under Grant Agreement No 101017733 (VERIqTAS).

The notion of algebraic $k$-UMs and the idea of exploring them for maximal incompatibility using the sum-of-squrares method were conceived by the authors.
Through iterative human-directed discussions, GPT 5.6 Sol was used to find the Hadamard--Clifford construction, to derive some of the technical statements, to write numerical code, and to generate the first versions of this paper, which was then reworked by the authors.
The authors take full responsibility for the correctness of the results in this manuscript.

\bibliography{DF26}

\appendix
\crefalias{section}{appendix}
\crefalias{subsection}{appendix}

\section{Relations between algebraic and spectral definitions}
\label{app:relations}

This appendix proves the five statements in \cref{thm:relations}.
Throughout, all measurements have $n$ outcomes and act on the same finite-dimensional Hilbert space $\cH$.
For $J\subseteq[k]$ with $\abs{J}=\ell\geq2$ and $x\in J$, let $\algrel{J}{x}$ denote the assertion that, for every choice of outcomes $(j_y)_{y\in J}$,
\begin{equation}
 \sum_{(x_2,\ldots,x_\ell)}
 A_{j_x|x}A_{j_{x_2}|x_2}\cdots A_{j_{x_\ell}|x_\ell}A_{j_x|x}
 =\frac{(\ell-1)!}{n^{\ell-1}}A_{j_x|x},
 \label{eq:algebraic-subtuple}
\end{equation}
where the sum runs over all orderings of $J\setminus\{x\}$.
The subtuple indexed by $J$ is an algebraic $\ell$-UM precisely when $\algrel{J}{x}$ holds for every $x\in J$.
Thus the original tuple is an algebraic $k$-UM if and only if $\algrel{[k]}{x}$ holds for every $x\in[k]$, while heredity means that $\algrel{J}{x}$ holds for every $J\subseteq[k]$ with $\abs{J}\geq2$ and every $x\in J$.

\textbf{Proof of \cref{thm:relations}, item~\ref{itm:relations-heredity}.}
Let $\mathsf A=(A_{a|x})$ be an algebraic $k$-UM.
Suppose that $\algrel{J}{x}$ holds for a set $J$ with $\abs{J}=\ell\geq3$ and $x\in J$, and choose $y\in J\setminus\{x\}$.
Sum the relation $\algrel{J}{x}$ over $j_y$.
Completeness removes $A_{j_y|y}$, and every ordering of $J\setminus\{x,y\}$ occurs exactly $\ell-1$ times, according to the former position of $y$.
The left-hand side therefore becomes $\ell-1$ times the left-hand side of $\algrel{J\setminus\{y\}}{x}$.
The summed right-hand side is $n(\ell-1)!A_{j_x|x}/n^{\ell-1}$, so division by $\ell-1$ gives the factor $(\ell-2)!/n^{\ell-2}$ required by that relation.
For every $J\subseteq[k]$ and $x\in J$, starting from $\algrel{[k]}{x}$ and eliminating the settings outside $J$ proves $\algrel{J}{x}$.

For two settings, the two choices of the fixed setting give $A_{a|x}A_{b|y}A_{a|x}=A_{a|x}/n$ and $A_{b|y}A_{a|x}A_{b|y}=A_{b|y}/n$.
Summing over $b$ in the first identity and over $a$ in the second shows that all effects are projections.
Taking traces then gives $\Tr A_{a|x}=\Tr A_{b|y}$, and varying settings and outcomes proves common outcome rank.
These pair relations are exactly \cref{eq:mum}, which proves item~\ref{itm:relations-heredity}.

\textbf{Proof of \cref{thm:relations}, item~\ref{itm:relations-rank-one}.}
Consider an algebraic $k$-UM with rank one.
By item~\ref{itm:relations-heredity}, it consists of rank-one PVMs on $\cH\cong\mathbb C^n$ and every sub-$\ell$-tuple is an algebraic $\ell$-UM.
For every setting $x$, choose an orthonormal basis $(\lvert\varphi_a^x\rangle)_{a\in[n]}$ such that $A_{a|x}=\lvert\varphi_a^x\rangle\langle\varphi_a^x\rvert$.
Fix an outcome string $\vj$ and write $\lvert v_x\rangle=\lvert\varphi_{j_x}^x\rangle$.
For $J=\{x_1,\ldots,x_\ell\}\subseteq[k]$ with $\ell\leq n$, the Leibniz formula gives
\begin{equation}
 D_J(\vj_J)
 \coloneqq\det\bigl(\langle v_x|v_y\rangle\bigr)_{x,y\in J}
 =\sum_{\pi\in\fS_\ell}\varepsilon(\pi)\prod_{i=1}^{\ell}\langle v_{x_i}|v_{x_{\pi(i)}}\rangle,
 \label{eq:leibniz}
\end{equation}
where $\vj_J=(j_x)_{x\in J}$, $\fS_\ell$ is the symmetric group on $[\ell]$, and $\varepsilon(\pi)$ is the signature of the permutation $\pi$.
Denoting by $c(\pi)$ the number of (disjoint) cycles of $\pi$, since $m$-cycles have signature $(-1)^{m-1}$, we have $\varepsilon(\pi)=(-1)^{\ell-c(\pi)}$.

Now group the terms in $\sum_{\pi\in\fS_\ell}$ in \cref{eq:leibniz} according to the partition $\mathcal P=\{K_1,K_2,\ldots,K_{\abs{\mathcal{P}}}\}$ of $J$ formed by the cycle supports of the permutations.
For $K\subseteq J$, let $\operatorname{Cyc}(K)$ denote the set of permutations of $K$ consisting of one cycle.
This gives
\begin{equation}
 \begin{aligned}
  D_J(\vj_J)&=\sum_{\mathcal P\vdash J}\sum_{\substack{\pi\in\fS_\ell\\ \pi\sim\mathcal{P}}}(-1)^{\ell-c(\pi)}\prod_{i=1}^{\ell}\langle v_{x_i}|v_{x_{\pi(i)}}\rangle=\sum_{\mathcal P\vdash J}(-1)^{\ell-\abs{\mathcal{P}}}\sum_{\sigma_1\in\operatorname{Cyc}(K_1)}\cdots\sum_{\sigma_{\abs{\mathcal P}}\in\operatorname{Cyc}(K_{\abs{\mathcal P}})}\prod_{j=1}^{\abs{\mathcal P}}\prod_{x\in K_j}\langle v_{x}|v_{\sigma_j(x)}\rangle\\
  &=\sum_{\mathcal P\vdash J}(-1)^{\ell-\abs{\mathcal P}}\prod_{K\in\mathcal P}\left(\sum_{\sigma\in\operatorname{Cyc}(K)}\prod_{x\in K}\langle v_x|v_{\sigma(x)}\rangle\right)
  =\sum_{\mathcal P\vdash J}(-1)^{\ell-\abs{\mathcal P}}\prod_{K\in\mathcal P}\frac{(\abs K-1)!}{n^{\abs K-1}}
  =\sum_{\pi\in\fS_\ell}\left(\frac{-1}{n}\right)^{\ell-c(\pi)},
  \label{eq:rank-one-cycle-grouping}
 \end{aligned}
\end{equation}
where $\mathcal P\vdash J$ denotes partitions of $J$ and $\pi\sim\mathcal{P}$ means that the cycle support structure of $\pi$ is given by $\mathcal P$.
The second equation in \cref{eq:rank-one-cycle-grouping} is a consequence of every permutation being uniquely defined by its disjoint cycles.
The third is a simple factorisation.
For the fourth, notice that substituting the rank-one projectors into $\algrel{K}{x}$ [see \cref{eq:algebraic-subtuple}] gives
\begin{equation}
 \mathcal C_K(\vj_K)
 \coloneqq\sum_{\sigma\in\operatorname{Cyc}(K)}\prod_{x\in K}\langle v_x|v_{\sigma(x)}\rangle
 =\frac{(\abs K-1)!}{n^{\abs K-1}},
\end{equation}
which also holds for a singleton $K$, since its fixed point contributes~1.
For the last equality in \cref{eq:rank-one-cycle-grouping}, observe that $\prod_{K\in\mathcal P}n^{\abs K-1}=n^{\ell-\abs{\mathcal P}}$ and that $\prod_{K\in\mathcal P}(\abs K-1)!$ is the number of permutations whose cycle supports are the blocks of $\mathcal P$, which allows us to convert the sum over $\mathcal P$ back into a sum over $\fS_\ell$.
Introducing the Stirling number of the first kind \cite{Sta11}, $s(\ell,k)\coloneqq(-1)^{\ell-k}\abs{\{\pi\in\fS_\ell:c(\pi)=k\}}$, we obtain
\begin{equation}
 D_J(\vj_J)
 =\frac{1}{n^\ell}\sum_{k=0}^\ell s(\ell,k)n^k
 =\frac{n(n-1)\cdots(n-\ell+1)}{n^\ell}
 \eqqcolon\gamma_{\ell,n},
 \label{eq:rank-one-gram-value}
\end{equation}
where the second equality follows from~\cite[Prop.~1.3.7]{Sta11}.

Let $V_{\vj}=(\lvert v_1\rangle\ \cdots\ \lvert v_k\rangle)\in\bC^{n\times k}$, so that $S_{\vj}=V_{\vj}V_{\vj}^*$, and let $V_{\vj,J}$ denote the submatrix with columns indexed by $J$.
By the Cauchy--Binet formula, we have
\begin{equation}
 \det(t\id_n-V_{\vj}V_{\vj}^*)
 =t^n\det\left[\begin{pmatrix}\id_n&V_{\vj}\end{pmatrix}\cdot\begin{pmatrix}\id_n\\\frac{-1}{t}V_{\vj}^*\end{pmatrix}\right]
 =\sum_{\ell=0}^{\min\{k,n\}}(-1)^\ell t^{n-\ell}
 \sum_{\substack{J\subseteq[k]\\\abs{J}=\ell}}
 \underbrace{\det(V_{\vj,J}^*V_{\vj,J})}_{D_J(\vj_J)}.
 \label{eq:cauchy-binet}
\end{equation}
Substituting \cref{eq:rank-one-gram-value} yields
\begin{equation}
 \det(t\id-S_{\vj})
 =\sum_{\ell=0}^{\min\{k,n\}}(-1)^\ell t^{n-\ell}\binom{k}{\ell}\gamma_{\ell,n}
 =\mu_{k,n}(t),
\end{equation}
which concludes the proof that rank-one algebraic $k$-UMs are spectral.

Conversely, consider a rank-one spectral $k$-UM.
Keep the vector notation $\lvert\varphi^x_{j_x}\rangle$ from above.
For $J\subseteq[k]$ with $\abs{J}=\ell\leq n$ and $x\in J$, we will first show that the completeness of the basis of setting $x$ gives
\begin{equation}
 \sum_{j_x=1}^nD_J(\vj_J)
 =(n-\ell+1)D_{J\setminus\{x\}}(\vj_{J\setminus\{x\}}).
 \label{eq:rank-one-gram-average}
\end{equation}
To see this, let $W$ be the span of the other $\ell-1$ selected vectors and let $\Pi_W$ be the orthogonal projector onto $W$.
If those vectors are dependent, both sides of \cref{eq:rank-one-gram-average} vanish.
Otherwise, by the geometric interpretation of the Gram determinant as a squared volume,
\begin{equation}
 D_J(\vj_J)=D_{J\setminus\{x\}}(\vj_{J\setminus\{x\}})\bigl\|(\id-\Pi_W)\lvert\varphi_{j_x}^x\rangle\bigr\|^2.
\end{equation}
Summing over $j_x$ gives $\sum_{j_x=1}^n\bigl\|(\id-\Pi_W)\lvert\varphi_{j_x}^x\rangle\bigr\|^2=\Tr(\id-\Pi_W)=n-\ell+1$, which proves \cref{eq:rank-one-gram-average}.

Set $\bar{D}_J(\vj_J)=D_J(\vj_J)-\gamma_{\ell,n}$ and define, for a function of the outcomes, its conditional average in $j_x$ to be the uniform average $n^{-1}\sum_{j_x=1}^n$ with all other outcomes fixed.
We prove by induction on $\ell$ that every $\bar{D}_J(\vj_J)$ vanishes, which will later imply the algebraic $k$-UM relations.
For $\ell=1$, $\bar{D}_J(\vj_J)$ vanishes because $D_{\{x\}}(\vj_{\{x\}})=1=\gamma_{1,n}$.
Suppose that the conclusion has been proved for sets of size $\ell-1$.
Since $\gamma_{\ell,n}=(n-\ell+1)\gamma_{\ell-1,n}/n$, \cref{eq:rank-one-gram-average} then shows that $\bar{D}_J(\vj_J)$ has zero conditional average in every variable on which it depends.
On the other hand, comparing the coefficient of $t^{n-\ell}$ in $\mu_{k,n}$ with \cref{eq:cauchy-binet} gives
\begin{equation}\label{eq:coeff-comparison}
 \sum_{\substack{J\subseteq[k]\\\abs{J}=\ell}}{D}_J(\vj_J)=\binom{k}{\ell}\gamma_{\ell,n},
 \qquad\text{so that}\qquad
 \sum_{\substack{J\subseteq[k]\\\abs{J}=\ell}}\bar{D}_J(\vj_J)=0
\end{equation}
for every $\vj$.
Equip functions on $[n]^k$ with the inner product obtained by uniformly averaging their pointwise product over all $\vj\in[n]^k$.
If $J,K\subseteq[k]$ are distinct and $\abs{J}=\abs{K}=\ell$, choose $x\in J\setminus K$.
The function $\bar{D}_K$ is independent of $j_x$, while the conditional average of $\bar{D}_J$ in $j_x$ is zero, so $\bar{D}_J$ and $\bar{D}_K$ are orthogonal.
From \cref{eq:coeff-comparison} it follows that
\begin{equation}
 0=\Big\lVert\sum_{\substack{J\subseteq[k]\\\abs{J}=\ell}}\bar{D}_J(\vj_J)\Big\rVert_2^2
 =\sum_{\substack{J\subseteq[k]\\\abs{J}=\ell}}\left\lVert\bar{D}_J(\vj_J)\right\rVert_2^2,
\end{equation}
where the second equality is a consequence of the orthogonality we have just proved.
Thus $D_J(\vj_J)=\gamma_{\ell,n}$ for every $J$ with $\abs{J}\leq n$ and every choice of outcomes.
This result extends trivially to subsets $J$ such that $\abs{J}>n$, because in this case $D_J(\vj_J)=\gamma_{\ell,n}=0$.

It remains to recover the algebraic $k$-UM relations.
This will follow from the proof of \cref{eq:CJ} below, which we achieve by induction on $\abs{J}=\ell$.
For $\ell=2$, we have $D_J(\vj_J)=1-\abs{\langle v_x|v_y\rangle}^2$.
Hence $D_J(\vj_J)=\gamma_{2,n}=1-1/n$ gives $\mathcal C_J(\vj_J)=\abs{\langle v_x|v_y\rangle}^2=1/n$.
Now let $\ell\geq3$ and assume inductively that $\mathcal C_K(\vj_K)=(\abs K-1)!/n^{\abs K-1}$ for every proper subset $K\subsetneq J$ with $\abs K\geq2$.
Group the terms in the Leibniz expansion of $D_J(\vj_J)$ according to the partition of $J$ formed by their cycle supports, as in \cref{eq:rank-one-cycle-grouping}.
For every partition with at least two blocks, each non-singleton block is a proper subset of $J$ and is therefore covered by the induction hypothesis, while singleton blocks contribute~1.
Consequently, the contribution of every such partition agrees with its value in the calculation leading to \cref{eq:rank-one-gram-value}.
The only contribution not yet determined is that of the one-block partition $\{J\}$, namely $(-1)^{\ell-1}\mathcal C_J(\vj_J)$.
Since $D_J(\vj_J)=\gamma_{\ell,n}$ and \cref{eq:rank-one-gram-value} reaches the same value, the remaining contributions must also agree.
It follows that
\begin{equation}
 \mathcal C_J(\vj_J)=\frac{(\ell-1)!}{n^{\ell-1}}.
 \label{eq:CJ}
\end{equation}

Finally, take $J=[k]$ and fix any $x\in[k]$.
Every cycle with support $[k]$ can be written uniquely as $(x,x_2,\ldots,x_k)$, where $(x_2,\ldots,x_k)$ is an ordering of $[k]\setminus\{x\}$.
After substituting the rank-one projectors into \cref{eq:algebraic-subtuple}, we obtain $\mathcal C_{[k]}(\vj_{[k]})A_{j_x|x}=(k-1)!/n^{k-1}A_{j_x|x}$ by \cref{eq:CJ}, which is nothing else but $\algrel{[k]}{x}$.
This concludes the proof that rank-one spectral $k$-UMs are algebraic.

\textbf{Proof of \cref{thm:relations}, item~\ref{itm:relations-pairs}.}
By definition, algebraic 2-UMs are exactly pairs of MUMs.

Now we show that every pair of MUMs is a spectral 2-UM.
For two selected rank-$r$ projections $P,Q$ from a pair of MUMs, let $V_P,V_Q:\mathbb C^r\to\cH$ be isometries onto their ranges.
Multiplying the MUM relation $PQP=P/n$ by $V_P^*$ on the left and $V_P$ on the right gives $(V_P^*V_Q)(V_P^*V_Q)^*=V_P^*QV_P=\id_r/n$, so all singular values of $V_P^*V_Q$ are $1/\sqrt n$.
Using these singular values, denoting $V=(V_P\ V_Q)$, the equality $V^*V=\bigl(\begin{smallmatrix}\id_r&V_P^*V_Q\\(V_P^*V_Q)^*&\id_r\end{smallmatrix}\bigr)$ gives that the eigenvalues of $V^*V$ are $1\pm1/\sqrt n$.
These eigenvalues, up to zeros, match those of $VV^*=P+Q$, and therefore the non-zero eigenvalues of $P+Q$ are $1\pm1/\sqrt n$, each with multiplicity $r$, and hence
\begin{equation}
 \det(t\id-P-Q)
 =\left(t^{n-2}\left[(t-1)^2-\frac1n\right]\right)^{r}
 =\mu_{2,n}(t)^r.
 \label{eq:mub-pair-char}
\end{equation}

Conversely, take a spectral 2-UM $(P,Q)$.
The argument above works in the reverse direction and shows that all singular values of $V_P^*V_Q$ are $1/\sqrt n$.
This implies $(V_P^*V_Q)(V_P^*V_Q)^*=\id_r/n$, and by multiplying this from the left by $V_P$ and from the right by $V_P^*$ we obtain $PQP=P/n$.
A symmetric argument gives $QPQ=Q/n$, which concludes the proof of item~\ref{itm:relations-pairs}.

\textbf{Proof of \cref{thm:relations}, item~\ref{itm:relations-binary}.}
For a binary PVM, write $P_{a|x}=(\id+aA_x)/2$ for $a\in\{\pm1\}$, where the $A_x$ are Hermitian unitaries, that is, they satisfy $A_x^*=A_x$ and $A_x^2=\id$.
Now take an algebraic $k$-UM.
By item~\ref{itm:relations-heredity} all subpairs are pairs of MUMs.
The MUM relation for two settings is equivalent to $A_xA_y+A_yA_x=0$.
Hence this algebraic $k$-UM is generated by pairwise anticommuting unitaries.
Conversely, given a tuple of $k$ PVMs such that the corresponding $A_x$ anticommute pairwise, for every fixed setting $x$, we have that
\begin{equation}
 \sum_{(x_2,\ldots,x_k)}P_{j_{x_2}|x_2}\cdots P_{j_{x_k}|x_k}
 =\frac{(k-1)!}{2^{k-1}}\left(\id+\sum_{y\neq x}j_yA_y\right),
 \label{eq:binary-symmetrised-product}
\end{equation}
because when the products are summed over all orderings, every term containing at least two distinct unitaries cancels with another term, while the constant and linear terms occur in every ordering.
Since anticommutation also gives $P_{j_x|x}A_yP_{j_x|x}=0$ for $y\neq x$, multiplying \cref{eq:binary-symmetrised-product} by $P_{j_x|x}$ on the left and right yields \cref{eq:algebraic} with $n=2$.
Thus pairwise anticommutation is equivalent to being an algebraic $k$-UM.

Now consider a spectral $k$-UM and set $T_{\vj}=\sum_xj_xA_x$, so $S_{\vj}=(k\id+T_{\vj})/2$.
By definition, because $\mu_{k,2}(t)=t^2-kt+k(k-1)/4$, the equality $\mu_{k,2}(S_{\vj})=0$ by the Cayley--Hamilton theorem implies $T_{\vj}^2=k\id$ for every sign string.
Expanding gives $T_{\vj}^2=k\id+\sum_{x<y}j_xj_y(A_xA_y+A_yA_x)$; multiplying by $j_uj_v$ and averaging over all sign strings isolates the $(u,v)$ coefficient and proves pairwise anticommutation.
Conversely, pairwise anticommutation gives $T_{\vj}^2=k\id$.
Moreover, conjugation of $A_x$ by any $A_y$ with $y\neq x$ changes its sign, so every $A_x$, and hence every $T_{\vj}$, is traceless.
The eigenvalues $\pm\sqrt k$ of $T_{\vj}$ therefore have equal multiplicity, which gives the characteristic polynomial $\mu_{k,2}(t)^r$ for $S_{\vj}$.
This proves item~\ref{itm:relations-binary}.

\textbf{Proof of \cref{thm:relations}, item~\ref{itm:relations-triples}.}
The case $n=2$ is already covered by item~\ref{itm:relations-binary}, so assume $n\geq3$.
Let $(P,Q,R)$ be an algebraic 3-UM and fix selected projections $P=P_a$, $Q=Q_b$, and $R=R_c$.
By item~\ref{itm:relations-heredity}, they have common rank $r$ and are pairwise MUM.
Choose isometries $V_P,V_Q,V_R:\mathbb C^r\to\cH$ onto their ranges.
The MUM relations imply that each of $\sqrt nV_P^*V_Q$, $\sqrt nV_P^*V_R$, and $\sqrt nV_Q^*V_R$ is unitary.
Replacing the isometries by $V_PU_P,V_QU_Q,V_RU_R$ for suitable unitaries $U_P,U_Q,U_R$ does not change their ranges and allows us to arrange $V_P^*V_Q=V_P^*V_R=\id_r/\sqrt{n}$ and $V_Q^*V_R=\Omega/\sqrt{n}$ for a unitary $\Omega$.
Setting $V=(V_P\ V_Q\ V_R)$, we have
\begin{equation}
 V^*V
 =\frac1{\sqrt n}
 \begin{pmatrix}
 \sqrt n\id&\id&\id\\
 \id&\sqrt n\id&\Omega\\
 \id&\Omega^*&\sqrt n\id
 \end{pmatrix}.
 \label{eq:block-gram}
\end{equation}
Multiplying \cref{eq:algebraic-three} by $V_P^*$ on the left and $V_P$ on the right gives $\Omega+\Omega^*=2\id/\sqrt n$.
In an eigenbasis of $\Omega$, the matrix $V^*V$ is the direct sum of $r$ matrices of size $3\times3$, each with characteristic polynomial
\begin{equation}
 t^3-3t^2+\frac{3(n-1)}nt-\frac{(n-1)(n-2)}{n^2}.
 \label{eq:cubic}
\end{equation}
Since $P+Q+R=VV^*$, the operators $V^*V$ and $P+Q+R$ have the same non-zero eigenvalues.
\cref{eq:cubic} is the degree-three factor of $\mu_{3,n}$ [see \cref{eq:mu3}], while the remaining $(n-3)r$ dimensions of $\cH$ contribute zero eigenvalues.
Therefore $\det(t\id-P_a-Q_b-R_c)=\mu_{3,n}(t)^r$.
Thus every algebraic 3-UM is spectral.

Conversely, every selected sum of a spectral 3-UM has largest eigenvalue $\lambda_{3,n}$, so the triple attains equality in \cref{thm:triple-rigidity}.
For this converse we use the independent rigidity result proved in \cref{app:spectral}.
Hence the triple is an algebraic 3-UM, proving item~\ref{itm:relations-triples}.

\section{No-go theorems}
\label{app:nogo}

This appendix proves the non-existence and packing results stated in \cref{sec:kums}.
We treat the rank-one classification first, then the arbitrary-rank obstruction for three outcomes, and finally the packing bound.

\textbf{Proof of \cref{thm:rank-one-no-go}.}
Suppose that three rank-one measurements form an algebraic 3-UM, and write their bases as $(\lvert e_a\rangle)_a$, $(\lvert f_b\rangle)_b$, and $(\lvert g_c\rangle)_c$.
Pairwise unbiasedness allows us to define the phase matrices $u_{ab}=\sqrt{n}\langle e_a|f_b\rangle$, $v_{bc}=\sqrt{n}\langle f_b|g_c\rangle$, and $w_{ca}=\sqrt{n}\langle g_c|e_a\rangle$.
Fix $c$ and set $T^{(c)}_{ab}=u_{ab}v_{bc}w_{ca}$.
The matrix $T^{(c)}$ is complex Hadamard, since it is obtained from $(u_{ab})$ by multiplying rows and columns by phases~\cite{TZ06}.
Taking the trace in \cref{eq:algebraic-three} gives $\operatorname{Re}T^{(c)}_{ab}=1/\sqrt{n}$ for every $a,b$.
This together with the Hadamard property implies that $T^{(c)}=J_n/\sqrt{n}+\mathrm i\sqrt{1-1/n}E^{(c)}$, where $J_n$ is the $n\times n$ all-ones matrix and $E^{(c)}$ has entries in $\{-1,+1\}$.
Completeness of the basis $(\lvert f_b\rangle)_b$ gives $\sum_bT^{(c)}_{ab}=w_{ca}\sum_bu_{ab}v_{bc}=\sqrt{n}$, so every row of $E^{(c)}$ sums to zero, which implies $J_nE^{(c)\mathsf T}=0$.
Using this together with the Hadamard property $T^{(c)}T^{(c)*}=n\id$, we arrive at the formula $(1-1/n)E^{(c)}E^{(c)\mathsf T}=n\id-J_n$.
This formula implies that the inner product of two distinct rows of $E^{(c)}$ is equal to $-n/(n-1)$.
Since the inner product of two $\{\pm1\}$-valued rows is an integer, this is possible only for $n=2$.
Thus no rank-one algebraic 3-UM exists for $n>2$.
By heredity (\cref{thm:relations}, item~\ref{itm:relations-heredity}), no rank-one algebraic $k$-UM exists for $k\geq3$ and $n\geq3$; since rank-one spectral $k$-UMs are algebraic (\cref{thm:relations}, item~\ref{itm:relations-rank-one}), this proves \cref{thm:rank-one-no-go} in this regime.

For $n=2$, the binary equivalence in \cref{app:relations} identifies a rank-one tuple with pairwise anticommuting traceless Hermitian unitaries on $\mathbb C^2$.
They are pairwise orthogonal in the three-dimensional real vector space of traceless Hermitian matrices, so at most three can occur, the three Pauli measurements realising equality.
Finally, mutually unbiased pairs exist in every dimension, for instance from the standard and Fourier bases.

\textbf{Proof of \cref{thm:qutrit-no-go}.}
We present the proof of non-existence of arbitrary-rank algebraic three-outcome $k$-UMs for $k\geq3$.
The argument has two steps.
First, a pair $(P,Q)$ of MUMs is put into a normal form (similarly to Ref.~\cite{FKN23}).
Second, one outcome of a third measurement is parametrised using its MUM relation with $P$, leading to a contradiction.

\paragraph{Normal form.}
Assume that $P=(P_a)_a$ and $Q=(Q_b)_b$ are three-outcome MUMs on $\mathcal H$, with outcomes in $\mathbb Z_3$.
All outcome projections have the same rank, and so there exists a Hilbert space $\cK$ such that $\cH\simeq\cK\otimes\mathbb C^3$.
From Ref.~\cite{FKN23}, we can write
\begin{equation}
 P_a=\id\otimes\lvert a\rangle\langle a\rvert
 \qquad\text{and}\qquad
 Q_b=\frac13\sum_{a,a'\in\mathbb Z_3}V^{(b)}_{aa'}\otimes\lvert a\rangle\langle a'\rvert,
 \label{eq:PQ-normal_0}
\end{equation}
where $V^{(b)}_{aa'}=M_{ab}M_{a'b}^*$ and $M_{ab}$ are the blocks of
\begin{equation}
 M=\begin{pmatrix}\id&\id&\id\\\id&U&-\id-U\\\id&-\id-U&U\end{pmatrix},
\end{equation}
where $U$ is unitary on $\cK$ such that $-\id-U$ is also a unitary.
This in turn implies $U^2+U+\id=0$, $U^*=U^2$, and $U^3=\id$.
Substituting these into \cref{eq:PQ-normal_0}, we obtain the simplified normal form
\begin{equation}
 P_a=\id\otimes\lvert a\rangle\langle a\rvert
 \qquad\text{and}\qquad
 Q_b=\frac13\sum_{a,a'\in\mathbb Z_3}U^{(a-a')b}\otimes\lvert a\rangle\langle a'\rvert.
\end{equation}

\paragraph{Contradiction.}
Suppose now that $(P,Q,R)$ is a three-outcome algebraic 3-UM.
For a fixed outcome $R_c$, mutual unbiasedness with $P$ gives unitaries $V_a$ on $\cK$ such that $R_c=\frac13\sum_{a,a'\in\mathbb Z_3}V_aV_{a'}^*\otimes\lvert a\rangle\langle a'\rvert$ \cite{FKN23}.
For fixed $b$, set $V_a^{(b)}=U^{-ab}V_a$.
Sandwiching the first equality of \cref{eq:algebraic-three} with $\lvert a\rangle$ and multiplying from the left by $U^{-ab}$ and from the right by $U^{ab}$ gives $\sum_{a'\neq a}(V_{a'}^{(b)}V_a^{(b)*}+V_a^{(b)}V_{a'}^{(b)*})=0$.
For a fixed value of $b$ and with the convention $a<a'$, set $H_{aa'}:=V_a^{(b)}V_{a'}^{(b)*}+V_{a'}^{(b)}V_a^{(b)*}$.
For the three values of $a$, the equations therefore read $H_{01}+H_{02}=H_{01}+H_{12}=H_{02}+H_{12}=0$, which implies $H_{aa'}=0$ for all $a<a'$.
Thus $V_a^{(b)}V_{a'}^{(b)*}$ is skew-adjoint whenever $a\neq a'$.

Write $V_{aa'}=V_aV_{a'}^*$.
For $b=0$ we have $V^{(0)}_a=V_a$ and the above skew-adjointness arguments imply that $V_{10}$, $V_{20}$, and $V_{21}=V_{20}V_{10}^*$ are skew-adjoint.
Using the skew-adjointness of $V_{10},V_{20}$ and $V_{21}$ together with the above skew-adjointness arguments for $b=1$, we obtain $U^2V_{10}=V_{10}U$, $UV_{20}=V_{20}U^2$, and $UV_{21}U=U^2V_{21}U^2$.
Taking adjoints in the first of these identities gives $V_{10}^*U=U^2V_{10}^*$, and the second identity then yields $UV_{21}=UV_{20}V_{10}^*=V_{20}U^2V_{10}^*=V_{20}V_{10}^*U=V_{21}U$.
The last identity therefore reduces to $V_{21}U^2=V_{21}U$.
Since $V_{21}$ is unitary, $U^2=U$, so $U=\id$, contradicting $U^2+U+\id=0$.

\textbf{Proof of \cref{prop:packing}.}
For a spectral $k$-UM, since the $\ell=1$ and $\ell=2$ coefficients in \cref{eq:mu} are respectively $-k$ and $k(k-1)(n-1)/(2n)$, Newton's identities imply that the sum of the squares of the zeros of $\mu_{k,n}$ is $k^2-2\cdot k(k-1)(n-1)/(2n)=k+k(k-1)/n$, which gives $\Tr S_{\vj}^2=r[k+k(k-1)/n]$.
Therefore
\begin{equation}
 0=\Tr\Big(\sum_{x}P_{j_x|x}\Big)^2-r[k+k(k-1)/n]=\sum_x\Tr P_{j_x|x}^2+\sum_{\substack{x,y\\ x\neq y}}\Tr(P_{j_x|x}P_{j_y|y})-kr-k(k-1)\frac{r}{n}=\sum_{\substack{x,y\\ x\neq y}}h_{xy}(j_x,j_y),
 \label{eq:trace-unbiasedness}
\end{equation}
where $h_{xy}(a,b)\coloneqq\Tr(P_{a|x}P_{b|y})-r/n$, and we used that $\sum_x\Tr P_{j_x|x}^2=\sum_x\Tr P_{j_x|x}=kr$.
By completeness, summing \cref{eq:trace-unbiasedness} over every outcome $j_z$ for $z\notin\{x,y\}$ isolates $h_{xy}(j_x,j_y)$ because $\sum_a\left[\Tr(P_{a|x}P_{b|y})-r/n\right]=\Tr(P_{b|y})-r=0$, and therefore we obtain $\Tr(P_{a|x}P_{b|y})=r/n$ for $x\neq y$.
For each $x$, the real span $\operatorname{span}_{\mathbb R}\bigl(P_{a|x}-\id/n:a\in[n]\bigr)$ has dimension $n-1$.
Trace-unbiasedness makes these subspaces pairwise Hilbert--Schmidt orthogonal inside the real space of traceless Hermitian operators on $\mathbb C^{nr}$, whose dimension is $(nr)^2-1$.
This proves \cref{eq:packing-main}.

For an algebraic $k$-UM, the pairwise MUM relations give trace-unbiasedness directly.

\section{Hadamard--Clifford calculations}
\label{app:construction}

This appendix verifies the construction in \cref{thm:construction}; standard references for real Hadamard matrices and complex Clifford modules are~\cite{Hor07,LM89}.
Let $H,\widetilde H,E$ and $P,Q,R$ be as in that theorem, with the Hadamard measurements defined by \cref{eq:hadamard-measurement} and the unitary tuple satisfying \cref{eq:clifford-unitaries}.
We first check that the three measurements are projective and pairwise MUM, then verify the algebraic 3-UM relations and the Clifford realisation.
For brevity, write $V_b=V_b(H)$ and $\widetilde V_c=V_c(\widetilde H,E)$, so that $Q_b=V_bV_b^*$ and $R_c=\widetilde V_c\widetilde V_c^*$.

The Hadamard identities imply $V_b^*V_{b'}=\delta_{bb'}\id$ and $\widetilde V_c^*\widetilde V_{c'}=\delta_{cc'}\id$, so $Q$ and $R$ are PVMs.
Their diagonal blocks give $P_aQ_bP_a=P_a/n$ and $P_aR_cP_a=P_a/n$.
Set
\begin{equation}
 C_{bc}=V_b^*\widetilde V_c=\frac1n\sum_{e\in[n]}h_{eb}\widetilde h_{ec}E_e.
\end{equation}
Expanding either $C_{bc}^*C_{bc}$ or $C_{bc}C_{bc}^*$, the diagonal terms give $\id/n$ and the off-diagonal terms cancel in pairs by \cref{eq:clifford-unitaries}.
Hence $C_{bc}^*C_{bc}=C_{bc}C_{bc}^*=\id/n$.
It follows that $Q_bR_cQ_b=Q_b/n$ and $R_cQ_bR_c=R_c/n$.

Using $P_aV_b=h_{ab}(\lvert a\rangle\otimes\id)/\sqrt n$ and $\widetilde V_c^*P_a=\widetilde h_{ac}E_a^*(\langle a\rvert\otimes\id)/\sqrt n$, we find
\begin{align}
 P_a(Q_bR_c+R_cQ_b)P_a
 &=\frac{h_{ab}\widetilde h_{ac}}{n}\lvert a\rangle\langle a\rvert\otimes(C_{bc}E_a^*+E_aC_{bc}^*)
 =\frac{2}{n^2}P_a.
\end{align}
Compressing the other two anticommutators with $V_b^*$ and $\widetilde V_c^*$ gives the same scalar $2/n^2$ by the two identities in \cref{eq:clifford-unitaries}.
Thus the triple satisfies all relations in \cref{eq:algebraic-three} and is an algebraic 3-UM.

Recall from \cref{sec:kums} that a $(k,\ell)$-UM is a $k$-tuple whose every sub-$\ell$-tuple is an $\ell$-UM, with $2\leq\ell\leq k$.
The four-outcome construction above admits a simple extension to a four-outcome $(4,3)$-UM on $\mathbb C^8$.
Let $H_4$ and $E$ be as in \cref{eq:four-outcome-3um}, and define $F=(\id_2,-\mathrm i\sigma_y,\mathrm i\sigma_z,\mathrm i\sigma_x)$.
On $\mathbb C^4\otimes\mathbb C^2$, consider the four rank-two PVMs
\begin{equation}
 P_a=\lvert a\rangle\langle a\rvert\otimes\id_2,
 \qquad
 Q=M(H_4),
 \qquad
 R=M(H_4,E),
 \qquad
 T=M(H_4,F).
\end{equation}
Every sub-3-tuple of $(P,Q,R,T)$ is a 3-UM.
Indeed, under the complex representation $1\mapsto\id_2$, $\mathbf i\mapsto\mathrm i\sigma_x$, $\mathbf j\mapsto-\mathrm i\sigma_y$, and $\mathbf k\mapsto\mathrm i\sigma_z$, the tuples $E$, $F$, and $(E_a^*F_a)_{a\in[4]}$ correspond respectively to $(1,\mathbf i,\mathbf j,\mathbf k)$, $(1,\mathbf j,\mathbf k,\mathbf i)$, and $(1,-\mathbf k,-\mathbf i,-\mathbf j)$.
Each satisfies the same two-sided Clifford relations, so \cref{thm:construction} applies to all sub-3-tuple except $(Q,R,T)$, which can be easily checked to be a 3-UM.
Thus $(P,Q,R,T)$ is a four-outcome rank-two $(4,3)$-UM in dimension eight.
It is, however, neither an algebraic nor a spectral 4-UM: using the 3-UM relations of its subtuples, the algebraic 4-UM relations would require $\Tr(S_{abce}^4)=257/4$ for every four-setting selected sum $S_{abce}=P_a+Q_b+R_c+T_e$.
The spectral target polynomial imposes the same fourth moment, whereas the construction gives the two exact values 64 and 65, depending on the outcome string.

\section{Selected-sum bounds and rigidity}
\label{app:spectral}

This appendix proves the selected-sum statements used in \cref{sec:fixed,sec:universal} and the arbitrary-rank triple rigidity invoked in \cref{app:relations}.
The rank-one and arbitrary-rank arguments are conceptually different.
At rank one, an elementary average-characteristic-polynomial identity identifies $\mu_{k,n}$ directly; after establishing its root structure, the selected-sum bound follows.
At arbitrary rank, we instead introduce a harmonic argument to reduce the problem for $k$ measurements to a trace inequality for $k-1$ measurements.
Pairs and dichotomic tuples admit direct arguments, while the harmonic reduction combined with a low-degree polynomial remainder settles arbitrary triples.
The general arbitrary-rank statement remains open.

\subsection{Rank-one selected-sum bound and polynomial structure}
\label{app:rank-one}

We begin with the average-characteristic-polynomial identity stated in \cref{eq:mu}: for arbitrary rank-one projective measurements, the uniform average of the characteristic polynomials of the selected sums is exactly $\mu_{k,n}$.
We give an elementary proof of this fact, then establish the root properties of $\mu_{k,n}$.
These ingredients make the proof of \cref{thm:rank-one-spectral} immediate.

Let $(\lvert\varphi_a^x\rangle)_{a\in[n]}$ be an orthonormal basis of $\mathbb C^n$ for each setting $x\in[k]$; let $P_{a|x}=\lvert\varphi_a^x\rangle\langle\varphi_a^x\rvert$, and write $S_{\vj}=\sum_xP_{j_x|x}$.
For a fixed outcome string $\vj$, we recall \cref{eq:cauchy-binet}:
\begin{equation}
 \det(t\id-S_{\vj})
 =\sum_{\ell=0}^{\min\{k,n\}}(-1)^{\ell}t^{n-\ell}
 \sum_{\substack{J\subseteq[k]\\\abs{J}=\ell}}
 \det\bigl(\langle\varphi_{j_x}^x|\varphi_{j_y}^y\rangle\bigr)_{x,y\in J}.
 \label{eq:gram-char}
\end{equation}
Thus it remains to compute the uniform average of each Gram determinant.

Fix $J=\{x_1,\ldots,x_\ell\}$ and average successively over the outcomes of these settings.
Suppose first that the vectors selected from $x_1,\ldots,x_{j-1}$ are linearly independent, and let $\mathcal E_{j-1}$ be their span.
Geometrically, a Gram determinant is the squared volume spanned by the corresponding vectors.
When a new vector $\lvert\varphi_{a}^{x_j}\rangle$ is added, this squared volume is multiplied by the squared norm of its component orthogonal to the previous span, namely $\|(\id-\Pi_{\mathcal E_{j-1}})\lvert\varphi_{a}^{x_j}\rangle\|^2$, where $\Pi_{\mathcal E_{j-1}}$ is the orthogonal projector onto $\mathcal E_{j-1}$.
The uniform average of this factor over the $n$ vectors of the $x_j$th basis is
\begin{equation}
 \frac1n\sum_{a=1}^n\|(\id-\Pi_{\mathcal E_{j-1}})\lvert\varphi_{a}^{x_j}\rangle\|^2
 =\frac1n\Tr(\id-\Pi_{\mathcal E_{j-1}})
 =\frac{n-j+1}{n},
 \label{eq:gram-step-average}
\end{equation}
because $\dim\mathcal E_{j-1}=j-1$.
If the previously selected vectors are linearly dependent, their Gram determinant is already zero and remains zero after adjoining another vector.
Iterating \cref{eq:gram-step-average} therefore gives, for a subset $I$ of size $m$,
\begin{equation}
 \frac1{n^m}\sum_{(j_x)_{x\in I}}
 \det\bigl(\langle\varphi_{j_x}^{x}|\varphi_{j_y}^{y}\rangle\bigr)_{x,y\in I}
 =\prod_{j=1}^m\frac{n-j+1}{n}
 =\frac{n!}{(n-m)!n^m}.
 \label{eq:average-gram}
\end{equation}
There are $\binom{k}{\ell}$ subsets $J$ of size $\ell$.
Averaging \cref{eq:gram-char} over all $\vj\in[n]^k$ and using \cref{eq:average-gram} and \cref{eq:mu} therefore yields
\begin{equation}
 \frac1{n^k}\sum_{\vj\in[n]^k}\det(t\id-S_{\vj})
 =\sum_{\ell=0}^{\min\{k,n\}}\frac{(-1)^{\ell}\ell!}{n^{\ell}}\binom{k}{\ell}\binom{n}{\ell}t^{n-\ell}
 =\mu_{k,n}(t).
 \label{eq:average-char}
\end{equation}
This identity is also the rank-one specialisation of the mixed-characteristic-polynomial formula of~\cite[Theorem~4.1]{MSS15}.

We next justify the root properties of $\mu_{k,n}$ quoted after \cref{eq:mu}.
From the definition in \cref{eq:mu},
\begin{equation}
 \mu_{\ell+1,n}=\mu_{\ell,n}-\frac1n\mu'_{\ell,n}.
 \label{eq:mu-recursion}
\end{equation}
We prove inductively that, for $\ell<n$, $\mu_{\ell,n}$ has a zero of multiplicity $n-\ell$ at the origin and $\ell$ positive simple zeros, while for $\ell\geq n$ it has $n$ positive simple zeros.
The claim is immediate for $\ell=1$ since $\mu_{1,n}(t)=t^{n-1}(t-1)$.
Suppose it holds for some $\ell\geq1$, write the positive zeros as $0<\rho_1<\cdots<\rho_s$ with $s=\min\{\ell,n\}$, and set $\nu=\max\{n-\ell,0\}$, so that we have $\mu_{\ell,n}(t)=t^\nu\prod_{i=1}^s(t-\rho_i)$.
For $t$ different from $0,\rho_1,\ldots,\rho_s$, dividing \cref{eq:mu-recursion} by $\mu_{\ell,n}(t)$ then shows that $\mu_{\ell+1,n}(t)=0$ exactly when
\begin{equation}
 n=\frac{\mu'_{\ell,n}(t)}{\mu_{\ell,n}(t)}
 =\frac{\nu}{t}+\sum_{i=1}^s\frac1{t-\rho_i}.
 \label{eq:mu-log-derivative}
\end{equation}
The right-hand side is strictly decreasing on every interval between consecutive poles.
If $\nu>0$, it decreases from $+\infty$ to $-\infty$ on $(0,\rho_1)$ and on each $(\rho_i,\rho_{i+1})$, and from $+\infty$ to 0 on $(\rho_s,\infty)$; it is negative on $(-\infty,0)$.
Thus \cref{eq:mu-log-derivative} has exactly one solution in each of these positive intervals, while \cref{eq:mu-recursion} reduces the multiplicity of the zero at the origin by one, as it shows that $\mu_{\ell+1,n}$ is divisible by $t^{\nu-1}$ but not by $t^\nu$.
If $\nu=0$, there is exactly one solution in each $(\rho_i,\rho_{i+1})$ and one in $(\rho_s,\infty)$, and none below $\rho_1$.
This completes the induction: all zeros are real and non-negative and all positive zeros are simple.


With the polynomial preliminaries in place, \cref{thm:rank-one-spectral} follows immediately.
Suppose, for contradiction, that $\|S_{\vj}\|_\infty<\lambda_{k,n}$ for every outcome string.
Since each $S_{\vj}$ is positive semidefinite, every factor in $\det(\lambda_{k,n}\id-S_{\vj})$ is then strictly positive.
The left-hand side of \cref{eq:average-char} evaluated at $t=\lambda_{k,n}$ would therefore be positive, whereas the right-hand side is $\mu_{k,n}(\lambda_{k,n})=0$.
Hence some selected sum satisfies $\|S_{\vj}\|_\infty\geq\lambda_{k,n}$.
For a rank-one spectral $k$-UM, \cref{eq:spectral} fixes the characteristic polynomial of every selected sum to $\mu_{k,n}$, so every selected sum has norm $\lambda_{k,n}$ and equality is attained.

\subsection{Elementary arbitrary-rank cases}

Two arbitrary-rank regimes admit direct proofs and provide useful benchmarks.
For pairs and, separately, dichotomic $k$-tuples, we will show that
\begin{equation}
 \max_{a,b}\|P_a+Q_b\|_\infty\geq1+\frac1{\sqrt{n}}=\lambda_{2,n}
 \qquad\text{and}\qquad
 \max_{\vj\in\{\pm1\}^k}\Big\|\sum_xP_{j_x|x}\Big\|_\infty
 \geq\frac{k+\sqrt k}{2}=\lambda_{k,2}.
 \label{eq:pair-binary-selected-bounds}
\end{equation}

For the first inequality, let $P=(P_a)_a$ and $Q=(Q_b)_b$ be $n$-outcome PVMs (without zero outcomes for simplicity).
Completeness and the triangle inequality give $1=\|P_a\|_\infty=\|\sum_bP_aQ_bP_a\|_\infty\leq\sum_b\|P_aQ_bP_a\|_\infty$.
Therefore one of the terms $\|P_aQ_bP_a\|_\infty$ must be at least $1/n$.
For two non-zero projections we have $\|P_a+Q_b\|_\infty=1+\sqrt{\|P_aQ_bP_a\|_\infty}$ \cite{DT76}, which then gives the inequality in \cref{eq:pair-binary-selected-bounds}.
If we have equality, this means $\|P_aQ_bP_a\|_\infty\leq1/n$ for all $a,b$, which in turn implies $P_aQ_bP_a\preceq P_a/n$.
But completeness gives $\sum_bP_aQ_bP_a=P_a=\sum_bP_a/n$, and so each positive semidefinite operator $P_a/n-P_aQ_bP_a$ must vanish.
A symmetric argument shows $Q_bP_aQ_b=Q_b/n$, which identifies a pair of MUMs.

For the second inequality, consider $k$ dichotomic PVMs and write $P_{a|x}=(\id+aA_x)/2$ for $a\in\{\pm1\}$ and write $B_{\vj}:=\sum_xj_xA_x$, so $\sum_xP_{j_x|x}=(k\id+B_{\vj})/2$.
Assume, by contradiction, that $\max_{\vj\in\{\pm1\}^k}\|\sum_xP_{j_x|x}\|_\infty<(k+\sqrt k)/2=\lambda_{k,2}$, which implies $(k\id+B_{\vj})/2<(k+\sqrt{k})/2\id$, or equivalently, $B_{\vj}<\sqrt{k}\id$, for all $\vj$.
The same argument for $-\vj$ gives $\|\sum_xP_{-j_x|x}\|_\infty<(k+\sqrt k)/2$, and since $\sum_xP_{-j_x|x}=(k\id-B_{\vj})/2$, the same argument implies $(k\id-B_{\vj})/2<(k+\sqrt{k})\id/2$ and so $-B_{\vj}<\sqrt{k}\id$.
Altogether, we have $\|B_{\vj}\|_{\infty}<\sqrt k$, which implies $B_{\vj}^2<k\id$.
Writing $B_{\vj}^2=\sum_{x,y}j_xj_yA_xA_y$ and averaging over $\vj$, we obtain $2^{-k}\sum_{\vj}B_{\vj}^2=2^{-k}\sum_{\vj}\sum_{x,y}j_xj_yA_xA_y=\sum_{x,y}2^{-k}\sum_{\vj}j_xj_yA_xA_y=\sum_xA_x^2=k\id$ because $2^{-k}\sum_{\vj}j_xj_y=\delta_{xy}$ and because $A_x^2=\id$.
So the average of $B_{\vj}^2$ over $\vj$ is $k\id$, which leads to a contradiction with $B_{\vj}^2<k\id$ for all $\vj$, and thus the inequality in \cref{eq:pair-binary-selected-bounds} must hold.
If equality holds in this inequality, the same argument gives $B_{\vj}^2\preceq k\id$ for every $\vj$.
The same argument also shows that the average of $B_{\vj}^2-k\id$ vanishes, and so we must have $B_{\vj}^2=k\id$ for all $\vj$.
Multiplying the resulting identity by $j_uj_v$ and summing over all sign strings isolates $A_uA_v+A_vA_u$, so equality is equivalent to pairwise anticommutation.

\subsection{Harmonic reduction for arbitrary rank}

The determinant argument in \cref{app:rank-one} above is specific to rank one and does not extend directly to higher-rank outcome projectors.
Instead, we use a different argument, based on the harmonic mean of the residual selected sums defined below.
For an $\ell$-tuple $\mathsf Q=(Q_{a|x})_{a\in[n],x\in[\ell]}$ of $n$-outcome PVMs, write $\widetilde S_{\vj}=\sum_{x=1}^{\ell}Q_{j_x|x}$ for its selected sums.
Whenever $\widetilde S_{\vj}\preceq\lambda\id$ for every $\vj\in[n]^\ell$, define the residual harmonic mean
\begin{equation}
 H_\lambda(\mathsf Q)
 =\left[\frac1{n^\ell}\sum_{\vj\in[n]^\ell}\left(\lambda\id-\widetilde S_{\vj}\right)^{-1}\right]^{-1}.
 \label{eq:residual-harmonic}
\end{equation}
When the measurements are listed explicitly, we use the shorthand $H_\lambda(Q^{(1)},\ldots,Q^{(\ell)})$.
At boundary points, where one of the residuals $\lambda\id-\widetilde S_{\vj}$ may be singular, this expression is defined by the monotone limit obtained by adding a common $\varepsilon\id$ to all residuals and letting $\varepsilon$ decrease to zero.
We use this standard regularisation tacitly and do not display it below~\cite{KA80,Bha97}.

\begin{lemma}[Harmonic reduction]
 \label{thm:harmonic-reduction}
 Fix $k,n\geq2$ and suppose that every $(k-1)$-tuple $\mathsf Q$ of $n$-outcome PVMs on a $d$-dimensional Hilbert space satisfying $\widetilde S_{\vj}\preceq\lambda_{k,n}\id$ obeys
 \begin{equation}
 \Tr H_{\lambda_{k,n}}(\mathsf Q)\leq d.
 \label{eq:general-harmonic-bound}
 \end{equation}
 Then every $k$-tuple of $n$-outcome PVMs satisfies $\max_{\vj}\|S_{\vj}\|_\infty\geq\lambda_{k,n}$.
\end{lemma}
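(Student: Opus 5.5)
The plan is to argue by contradiction, peeling off the last measurement and converting the norm bound on the selected sums into a trace inequality for the harmonic mean of the residuals of the first $k-1$ measurements. Suppose that a $k$-tuple $\mathsf P$ of $n$-outcome PVMs on $\mathbb C^d$ satisfies $\|S_{\vj}\|_\infty<\lambda\coloneqq\lambda_{k,n}$ for every $\vj\in[n]^k$. Let $\mathsf Q$ be the $(k-1)$-tuple formed by the first $k-1$ measurements. Write $\vj=(\vj',a)$ with $\vj'\in[n]^{k-1}$ and $a\in[n]$, so that $S_{\vj}=\widetilde S_{\vj'}+P_{a|k}$. Since $P_{a|k}\succeq0$, we get $\widetilde S_{\vj'}\preceq S_{\vj}\prec\lambda\id$. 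Hence $\mathsf Q$ satisfies the hypothesis of the lemma, and moreover every residual $X_{\vj'}\coloneqq\lambda\id-\widetilde S_{\vj'}$ is strictly positive. In particular $H_\lambda(\mathsf Q)$ is defined without any regularisation, and the assumption \cref{eq:general-harmonic-bound} gives $\Tr H_\lambda(\mathsf Q)\leq d$.

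The key step is to translate $X_{\vj'}\succ P_{a|k}$ into a statement about inverses. For an invertible $X\succ0$ and a projection $P$, the condition $X\succ P$ is equivalent to $X^{-1/2}PX^{-1/2}\prec\id$. This in turn says $\|PX^{-1}P\|_\infty<1$, that is, $PX^{-1}P\prec P$ on the range of $P$. Applying this to each pair $(\vj',a)$ and averaging over $\vj'$ should yield
\begin{equation}
 P_{a|k}\,H_\lambda(\mathsf Q)^{-1}P_{a|k}\prec P_{a|k}\quad\text{on the range of }P_{a|k},
\end{equation}
because $H_\lambda(\mathsf Q)^{-1}=n^{-(k-1)}\sum_{\vj'}X_{\vj'}^{-1}$ and an average of strict inequalities on a fixed subspace is strict. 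Taking traces gives $\Tr(P_{a|k}H_\lambda(\mathsf Q)^{-1})<\Tr P_{a|k}$ whenever $P_{a|k}\neq0$, and equality $0=0$ otherwise. Summing over $a$ and using completeness then gives $\Tr H_\lambda(\mathsf Q)^{-1}<d$. The inequality stays strict because at least one outcome is non-zero.

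To conclude, I would compare the traces of $H\coloneqq H_\lambda(\mathsf Q)\succ0$ and $H^{-1}$ via the arithmetic--harmonic mean inequality (Cauchy--Schwarz) on the eigenvalues of $H$, namely $\Tr H\cdot\Tr H^{-1}\geq d^2$. Combined with $\Tr H\leq d$, this forces $\Tr H^{-1}\geq d$, which contradicts the strict inequality above. Hence some selected sum satisfies $\|S_{\vj}\|_\infty\geq\lambda_{k,n}$. The only delicate point I expect is the equivalence $X\succ P\iff PX^{-1}P\prec P$ (on the range of $P$), together with keeping the strictness through the averaging. Assuming strict inequality at the start avoids all boundary and regularisation issues, since every residual is then invertible. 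Everything else is bookkeeping.
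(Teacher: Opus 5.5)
Your argument is correct, and it differs from the paper's in the key step. The paper extracts a uniform margin $\delta>0$ with $\lambda_{k,n}\id-\widetilde S_{\vj}\succeq Q_{a|1}+\delta\id$. It then invokes order-monotonicity of the operator harmonic mean (invert, average, invert) to obtain the operator inequality $H_{\lambda_{k,n}}(\mathsf Q)\succeq Q_{a|1}+\delta\id$, and hence $\Tr H_{\lambda_{k,n}}(\mathsf Q)\geq(1+\delta)d$. The margin is needed there because the singular projection has to be inverted.

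You never invert the projection. The congruence $X-P\succ0\iff\id-X^{-1/2}PX^{-1/2}\succ0$, together with $\|X^{-1/2}PX^{-1/2}\|_\infty=\|PX^{-1}P\|_\infty$, settles the ``delicate point'' you flagged. It makes each constraint linear in $X^{-1}$, so the constraint survives the plain average of resolvents, and a scalar AM--HM inequality finishes the argument.

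The paper's route gives an operator statement about $H$ itself, which it reuses with $\delta=0$ in the equality analysis of \cref{thm:triple-rigidity}. Your route avoids matrix-mean monotonicity and any regularisation. Its intermediate conclusion $\Tr H_{\lambda_{k,n}}(\mathsf Q)^{-1}<d$ implies $\Tr H_{\lambda_{k,n}}(\mathsf Q)>d$, but not conversely. So the lemma already holds if one only assumes that the mean resolvent $n^{1-k}\sum_{\vj}(\lambda_{k,n}\id-\widetilde S_{\vj})^{-1}$ has trace at least $d$.

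In \cref{lem:pair-harmonic} that weaker hypothesis is immediate from the polynomial minorant, since $\Tr(\id+\kappa_{2,n}\Delta^{\mathrm{cyc}}_{Q,R})=d+\kappa_{2,n}\Tr\Delta^{\mathrm{sq}}_{Q,R}\geq d$. The second-order expansion of the inverse and the quantitative bounds on $\kappa_{2,n}$ would then no longer be needed. The non-strict form of your step, $X\succeq P\Rightarrow PX^{-1}P\preceq P$, also handles the equality case there, because $X\succeq P_a$ for all $a$ forces $X\succeq\id/n$.
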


\begin{proof}
 Assume by contradiction that every selected norm of a $k$-tuple is strictly below $\lambda_{k,n}$.
 Fixing an outcome $a$ of the first measurement, we have $S_{\vj}=Q_{a|1}+\widetilde S_{\vj}$, and by the assumption we have $\|Q_{a|1}+\widetilde S_{\vj}\|_\infty<\lambda_{k,n}$ for all $a,\vj$, which also implies $Q_{a|1}+\widetilde S_{\vj}\prec\lambda_{k,n}\id$ for all $a,\vj$.
 In particular, there exists $\delta>0$ such that $\lambda_{k,n}\id-\widetilde S_{\vj}\succeq Q_{a|1}+\delta\id$ for all $a,\vj$.
 The harmonic mean preserves the semidefinite order, as it's a composition of an inverse (which reverses the order), an average (which preserves the order) and another inverse (which reverses the order).
 Thus, we have $H_{\lambda_{k,n}}(\mathsf Q)\succeq Q_{a|1}+\delta\id$.
 Compressing by $Q_{a|1}$, taking traces, and summing over $a$ yields $\Tr H_{\lambda_{k,n}}(\mathsf Q)\geq(1+\delta)d$, contradicting \cref{eq:general-harmonic-bound}.
\end{proof}

\subsection{Arbitrary-rank triples}

The proof has two logically distinct parts.
First, we prove the residual harmonic bound \cref{eq:general-harmonic-bound} for two PVMs.
Second, if a triple attains the bound, equality in the harmonic estimate forces every residual pair to be a pair of MUMs; a $3\times3$ block-Gram argument then forces the algebraic 3-UM relations.

Let $Q=(Q_b)_b$ and $R=(R_c)_c$ be $n$-outcome PVMs on a $d$-dimensional Hilbert space.
Define
\begin{equation}
 \begin{gathered}
 C^{\mathrm{cyc}}_{Q,R}\coloneqq\sum_{b,c}Q_bR_cQ_bR_c,
 \qquad
 \Delta^Q_{bc}\coloneqq Q_bR_cQ_b-\frac1nQ_b,
 \qquad
 \Delta^R_{bc}\coloneqq R_cQ_bR_c-\frac1nR_c,\\
 \Delta^{\mathrm{cyc}}_{Q,R}
 \coloneqq C^{\mathrm{cyc}}_{Q,R}+C^{\mathrm{cyc}*}_{Q,R}-\frac2n\id
 \qquad\text{and}\qquad
 \Delta^{\mathrm{sq}}_{Q,R}
 \coloneqq\sum_{b,c}(\Delta^Q_{bc})^2+\sum_{b,c}(\Delta^R_{bc})^2.
 \end{gathered}
 \label{eq:pair-defects}
\end{equation}
Subtracting $\id/n$ from $C^{\mathrm{cyc}}_{Q,R}$ gives $C^{\mathrm{cyc}}_{Q,R}-\id/n=\sum_{b,c}\Delta^Q_{bc}R_c=\sum_{b,c}Q_b\Delta^R_{bc}$.
To make the operator estimate explicit, put $D=C^{\mathrm{cyc}}_{Q,R}-\id/n$ and $X_c=\sum_b\Delta^Q_{bc}$.
Orthogonality of the $R_c$ gives $DD^*=\sum_cX_cR_cX_c\preceq\sum_cX_c^2=\sum_{b,c}(\Delta^Q_{bc})^2$.
The second expression, $D=\sum_{b,c}Q_b\Delta^R_{bc}$, similarly gives $D^*D\preceq\sum_{b,c}(\Delta^R_{bc})^2$.
The identity valid for every operator $2(DD^*+D^*D)-(D+D^*)^2=(D-D^*)(D-D^*)^*\succeq0$ shows that $(D+D^*)^2\preceq2(DD^*+D^*D)$.
This together with direct expansion and cyclicity of the trace give
\begin{equation}
 (\Delta^{\mathrm{cyc}}_{Q,R})^2\preceq2\Delta^{\mathrm{sq}}_{Q,R},
 \qquad\text{and}\qquad
 \Tr\Delta^{\mathrm{cyc}}_{Q,R}=\Tr\Delta^{\mathrm{sq}}_{Q,R}\geq0,
 \label{eq:pair-defect-bounds}
\end{equation}
where the last inequality follows from $\Delta^{\mathrm{sq}}_{Q,R}\succeq0$.
Moreover, this inequality is an equality if and only if all $\Delta^Q_{bc}$ and $\Delta^R_{bc}$ vanish, namely if and only if $Q$ and $R$ are MUMs.
We also require a spectral bound on $\Delta^{\mathrm{cyc}}_{Q,R}$.
To obtain it, set $E_c=\sum_bQ_bR_cQ_b$.
Then $0\preceq E_c\preceq\id$, $\sum_cE_c=\id$, and $C^{\mathrm{cyc}}_{Q,R}=\sum_cE_cR_c$.
Thus $C^{\mathrm{cyc}}_{Q,R}C^{\mathrm{cyc}*}_{Q,R}=\sum_cE_cR_cE_c\preceq\sum_cE_c^2\preceq\id$.
Therefore, $-2\id\preceq C^{\mathrm{cyc}}_{Q,R}+C^{\mathrm{cyc}*}_{Q,R}\preceq2\id$ and thus
\begin{equation}
 -2\left(1+\frac1n\right)\id
 \preceq\Delta^{\mathrm{cyc}}_{Q,R}\preceq
 2\left(1-\frac1n\right)\id.
 \label{eq:pair-defect-lower}
\end{equation}

\begin{lemma}
 \label{lem:pair-harmonic}
 Suppose that a pair of PVMs $(Q_b)_b$ and $(R_c)_c$ on a $d$-dimensional Hilbert space satisfy $Q_b+R_c\preceq\lambda_{3,n}\id$ for all $b,c$.
 Then $\Tr H_{\lambda_{3,n}}(Q,R)\leq d$, with equality if and only if $Q$ and $R$ are a pair of MUMs.
\end{lemma}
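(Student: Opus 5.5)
The plan is to bound the inverse harmonic mean from below by an average of a low-degree polynomial in the selected sums. Its moments can be computed from the pair data, and the pair-defect estimates \cref{eq:pair-defect-bounds,eq:pair-defect-lower} then control the remainder. Write $\lambda=\lambda_{3,n}$ and $S_{bc}=Q_b+R_c$, so $0\preceq S_{bc}\preceq\lambda\id$. Every residual $Y_{bc}=\lambda\id-S_{bc}$ then has spectrum in $[0,\lambda]$; the regularisation $\varepsilon\id$ handles the endpoint $0$.

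\textbf{Polynomial minorant.} First choose a real polynomial $p$ of low degree (degree two or three in $s$) such that $(\lambda-s)^{-1}\geq p(s)$ for all $s\in[0,\lambda)$. It should touch $(\lambda-s)^{-1}$ exactly at the non-zero eigenvalues $1\pm1/\sqrt n$ of a MUM selected sum and agree at $s=0$. This is the Hermite/Lagrange interpolation idea mentioned for \cref{app:primal}. Because $p(S_{bc})$ and $Y_{bc}^{-1}$ commute, the functional calculus gives $Y_{bc}^{-1}\succeq p(S_{bc})$. Averaging, $H^{-1}\succeq n^{-2}\sum_{b,c}p(S_{bc})$.

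\textbf{Moments.} Next compute the averages $n^{-2}\sum_{b,c}S_{bc}^m$ for $m\leq3$ using completeness and projectivity. One finds $m=1$ gives $\tfrac2n\id$ and $m=2$ gives $\tfrac2n\id+\tfrac2{n^2}\id$. For $m=3$, the mixed terms $\sum Q_bR_cQ_b$ collapse to $\id$-multiples, and the first genuinely pair-dependent quantity is $C^{\mathrm{cyc}}_{Q,R}$. A degree-four term, if needed, also contributes $C^{\mathrm{cyc}}_{Q,R}+C^{\mathrm{cyc}*}_{Q,R}$. Hence the average takes the form $\alpha\id+\kappa\Delta^{\mathrm{cyc}}_{Q,R}$. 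The constants $\alpha$ and $\kappa$ come from the coefficients of $p$, and interpolation at the MUM spectrum forces $\alpha=1$, since for MUMs $\Delta^{\mathrm{cyc}}=0$. The sign of $\kappa$ must be checked and is expected to be positive.

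\textbf{Trace estimate.} Monotonicity of inversion gives $H\preceq(\id+\kappa\Delta^{\mathrm{cyc}})^{-1}$. The range bound \cref{eq:pair-defect-lower} keeps $\id+\kappa\Delta^{\mathrm{cyc}}$ invertible. On that bounded range, use a scalar bound $(1+\kappa x)^{-1}\leq1-\kappa x+c\,x^2$. Taking traces and applying \cref{eq:pair-defect-bounds},
\[
\Tr H\leq d-\kappa\Tr\Delta^{\mathrm{sq}}_{Q,R}+2c\Tr\Delta^{\mathrm{sq}}_{Q,R}.
\]
This is at most $d$ provided $\kappa\geq2c$, with strict inequality unless $\Tr\Delta^{\mathrm{sq}}=0$, i.e.\ unless $Q,R$ are MUMs. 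Conversely, for MUMs every $S_{bc}$ has spectrum $\{0,1\pm1/\sqrt n\}$ by \cref{eq:mub-pair-char}, with multiplicities fixed by the rank. A direct evaluation of $n^{-2}\sum Y_{bc}^{-1}$ should then give $\id$ exactly, so equality holds.

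\textbf{Main obstacle.} The delicate step is the choice of $p$ together with the constant comparison $\kappa\geq2c$ over the range allowed by \cref{eq:pair-defect-lower}. I would first try the cubic Hermite interpolant in $s$. If the constants fail, I would try the sharper route of bounding the residual via the exact $2\times2$ block Gram structure of two projections. That structure diagonalises $Q_b+R_c$ through principal angles, and one could try to apply the convexity argument angle by angle.
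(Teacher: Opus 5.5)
\textbf{Gap: the degree of the minorant.} Your overall architecture is the paper's: a polynomial minorant of the resolvent, averaged moments, the defect $\Delta^{\mathrm{cyc}}_{Q,R}$, and a trace estimate via $\Tr\Delta^{\mathrm{cyc}}_{Q,R}=\Tr\Delta^{\mathrm{sq}}_{Q,R}$ and $(\Delta^{\mathrm{cyc}}_{Q,R})^2\preceq2\Delta^{\mathrm{sq}}_{Q,R}$. But the concrete choice you commit to, a minorant $p$ of degree two or three, cannot prove the lemma for $n\geq3$.

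Every average $n^{-2}\sum_{b,c}(Q_b+R_c)^m$ with $m\leq3$ is a universal multiple of $\id$. For $m=3$ it is $(2/n+6/n^2)\id$, because $\sum_{b,c}Q_bR_cQ_b=\sum_{b,c}R_cQ_bR_c=\id$. The first pair-dependent words, $Q_bR_cQ_bR_c$ and $R_cQ_bR_cQ_b$, have length four. Hence $\deg p\leq3$ gives $n^{-2}\sum_{b,c}p(Q_b+R_c)=\alpha\id$ with $\kappa=0$, and only the pair-independent bound $\Tr H_{\lambda_{3,n}}(Q,R)\leq d/\alpha$. Evaluating on a MUM pair, where $n^{-2}\sum_{b,c}(\lambda_{3,n}\id-Q_b-R_c)^{-1}=\id$, shows $\alpha\leq1$. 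Equality requires $p$ to coincide with $(\lambda_{3,n}-t)^{-1}$ on the whole MUM spectrum $\{0,1\pm1/\sqrt n\}$, and $0$ carries multiplicity $(n-2)r>0$. A minorant that is exact at the interior points $1\pm1/\sqrt n$ must be tangent there, so this amounts to five Hermite conditions, whose unique solution has degree exactly four. Concretely, the cubic Hermite interpolant at $s_\pm=1\pm1/\sqrt n$ that you propose to try first satisfies $\lambda_{3,n}^{-1}-p(0)=s_+^2s_-^2/[\lambda_{3,n}(\lambda_{3,n}-s_+)^2(\lambda_{3,n}-s_-)^2]>0$. So $\alpha<1$ and your bound exceeds $d$. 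Your claim that interpolation forces $\alpha=1$ holds only once exactness at $0$ is included.

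\textbf{The fix.} Use the degree-four interpolant with simple contact at $0$ and double contact at $1\pm1/\sqrt n$, as the paper does: $h_{2,n}(t)=[w_n(\lambda_{3,n})-w_n(t)]/[(\lambda_{3,n}-t)w_n(\lambda_{3,n})]$ with $w_n(t)=t((t-1)^2-1/n)^2$. Written this way, the minorant property is immediate from the remainder $w_n(t)/[(\lambda_{3,n}-t)w_n(\lambda_{3,n})]\geq0$ on $[0,\lambda_{3,n})$. Its leading coefficient is $1/w_n(\lambda_{3,n})$, so the degree-four moment yields $\id+\kappa_{2,n}\Delta^{\mathrm{cyc}}_{Q,R}$ with $\kappa_{2,n}=1/(n^2w_n(\lambda_{3,n}))>0$. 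Positivity of $\kappa$ is therefore automatic, and the MUM equality case follows from the same identity, since $\Delta^{\mathrm{cyc}}_{Q,R}=0$ there.

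From there your trace step matches the paper's. Take $c=\kappa_{2,n}^2/[1-2\kappa_{2,n}(1+1/n)]$, which is valid on the range in \cref{eq:pair-defect-lower}. You then need the \emph{strict} inequality $\kappa_{2,n}>2c$, i.e.\ $\kappa_{2,n}(4+2/n)<1$. Strictness matters: your non-strict $\kappa\geq2c$ gives $\Tr H\leq d$ but not $\Tr\Delta^{\mathrm{sq}}_{Q,R}=0$ at equality, so it loses the MUM characterisation. The paper obtains the strict inequality from $n(\lambda_{3,n}-1)^2>3$ and $\lambda_{3,n}>1+1/(2n)$.

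\textbf{Fallback route.} The principal-angle idea is unlikely to help. The Jordan decompositions of the different pairs $(Q_b,R_c)$ are not simultaneous, and the coupling between them is exactly what the average over $b,c$ has to capture.
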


\begin{proof}
 For a pair of MUMs, the possible eigenvalues of $Q_b+R_c$ are $0$ and $1\pm1/\sqrt n$.
 We therefore consider the polynomial with these roots
 \begin{equation}
 w_n(t)=t\left((t-1)^2-\frac1n\right)^2,
 \qquad\text{and define}\qquad
 h_{2,n}(t)=\frac{w_n(\lambda_{3,n})-w_n(t)}{(\lambda_{3,n}-t)w_n(\lambda_{3,n})}.
 \end{equation}
 Note that $w_n$ is non-negative for $t\geq0$.
 The numerator of $h_{2,n}$ is a polynomial divisible by $\lambda_{3,n}-t$ because it vanishes at $t=\lambda_{3,n}$, so $h_{2,n}$ is a polynomial of degree four since $w_n(\lambda_{3,n})-w_n(t)$ has degree five.
 Moreover,
 \begin{equation}
 \frac1{\lambda_{3,n}-t}-h_{2,n}(t)
 =\frac{w_n(t)}{(\lambda_{3,n}-t)w_n(\lambda_{3,n})}
 \geq0
 \qquad\text{for}\qquad0\leq t<\lambda_{3,n}.
 \label{eq:pair-polynomial-minor}
 \end{equation}
 Thus $h_{2,n}$ gives a polynomial lower bound on the resolvent $1/(\lambda_{3,n}-t)$ that is exact on the MUM spectrum.

 To average this polynomial, projectivity and completeness give
 \begin{equation}
 \begin{aligned}
 \frac1{n^2}\sum_{b,c}(Q_b+R_c)
 =\frac2n\id,
 \qquad
 &\frac1{n^2}\sum_{b,c}(Q_b+R_c)^2
 =\left(\frac2n+\frac2{n^2}\right)\id,\\
 \frac1{n^2}\sum_{b,c}(Q_b+R_c)^3
 =\left(\frac2n+\frac6{n^2}\right)\id,
 \qquad
 &\frac1{n^2}\sum_{b,c}(Q_b+R_c)^4
 =\left(\frac2n+\frac{12}{n^2}+\frac2{n^3}\right)\id
 +\frac1{n^2}\Delta^{\mathrm{cyc}}_{Q,R}.
 \end{aligned}
 \end{equation}
 Indeed, after repeated factors are removed, completeness evaluates every sum of words of length at most three.
 At degree four, the only remaining terms are the alternating words $Q_bR_cQ_bR_c$ and $R_cQ_bR_cQ_b$, whose sum gives the cycle term above, see \cref{eq:pair-defects}.
 Substituting these moments into $h_{2,n}$ and using the definition of $\mu_{3,n}$, see \cref{eq:mu3}, yields after a straightforward but tedious calculation
 \begin{equation}
 \frac1{n^2}\sum_{b,c}h_{2,n}(Q_b+R_c)
 =\id+\kappa_{2,n}\Delta^{\mathrm{cyc}}_{Q,R},
 \qquad\text{where}\qquad
 \kappa_{2,n}
 \coloneqq\frac1{n^2w_n(\lambda_{3,n})}
 =\frac1{\lambda_{3,n}[n(\lambda_{3,n}-1)^2-1]^2}.
 \label{eq:pair-polynomial-average}
 \end{equation}

 We next check that the operator on the right is positive definite.
 Recall from \cref{subsec:3um-incompatibility} that the highest root $\lambda_{3,n}$ of $\mu_{3,n}$ satisfies $(\lambda_{3,n}-1)^3-(3/n)(\lambda_{3,n}-1)-2/n^2$.
 Together with $\lambda_{3,n}>1$, this yields $n(\lambda_{3,n}-1)^2>3$ as well as $\lambda_{3,n}>1+1/(2n)$.
 Consequently, $\kappa_{2,n}<1/(4\lambda_{3,n})$ and $\kappa_{2,n}(4+2/n)<1$.
 Together with \cref{eq:pair-defect-lower}, this gives
 \begin{equation}
  \id+\kappa_{2,n}\Delta^{\mathrm{cyc}}_{Q,R}
  \succeq
  \left[1-2\kappa_{2,n}\left(1+\frac1n\right)\right]\id
  \succ0.
  \label{eq:kappa-bound}
 \end{equation}
 Using \cref{eq:pair-polynomial-minor} and \cref{eq:pair-polynomial-average}, we can bound the harmonic mean as
 \begin{equation}
  H_{\lambda_{3,n}}(Q,R)
  =\left[\frac{1}{n^2}\sum_{b,c}\big(\lambda_{3,n}-(Q_b+R_c)\big)^{-1}\right]^{-1}
  \preceq\left[\frac{1}{n^2}\sum_{b,c}h_{2,n}(Q_b+R_c)\right]^{-1}
  =\left(\id+\kappa_{2,n}\Delta^{\mathrm{cyc}}_{Q,R}\right)^{-1},
 \end{equation}
 where the first equality is the definition of the harmonic mean, the inequality is functional calculus on \cref{eq:pair-polynomial-minor} and the last equality is \cref{eq:pair-polynomial-average}.

 To control the trace of this inverse, we use the operator identity $(\id+X)^{-1}=\id-X+X^2(\id+X)^{-1}$ to get
 \begin{equation}
  \left(\id+\kappa_{2,n}\Delta^{\mathrm{cyc}}_{Q,R}\right)^{-1}
  =\id-\kappa_{2,n}\Delta^{\mathrm{cyc}}_{Q,R}+\kappa_{2,n}^2
  \left(\Delta^{\mathrm{cyc}}_{Q,R}\right)^2
  \left(\id+\kappa_{2,n}\Delta^{\mathrm{cyc}}_{Q,R}\right)^{-1}.
 \end{equation}
 The inverse in the last term commutes with $\Delta^{\mathrm{cyc}}_{Q,R}$ and is bounded above by $[1-2\kappa_{2,n}(1+1/n)]^{-1}\id$ by \cref{eq:pair-defect-lower}.
 Taking traces and using \cref{eq:pair-defect-bounds}, we obtain
 \begin{equation*}
  \resizebox{\linewidth}{!}{$\Tr H_{\lambda_{3,n}}(Q,R)-d\leq-\kappa_{2,n}\Tr\Delta^{\mathrm{cyc}}_{Q,R}+\frac{\kappa_{2,n}^2}{1-2\kappa_{2,n}(1+1/n)}\Tr\left[\left(\Delta^{\mathrm{cyc}}_{Q,R}\right)^2\right]\leq-\kappa_{2,n}\left(1-\frac{2\kappa_{2,n}}{1-2\kappa_{2,n}(1+1/n)}\right)\Tr\Delta^{\mathrm{sq}}_{Q,R}\leq0,$}
 \end{equation*}
 where the last inequality also uses that the coefficient multiplying $\Tr\Delta^{\mathrm{sq}}_{Q,R}$ is strictly negative, a consequence of the bounds on $\kappa_{2,n}$ described above \cref{eq:kappa-bound}.
 Equality therefore forces $\Tr\Delta^{\mathrm{sq}}_{Q,R}=0$, so all $\Delta^Q_{bc}$ and $\Delta^R_{bc}$ vanish and $Q,R$ are MUMs.

 Conversely, if $Q,R$ are MUMs, then $\Delta^{\mathrm{cyc}}_{Q,R}=0$ and $w_n(Q_b+R_c)=0$ for every $b,c$.
 Hence the resolvent agrees with $h_{2,n}$ on every selected sum, and \cref{eq:pair-polynomial-average} gives $H_{\lambda_{3,n}}(Q,R)=\id$.
\end{proof}

\begin{theorem}[Arbitrary-rank triple rigidity]
 \label{thm:triple-rigidity}
 Let $P=(P_a)_a$, $Q=(Q_b)_b$, and $R=(R_c)_c$ be $n$-outcome projective measurements on a finite-dimensional Hilbert space.
 Then
 \begin{equation}
  \max_{(a,b,c)\in[n]^3}\|P_a+Q_b+R_c\|_\infty\geq\lambda_{3,n}.
 \end{equation}
 Equality holds if and only if the triple is a 3-UM.
\end{theorem}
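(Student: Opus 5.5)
The inequality follows from \cref{thm:harmonic-reduction} with $k=3$. Its hypothesis is \cref{eq:general-harmonic-bound} for every pair of $n$-outcome PVMs whose selected sums are bounded by $\lambda_{3,n}$, and this is exactly \cref{lem:pair-harmonic}. Hence $\max_{a,b,c}\|P_a+Q_b+R_c\|_\infty\geq\lambda_{3,n}$. For equality, one direction is already available: by \cref{thm:relations}, item~\ref{itm:relations-triples}, every algebraic 3-UM is spectral, so every selected sum has norm $\lambda_{3,n}$. We must not invoke the converse half of that item, since it relies on the present theorem.

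For the converse, assume $\max\|P_a+Q_b+R_c\|_\infty=\lambda_{3,n}$ and rerun the proof of \cref{thm:harmonic-reduction} without strict inequalities. For every $a,b,c$ we have $\lambda_{3,n}\id-(Q_b+R_c)\succeq P_a$. Since the harmonic mean is monotone, $H_{\lambda_{3,n}}(Q,R)\succeq P_a$ for each $a$; here the $\varepsilon$-regularisation is needed when residuals are singular. Compressing by $P_a$ and summing over $a$ gives $\sum_a\Tr(P_aH P_a)\geq d$. Writing $H=H_{\lambda_{3,n}}(Q,R)$, this says $\Tr H\geq d$. \cref{lem:pair-harmonic} gives $\Tr H\leq d$, so equality holds there, and $(Q,R)$ is a pair of MUMs. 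The same argument, with each of the three measurements in turn playing the role of the fixed one, shows that all three pairs are MUMs. In that case the proof of \cref{lem:pair-harmonic} shows $H=\id$. Then $\id\succeq P_a$ holds, and together with $\Tr$-equality we get $P_a(\id-\ldots)P_a$-type saturation. Concretely, $\sum_a\Tr[P_a(H-P_a)P_a]=0$ with $H-P_a\succeq0$ forces $P_a(H-P_a)P_a=0$. This is automatic and carries no information beyond MUM-ness, so the triple relation must come from elsewhere.

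That extra information is the norm condition itself, which we now turn into the algebraic 3-UM relations using the block-Gram picture from the proof of item~\ref{itm:relations-triples}. Fix $a,b,c$. Pairwise MUM-ness allows the normalisation $V_P^*V_Q=V_P^*V_R=\id/\sqrt n$ and $V_Q^*V_R=\Omega/\sqrt n$ with $\Omega$ unitary. The nonzero spectrum of $P_a+Q_b+R_c$ is then that of the Gram matrix \cref{eq:block-gram}, which decomposes over the eigenvalues $e^{\mathrm i\theta}$ of $\Omega$ into $3\times3$ matrices with off-diagonal entries $1/\sqrt n,1/\sqrt n,e^{\mathrm i\theta}/\sqrt n$. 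The characteristic polynomial of such a block is $(t-1)^3-\frac3n(t-1)-\frac2{n^{3/2}}\cos\theta$, so the largest eigenvalue is increasing in $\cos\theta$. It equals $\lambda_{3,n}$ exactly when $\cos\theta=1/\sqrt n$, which is the case $\Omega+\Omega^*=2\id/\sqrt n$. The bound $\|P_a+Q_b+R_c\|_\infty\leq\lambda_{3,n}$ therefore yields only $\Omega+\Omega^*\preceq2\id/\sqrt n$ for every $(a,b,c)$.

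The main obstacle is upgrading these one-sided inequalities to equalities, and we will do it by averaging. Compressing by $V_P$, the quantity $\Omega+\Omega^*$ is $n\,V_P^*(Q_bR_c+R_cQ_b)V_P$ up to the fixed unitary rotations. The inequality thus reads $P_a(Q_bR_c+R_cQ_b)P_a\preceq\frac2{n^2}P_a$. Summing over $b,c$ and using completeness gives $\sum_{b,c}P_a(Q_bR_c+R_cQ_b)P_a=2P_a=\sum_{b,c}\frac2{n^2}P_a$. A sum of positive semidefinite gaps that vanishes forces each gap to vanish, which gives the first relation in \cref{eq:algebraic-three}; the other two follow by symmetry. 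The point requiring care is checking that the unitary gauge choice $U_P,U_Q,U_R$ maps $\Omega+\Omega^*$ exactly to $n^2V_P^*(Q_bR_c+R_cQ_b)V_P$, up to the normalisation constant, so that the operator inequality transfers faithfully.
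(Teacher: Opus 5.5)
Your proposal is correct and follows the paper's proof essentially step by step. The bound comes from the harmonic reduction plus the pair lemma, and monotonicity of $H_{\lambda_{3,n}}(Q,R)$ forces pairwise MUMs. Monotonicity of the largest block-Gram eigenvalue in $\cos\theta$ then gives $\Omega+\Omega^*\preceq 2\id/\sqrt n$, which averaging over $b,c$ upgrades to equality. Your per-triple gauge is harmless because the compressed inequality is unitarily invariant, and the digression about $H=\id$ can simply be dropped. One small slip: the normalisation is $V_P^*(Q_bR_c+R_cQ_b)V_P=(\Omega+\Omega^*)/n^{3/2}$, not a factor $n$ or $n^2$. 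The inequality $P_a(Q_bR_c+R_cQ_b)P_a\preceq\frac{2}{n^2}P_a$ that you then write down is nonetheless the correct one.
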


\begin{proof}
 The lower bound follows from \cref{thm:harmonic-reduction,lem:pair-harmonic}.
 Suppose now that equality holds, so $P_a+Q_b+R_c\preceq\lambda_{3,n}\id$ for every $a,b,c$.
 For fixed $a$, order monotonicity of the harmonic mean gives $H_{\lambda_{3,n}}(Q,R)\succeq P_a$.
 Compressing by $P_a$, taking traces, and summing over $a$ yields $\Tr H_{\lambda_{3,n}}(Q,R)\geq d$.
 By \cref{lem:pair-harmonic}, equality holds and $Q,R$ are MUMs.
 Cycling the fixed measurement shows that the triple is pairwise MUM.

 Fix $a$ and an isometry $V_a$ onto $\operatorname{ran}P_a$.
 The pairwise MUM relations allow us to choose, coherently for all $b,c$, isometries $W_b$ and $Z_c$ onto $\operatorname{ran}Q_b$ and $\operatorname{ran}R_c$ such that $V_a^*W_b=V_a^*Z_c=\id/\sqrt n$.
 Writing $W_b^*Z_c=\Omega_{abc}/\sqrt n$ defines a unitary $\Omega_{abc}$, and the block Gram matrix of $V_a,W_b,Z_c$ has the form of \cref{eq:block-gram}.
 For an eigenvalue $e^{\mathrm i\theta}$ of $\Omega_{abc}$, the corresponding scalar block has characteristic polynomial $t^3-3t^2+3(1-n^{-1})t-[1-3n^{-1}+2n^{-3/2}\cos\theta]$.
 Its largest root is strictly increasing in $\cos\theta$ and equals $\lambda_{3,n}$ when $\cos\theta=1/\sqrt n$.
 The upper bound on the selected-sum spectrum therefore implies $\Omega_{abc}+\Omega_{abc}^*\preceq2\id/\sqrt n$.
 Moreover, $V_a^*(Q_bR_c+R_cQ_b)V_a=(\Omega_{abc}+\Omega_{abc}^*)/n^{3/2}$, so summing over $b,c$ and using completeness gives
 \begin{equation}
  \frac1{n^2}\sum_{b,c}(\Omega_{abc}+\Omega_{abc}^*)=\frac2{\sqrt n}\id.
 \end{equation}
 The positive semidefinite operators $2\id/\sqrt n-\Omega_{abc}-\Omega_{abc}^*$ have zero average and hence vanish individually.
 Multiplying by $V_a$ from the left and $V_a^*$ from the right gives precisely \cref{eq:algebraic-three} with $P_a$ fixed; cycling the fixed measurement gives the other two families of relations, which concludes.
 Conversely, the block-Gram calculation in \cref{app:relations} shows that every algebraic 3-UM is spectral and therefore attains $\lambda_{3,n}$.
\end{proof}

\section{Primal attainment and conditional moments}
\label{app:primal}

This appendix proves \cref{thm:maximal-eigenspace-parent}, following the strategy introduced for MUBs in~\cite{DSFB19}.
The first subsection isolates the maximal-eigenspace criterion itself.
The second explains under which assumptions $k$-UMs verify the identity required in \cref{eq:conditional-projector}.
This makes use of so-called \emph{conditional moments}, that is, averages $n^{1-k}\sum_{\vj}\delta_{j_x,a}S_{\vj}^m$ for some degree $m$.

\subsection{Maximal-eigenspace parent criterion}

Let $\mathsf P=(P_{a|x})_{a\in[n],x\in[k]}$ be a projective tuple, set $S_{\vj}=\sum_xP_{j_x|x}$, and let $\lambda=\max_{\vj\in[n]^k}\|S_{\vj}\|_\infty$.
Let $(G_{\vj})_{\vj\in[n]^k}$ satisfy the constraints in \cref{eq:eta} at visibility $\eta$.
Then
\begin{equation}
 \eta k\dim\cH\leq\sum_{x,a}\Tr\left(P_{a|x}\sum_{\vj}\delta_{j_x,a}G_{\vj}\right)
 =\sum_{\vj}\Tr(S_{\vj}\,G_{\vj})
 \leq\lambda\sum_{\vj}\Tr G_{\vj}=\lambda\dim\cH.
\end{equation}
At equality, every $G_{\vj}$ is supported on the maximal eigenspace of $S_{\vj}$ and every marginal slack is orthogonal to the corresponding outcome projector.

Now let $\mathsf P$ satisfy the hypotheses of \cref{thm:maximal-eigenspace-parent}.
Summing that identity over $a$ gives $n^{1-k}\sum_{\vj}\Pi_{\vj}=\id$, so the operators in \cref{eq:maximal-eigenspace-parent} form a POVM.
For each $a,x$, the same identity gives a marginal larger than $(\lambda_{k,n}/k)P_{a|x}$.
The parent is therefore feasible at visibility $\lambda_{k,n}/k$, while \cref{eq:selected-upper} gives the converse inequality.
This proves \cref{thm:maximal-eigenspace-parent}.

\subsection{Conditional-moment reductions}

For a spectral $k$-UM of common outcome rank $r$, let $p_{k,n}$ be the Lagrange polynomial equal to one at $\lambda_{k,n}$ and zero at the other distinct zeros of $\mu_{k,n}$.
Then $\Pi_{\vj}=p_{k,n}(S_{\vj})$ and $\deg p_{k,n}=\min\{k,n-1\}$; the equivalent power-limit representation of the same maximal-eigenvalue projector was used in~\cite{DSFB19}.
Assume that, for each ${0\leq m\leq\min\{k,n-1\}}$,
\begin{equation}
 \frac1{n^{k-1}}\sum_{\vj}\delta_{j_x,a}S_{\vj}^m=\alpha_mP_{a|x}+\beta_m\frac{\id-P_{a|x}}{n-1},
\end{equation}
with coefficients independent of $a,x$.
Applying $p_{k,n}$ gives the same affine form for the conditional average of $\Pi_{\vj}$.
If its coefficients are $\alpha,\beta$, summing over $a$ gives $(\alpha+\beta)\id$.
The left-hand side has trace $nr=\dim\cH$, because every $\Pi_{\vj}$ has rank $r$, and hence $\alpha+\beta=1$.
Furthermore, $n^k\lambda_{k,n}r=\sum_{\vj}\Tr(S_{\vj}\,\Pi_{\vj})=kn^k\alpha r$, so $\alpha=\lambda_{k,n}/k$ and $\beta=1-\lambda_{k,n}/k$.
Thus the conditional-moment hypothesis implies \cref{eq:conditional-projector}.
In particular, the identity holds in either of the following low-degree regimes:
\begin{enumerate}
 \item $\min\{k,n-1\}\leq4$ and the tuple is pairwise MUM;
 \item $\min\{k,n-1\}\leq5$ and every sub-3-tuple is a 3-UM.
\end{enumerate}
We prove these statements by reducing the relevant words.
Fix $A=P_{a|x}$ and expand a conditional moment into words in the selected projections.
If a setting different from $x$ occurs exactly once, summing its outcome removes that letter by completeness.
Adjacent repetitions are removed by idempotence, while any subword $XYX$ with $X$ and $Y$ belonging to distinct settings reduces to $X/n$ by the MUM relation.

For a word of length at most four, either it involves at most two settings, in which case these reductions apply directly, or it involves at least three settings, in which case some setting different from $x$ occurs exactly once.
Iterating the preceding reductions leaves a scalar multiple of $A$ or $\id$:
\begin{gather}
  \frac1{n^2}\sum_{\vj}\delta_{j_x,a}S_{\vj}=A+\frac2n\id,
  \qquad
  \frac1{n^2}\sum_{\vj}\delta_{j_x,a}S_{\vj}^2=\left(1+\frac4n\right)A+\left(\frac2n+\frac2{n^2}\right)\id,\\
  \frac1{n^2}\sum_{\vj}\delta_{j_x,a}S_{\vj}^3=\left(1+\frac{10}n+\frac6{n^2}\right)A+\left(\frac2n+\frac8{n^2}\right)\id,
  \qquad
  \frac1{n^2}\sum_{\vj}\delta_{j_x,a}S_{\vj}^4=\left(1+\frac{18}n+\frac{32}{n^2}\right)A+\left(\frac2n+\frac{18}{n^2}+\frac{10}{n^3}\right)\id.\nonumber
\end{gather}
This proves the first low-degree regime.

At length five, a word not covered by the same argument must use exactly three settings with multiplicities $1,2,2$, with the fixed setting $x$ occurring once.
Write $B=P_{b|y}$ and $C=P_{c|z}$ for the selected projections from the other two settings.
After removing adjacent repetitions and subwords of the form $XYX$ with $X$ and $Y$ belonging to distinct settings, the only two surviving orientations are $BCABC$ and $CBACB$.
Applying \cref{eq:algebraic-three} to the triple $A,B,C$, with $B$ and $C$ fixed, gives
\begin{equation*}
 BCABC+CBACB=\frac{2}{n^2}(BC+CB)-\frac1n(BAC+CAB)
 \qquad\text{and}\qquad
 \frac1{n^2}\sum_{b,c}(BCABC+CBACB)=\frac4{n^4}\id-\frac2{n^3}A.
\end{equation*}
The first identity follows by multiplying the identities $B(CA+AC)B=2B/n^2$ and $C(BA+AB)C=2C/n^2$ by the remaining projection on the side needed to form the two length-five words, and then using $CBC=C/n$ and $BCB=B/n$.
The second identity follows by summing over $b,c$ and finally yields the fifth conditional moment:
\begin{equation}
 \frac1{n^2}\sum_{\vj}\delta_{j_x,a}S_{\vj}^5
 =\left(1+\frac{28}n+\frac{98}{n^2}+\frac{26}{n^3}\right)A
 +\left(\frac2n+\frac{32}{n^2}+\frac{52}{n^3}+\frac4{n^4}\right)\id.
\end{equation}
This proves the second low-degree regime.

At length six, repeated cycles such as $ABCABC$ survive these reductions, which is the reason why no general statement is claimed.

\section{Sum-of-squares hierarchy and numerical data}
\label{app:sos}

This appendix describes the universal hierarchy introduced in~\cite{Des26} and used in \cref{sec:universal}, restricting the presentation to triples of measurements.
The construction for arbitrary $k$ is identical, with words over $k$ settings in place of the three-letter words below.
Schematically, at level $t$ one searches for non-commutative sum-of-squares polynomials satisfying
\begin{equation}
 G_{\vj}\in\Sigma_t,
 \qquad\sum_{\vj}G_{\vj}=\id,
 \qquad\sum_{\vj}\delta_{j_x,a}G_{\vj}-\eta P_{a|x}\in\Sigma_t\quad\text{for every }a,x,
\end{equation}
where $\Sigma_t$ denotes the cone represented by Gram matrices on reduced words of length at most $t$.
The remainder of the appendix constructs these Gram representations and reduces them exactly by algebraic and symmetry arguments.
A Julia implementation, reproduction scripts, and instructions for solving the accompanying sparse-SDPA instances are available with this work~\cite{GitHub}.

\subsection{The hierarchy for triples}

Let $P=(P_a)_{a\in[n]}$, $Q=(Q_b)_{b\in[n]}$, and $R=(R_c)_{c\in[n]}$ be arbitrary projective measurements.
We eliminate their last outcomes through $P_n=\id-\sum_{a<n}P_a$, $Q_n=\id-\sum_{b<n}Q_b$, and $R_n=\id-\sum_{c<n}R_c$.
Products are then reduced using projectivity and orthogonality: two adjacent projectors from the same measurement collapse when their outcomes agree and vanish otherwise.

For three abstract projections $A,B,C$, let $\mathcal W_t$ be the reduced words of length at most $t$, where only adjacent repetitions are removed, and let $\mathbf w_t(A,B,C)$ be the column vector formed by these words.
There are $3\cdot2^t-2$ such words.
For example,
\begin{equation*}
 \resizebox{\linewidth}{!}{$\mathbf w_3(A,B,C)=(\id,A,B,C,AB,AC,BA,BC,CA,CB,ABA,ABC,ACA,ACB,BAB,BAC,BCA,BCB,CAB,CAC,CBA,CBC)^{\mathsf T}.$}
\end{equation*}
A positive semidefinite Gram matrix $X$ of size $(3\cdot2^t-2)\times(3\cdot2^t-2)$ defines
\begin{equation}
 g_X(A,B,C)=\mathbf w_t(A,B,C)^*X\mathbf w_t(A,B,C),
 \label{eq:sos-parent-Gram}
\end{equation}
a positive non-commutative polynomial of degree at most $2t$~\cite{Hel02,HM04,BKP16}.
Evaluating it on one outcome from each measurement gives candidate parent effects $G_{abc}=g_X(P_a,Q_b,R_c)$.

The normalisation condition $\sum_{a,b,c}G_{abc}=\id$ is imposed as a polynomial identity after reducing all words in the projective-measurement algebra.
For the marginals, one requires each residual $\sum_{b,c}G_{abc}-\eta P_a$, $\sum_{a,c}G_{abc}-\eta Q_b$, and $\sum_{a,b}G_{abc}-\eta R_c$ to admit a Gram representation by reduced concrete words of length at most $t$.
These sum-of-squares identities imply the corresponding operator inequalities in every finite-dimensional representation of the projector relations.
They therefore produce a parent measurement for every projective triple, independently of the Hilbert-space dimension.
This is the Gram, or dual, counterpart of the non-commutative moment relaxations used for quantum correlations and polynomial optimisation~\cite{NPA08,PNA10,TPBA24,BKP16,Des26}.

We denote the level-$t$ optimum by $\chi_{3,t}^{\mg}(n)$.
Padding the Gram matrices with zero rows and columns embeds level $t$ into level $t+1$, and hence
\begin{equation}
 \chi_{3,t}^{\mg}(n)\leq\chi_{3,t+1}^{\mg}(n)\leq\chi_3^{\mg}(n).
\end{equation}
The first inequality is exact, whereas the numerical values reported below are subject to floating-point errors in the computed symmetry decomposition and in the SDP solve.

\subsection{An exact reduction of the marginal word basis}
\label{app:sos-sieve}

A direct marginal certificate uses all reduced concrete words of length at most $t$ after eliminating one outcome of each measurement.
For $(n,t)=(4,3)$, there are nine explicit projector letters and, after the first letter, six choices belonging to a different measurement.
The resulting Gram vector therefore contains $1+9+9\cdot6+9\cdot6^2=388$ words.

This basis can be reduced without weakening the hierarchy.
Consider the residual for the first outcome of the first measurement, $\Delta=\sum_{b,c}G_{1bc}-\eta P_1$.
It belongs to the subalgebra generated by $P_1$ and all explicit outcomes of $Q$ and $R$.
Define a unital $*$-homomorphism $\rho$ on the projective-measurement algebra by
\begin{equation}
 \rho(P_1)=P_1,\qquad\rho(P_a)=0\quad(2\leq a<n),\qquad\rho(Q_b)=Q_b,\qquad\rho(R_c)=R_c.
\end{equation}
Since the eliminated projector is $P_n=\id-\sum_{a<n}P_a$, its image is $\rho(P_n)=\id-P_1$, so the projective-measurement relations are preserved.
Moreover, $\rho(\Delta)=\Delta$.
If $\Delta=\mathbf z^*Y\mathbf z$ with $Y\succeq0$ on the full word vector, then $\Delta=\rho(\mathbf z)^*Y\rho(\mathbf z)$, which is again a sum of squares and involves only the smaller subalgebra.
Thus it is sufficient to retain one projector from the selected measurement and every explicit projector from the other two measurements.

For four outcomes this leaves one $P$-letter, three $Q$-letters, and three $R$-letters.
Let $u_\ell$ count length-$\ell$ words ending in $P_1$ and let $s_\ell$ count those ending in a $Q$- or $R$-letter.
Then $u_1=1$, $s_1=6$, and $u_{\ell+1}=s_\ell$, $s_{\ell+1}=6u_\ell+3s_\ell$.
The numbers of words of lengths one, two, and three are consequently 7, $30$, and $132$, so the marginal Gram vector has
\begin{equation}
 1+7+30+132=170
\end{equation}
entries instead of $388$.
This reduction is an exact reduction, not a numerical sparsification.

\subsection{Symmetrisation and block diagonalisation}
\label{app:sos-symmetry}

Relabelling settings and outcomes preserves the objective and all polynomial constraints, so every feasible solution can be averaged without changing $\eta$.
For the parent polynomial, this symmetrisation replaces $g_X$ from \cref{eq:sos-parent-Gram} by
\begin{equation}
 g_X^{\mathrm{sym}}(A,B,C)
 =\frac16\sum_{\pi\in\fS_3}
 \mathbf w_t(A_{\pi(1)},A_{\pi(2)},A_{\pi(3)})^*X
 \mathbf w_t(A_{\pi(1)},A_{\pi(2)},A_{\pi(3)}),
\end{equation}
where $(A_1,A_2,A_3)=(A,B,C)$.
After one outcome has been eliminated, the normalisation equations retain the action of $(\fS_{n-1})^3\rtimes\fS_3$.
For the selected marginal, the relevant stabiliser is $\fS_{n-2}\times((\fS_{n-1})^2\rtimes\fS_2)$.
The coefficient identities therefore need to be imposed only on one representative of each word orbit.

The same group actions preserve the two Gram spaces.
Their commutant algebras are block diagonalised by the usual Wedderburn reduction for invariant semidefinite programmes~\cite{GP04,DPS07,Val09,BGSV12,IR22}.
Concretely, after an orthogonal change of basis an invariant Gram matrix has the form
\begin{equation}
 U^*XU=\bigoplus_\lambda\bigl(X_\lambda\otimes\id_{d_\lambda}\bigr)
 \qquad\text{and}\qquad
 X\succeq0\ \Longleftrightarrow\ X_\lambda\succeq0\ \text{for every }\lambda,
 \label{eq:sos-Wedderburn}
\end{equation}
so only the smaller multiplicity matrices $X_\lambda$ enter the SDP.
The implementation constructs the permutation representations on the word bases, diagonalises a generic self-adjoint element of each commutant, and aligns equivalent irreducible components through intertwining maps.
The latter identify the bases of equivalent irreducible subspaces, so that each copy carries the same representation matrices and the tensor-product form in \cref{eq:sos-Wedderburn} is explicit.
It checks orthogonality, the transformed generator actions, invariance of reconstructed matrices, and the dimension of the self-adjoint part of the commutant before accepting a decomposition.

For $(k,n,t)=(3,4,3)$, the $22\times22$ parent Gram matrix is reduced to positive semidefinite blocks of sizes 7, 5, and 3.
After the exact sieve above, the $170\times170$ marginal Gram matrix is reduced to blocks of sizes $20$, $14$, $12$, 7, 7, 2, and 1.
These reductions are essential for the high-precision computation.

\subsection{Implementation and numerical results}

The hierarchy is implemented in Julia and modelled through JuMP~\cite{BEKS17,LDG+23}.
The reduced SDP is exported in sparse SDPA format and read back before being solved with Mosek or Hypatia for extended precision~\cite{CKV22}.
The implementation records the Gram-block eigenvalues, polynomial residuals, symmetry-decomposition validation errors, and primal-dual consistency checks together with the objective value~\cite{GitHub}.

\cref{tab:sos-sizes-n4} illustrates the effect of the two reductions for four-outcome triples.
For the marginal Gram matrix, the first dimension uses all reduced concrete words after eliminating one outcome of each measurement, whereas the second uses the sieve from \cref{app:sos-sieve}.
The equation columns compare the full coefficient supports with the non-zero orbit representatives retained in the SDP.
The reduction becomes increasingly pronounced with the level.
For fixed $t$, the parent Gram space is independent of $n$, and the computed data show a rapid stabilisation of the reduced marginal blocks and coefficient equations as $n$ grows: at level two the largest block and both equation counts are already constant from $n=3$ onwards, and at level three they are constant from $n=4$ onwards.

\begin{table*}[t]
 \centering
 \setlength{\tabcolsep}{2.5pt}
 \begin{tabular}{c c c c c c c c c c c c c c c}
  \toprule
  & \multicolumn{8}{c}{Gram matrix} & \multicolumn{6}{c}{Equations} \\
  \cmidrule(lr){2-9}\cmidrule(lr){10-15}
  & \multicolumn{3}{c}{Parent} & \multicolumn{5}{c}{Marginal} & \multicolumn{3}{c}{Normalisation} & \multicolumn{3}{c}{Marginal} \\
  \cmidrule(lr){2-4}\cmidrule(lr){5-9}\cmidrule(lr){10-12}\cmidrule(lr){13-15}
  \multirow{2}{*}{~$t$~}
  & \multirow{2}{*}{$\abs{\mathcal W_t}$} & & largest & full & & sieved & & largest & \multirow{2}{*}{before} & & \multirow{2}{*}{after} & \multirow{2}{*}{before} & & \multirow{2}{*}{after} \\
  & & & block & dim. & & dim. & & block & & & & & & \\
  \midrule
  1 & 4  & $\rightarrow$ & 2  & 10   & $\rightarrow$ & 8   & $\rightarrow$ & 3  & 1           & $\rightarrow$ & 1     & 38         & $\rightarrow$ & 6    \\
  2 & 10 & $\rightarrow$ & 3  & 64   & $\rightarrow$ & 38  & $\rightarrow$ & 6  & 703         & $\rightarrow$ & 7     & 746        & $\rightarrow$ & 37   \\
  3 & 22 & $\rightarrow$ & 7  & 388  & $\rightarrow$ & 170 & $\rightarrow$ & 20 & 37\,693     & $\rightarrow$ & 102   & 14\,282    & $\rightarrow$ & 378  \\
  4 & 46 & $\rightarrow$ & 15 & 2332 & $\rightarrow$ & 746 & $\rightarrow$ & 74 & 2\,141\,695 & $\rightarrow$ & 2949~ & ~273\,050  & $\rightarrow$ & 5225 \\
  \bottomrule
 \end{tabular}
 \caption{
  Sizes of the level-$t$ SDP for triples of four-outcome measurements before and after the sieve from \cref{app:sos-sieve} and symmetry reduction sketched in \cref{app:sos-symmetry}.
  The post-symmetry equation counts are the non-zero representative equations appearing in the SDP.
 }
 \label{tab:sos-sizes-n4}
\end{table*}

We also provide in \cref{tab:hierarchy-numerical-tables} the detailed values giving rise to \cref{tab:best-hierarchy-values}, making the finite-level realisation of \cref{conj:hierarchy-limit} clear for levels one to three.
At level four, the values are given in \cref{tab:best-hierarchy-values} and all instances that we were able to solve reproduce the conjecture value $\lambda_{k,n}/k$.
The main obstacle for solving more of them is not the symmetrisation but the behaviour of the resulting SDP, which becomes increasingly ill-conditioned as $k$ and $n$ increase.

\begin{table*}[h]
 \centering
 \scalebox{0.75}{%
 \begin{minipage}{1.3\textwidth}
 \centering
 \subfloat[\large Level one: $\chi^\mg_{k,1}(n)$]{%
  \begin{tabular}{|c|ccccccc|}
  \hline
  \diagbox{$~k~$}{$~n~$} & 2            & 3         & 4         & 5         & 6         & 7         & 8         \\ \hline
  2                      & \cc~0.85355~ & ~0.76026~ & ~0.70572~ & ~0.67016~ & ~0.64514~ & ~0.62657~ & ~0.61224~ \\
  3                      & \cc0.78868   & 0.66667   & 0.59686   & 0.55166   & 0.51994   & 0.49640   & 0.47822   \\
  4                      & \cc0.75000   & 0.61436   & 0.53785   & 0.48860   & 0.45412   & 0.42857   & 0.40884   \\
  5                      & \cc0.72361   & 0.58010   & 0.50000   & 0.44868   & 0.41284   & 0.38631   & 0.36583   \\
  6                      & \cc0.70412   & 0.55556   & 0.47329   & 0.42078   & 0.38419   & 0.35714   & 0.33628   \\
  7                      & \cc0.68898   & 0.53690   & 0.45323   & 0.40000   & 0.36298   & 0.33564   & 0.31457   \\
  8                      & \cc0.67678   & 0.52213   & 0.43750   & 0.38381   & 0.34653   & 0.31903   & 0.29785   \\
  \hline
  \end{tabular}%
 }
 \hfill
 \subfloat[\large Level two: $\chi^\mg_{k,2}(n)$]{%
  \begin{tabular}{|c|ccccccc|}
  \hline
  \diagbox{$~k~$}{$~n~$} & 2            & 3            & 4            & 5            & 6            & 7            & 8            \\ \hline
  2                      & \cc~0.85355~ & \cc~0.78868~ & \cc~0.75000~ & \cc~0.72361~ & \cc~0.70412~ & \cc~0.68898~ & \cc~0.67678~ \\
  3                      & \cc0.78868   & 0.69552      & 0.64037      & 0.60323      & 0.57620      & 0.55548      & 0.53899      \\
  4                      & \cc0.75000   & 0.64136      & 0.57782      & 0.53548      & 0.50494      & 0.48171      & 0.46334      \\
  5                      & \cc0.72361   & 0.60518      & 0.53665      & 0.49135      & 0.45887      & 0.43429      & 0.41495      \\
  6                      & \cc0.70412   & 0.57893      & 0.50713      & 0.45996      & 0.42631      & 0.40093      & 0.38102      \\
  7                      & \cc0.68898   & 0.55882      & 0.48472      & 0.43629      & 0.40187      & 0.37600      & 0.35574      \\
  8                      & \cc0.67678   & 0.54280      & 0.46702      & 0.41770      & 0.38275      & 0.35655      & 0.33607      \\
  \hline
  \end{tabular}%
 }
 \par\medskip
 \subfloat[\large Level three: $\chi^\mg_{k,3}(n)$]{%
  \begin{tabular}{|c|ccccccc|}
  \hline
  \diagbox{$~k~$}{$~n~$} & 2            & 3            & 4            & 5            & 6            & 7            & 8            \\ \hline
  2                      & \cc~0.85355~ & \cc~0.78868~ & \cc~0.75000~ & \cc~0.72361~ & \cc~0.70412~ & \cc~0.68898~ & \cc~0.67678~ \\
  3                      & \cc0.78868   & \cc0.69888   & \cc0.64656   & 0.61135      & 0.58555      & 0.56561      & 0.54963      \\
  4                      & \cc0.75000   & \cc0.64656   & 0.58696      & 0.54696      & 0.51778      & 0.49535      & 0.47747      \\
  5                      & \cc0.72361   & \cc0.61140   & 0.54701      & 0.50397      & 0.47277      & {--}         & {--}         \\
  6                      & \cc0.70412   & \cc0.58574   & 0.51788      & 0.47283      & {--}         & {--}         & {--}         \\
  7                      & \cc0.68898   & {--}         & {--}         & {--}         & {--}         & {--}         & {--}         \\
  8                      & \cc0.67678   & {--}         & {--}         & {--}         & {--}         & {--}         & {--}         \\
  \hline
  \end{tabular}%
 }
 \hfill
 \subfloat[\large Conjectured value: $\lambda_{k,n}/k$]{%
  \begin{tabular}{|c|ccccccc|}
  \hline
  \diagbox{$~k~$}{$~n~$} & 2         & 3         & 4         & 5         & 6         & 7         & 8         \\ \hline
  2                      & ~0.85355~ & ~0.78868~ & ~0.75000~ & ~0.72361~ & ~0.70412~ & ~0.68898~ & ~0.67678~ \\
  3                      & 0.78868   & 0.69888   & 0.64656   & 0.61140   & 0.58574   & 0.56596   & 0.55014   \\
  4                      & 0.75000   & 0.64656   & 0.58719   & 0.54769   & 0.51908   & 0.49718   & 0.47973   \\
  5                      & 0.72361   & 0.61140   & 0.54769   & 0.50563   & 0.47534   & 0.45224   & 0.43392   \\
  6                      & 0.70412   & 0.58574   & 0.51908   & 0.47534   & 0.44397   & 0.42014   & 0.40129   \\
  7                      & 0.68898   & 0.56596   & 0.49718   & 0.45224   & 0.42014   & 0.39583   & 0.37665   \\
  8                      & 0.67678   & 0.55014   & 0.47973   & 0.43392   & 0.40129   & 0.37665   & 0.35724   \\
  \hline
  \end{tabular}%
 }
 \end{minipage}%
 }
 \caption{
  Numerical values of $\chi^\mg_{k,t}(n)$, together with the candidate values $\lambda_{k,n}/k$.
  Every hierarchy entry is a numerical estimate of the corresponding universal lower bound.
  A dash indicates that the corresponding instance was not computed, and a grey background that the numerical value matches the conjectured value at the displayed precision at that level.
 }
 \label{tab:hierarchy-numerical-tables}
\end{table*}

\end{document}